\renewcommand{\eqref}[1]{(\ref{#1})}
\newcommand{\vA}{{\boldsymbol A}}
\newcommand{\vx}{{\boldsymbol x}}
\newcommand{\vy}{{\boldsymbol y}}
\newcommand{\vz}{{\boldsymbol z}}
\newcommand{\va}{{\boldsymbol a}}
\newcommand{\vb}{{\boldsymbol b}}
\renewcommand{\top}{T}
\newtheorem{prop}{Proposition}[section]
\newtheorem{lemma}[prop]{Lemma}
\newtheorem{defi}{Definition}[section]
\newtheorem{theorem}[prop]{Theorem}
\newtheorem{remark}[prop]{Remark}
\date{}
\begin{document}
\bibliographystyle{plain}
\title{The Optimal Condition Number for ReLU Function}

\author{Yu Xia}
\thanks{ Yu Xia was supported by NSFC grant (12271133, U21A20426, 11901143) and the key project of Zhejiang Provincial Natural Science Foundation grant (LZ23A010002)}
\address{School of Mathematics, Hangzhou Normal University, Hangzhou 311121, China.}
\email{yxia@hznu.edu.cn}

\author{Haoyu Zhou}
\address{School of Mathematics, Hangzhou Normal University,
Hangzhou, 311121, China.}
\email{2020210222016@stu.hznu.edu.cn}

\maketitle

\begin{abstract}
ReLU is a widely used activation function in deep neural networks. This paper explores the stability properties of the ReLU map. For any weight matrix $\boldsymbol{A} \in \mathbb{R}^{m \times n}$ and bias vector $\vb \in \mathbb{R}^{m}$ at a given layer, we define the condition number $\beta_{\vA,\vb}$ as $\beta_{\vA,\vb} = \frac{\mathcal{U}_{\vA,\vb}}{\mathcal{L}_{\vA,\vb}}$, where $\mathcal{U}_{\vA,\vb}$
  and $\mathcal{L}_{\vA,\vb}$ are the upper and lower Lipschitz constants, respectively. We first demonstrate that for any given $\boldsymbol{A}$ and $\vb$, the condition number satisfies $\beta_{\vA,\vb} \geq \sqrt{2}$. Moreover, when the weights of the network at a given layer are initialized as random i.i.d. Gaussian variables and the bias term is set to zero, the condition number asymptotically approaches this lower bound. This theoretical finding suggests that Gaussian weight initialization is optimal for preserving distances in the context of random deep neural network weights.

\end{abstract}
%{\textbf{Keywords:} ReLU, Conditional Number,  Gaussian Weights}
\vspace{0.4cm}

%{\bf Keywords.}

	\section{Introduction}
	\subsection{Problem Setup}\label{sec: intro}
	Deep Neural Networks have achieved remarkable success in various domains, including signal processing, computer vision, and natural language processing \cite{DNN3, DNN2, DNN1}. Unlike traditional activation functions such as the sigmoid function and the hyperbolic tangent function, the Rectified Linear Unit (ReLU) activation function is defined as follows:
	\[
	\sigma(x): = \max(0, x),
	\]
	which effectively mitigates the gradient vanishing problem in deep architectures.  However, the successive application of linear operators and ReLU functions can potentially distort distance metrics across the layers of the network. This raises a fundamental question:  
	 {\em  To what extent are the metric properties preserved across different layers of the network?}

Specifically, for the $i$-th layer, we denote the transformation as $T_i: \mathbb{R}^{n_i} \rightarrow \mathbb{R}^{n_{i+1}}$. Let $c_i$  and $C_i$ be non-negative constants such that, for any two points $\vx, \vy \in \mathbb{R}^{n_i}$,
\[
c_i\cdot \|\vx - \vy\|_2 \leq \|T_i(\vx) - T_i(\vy)\|_2 \leq C_i\cdot \|\vx - \vy\|_2.
\]
The closer the ratio $\frac{C_i}{c_i}$ is to 1, the more stable the $i$-th layer is.

The bi-Lipschitz constants under the ReLU map with matrix $\vA\in \mathbb{R}^{m\times n}$ and vector $\vb \in \mathbb{R}^{n}$ are characterized by two non-negative constants, $\mathcal{L}$ and $\mathcal{U}$. These constants are defined such that for any $\boldsymbol{x},\boldsymbol{y}\in \mathbb{R}^n$, the following inequality holds:
\begin{equation}
\label{original_lipschitz1}
\mathcal{L}\cdot \|\boldsymbol{x}-\boldsymbol{y}\|_2\leq \frac{1}{\sqrt{m}}\cdot \|\sigma(\boldsymbol{A}\boldsymbol{x}+\vb)-\sigma(\boldsymbol{A}\boldsymbol{y}+\vb)\|_2\leq \mathcal{U}\cdot \|\boldsymbol{x}-\boldsymbol{y}\|_2.
\end{equation}
Here $\boldsymbol{A}\in \mathbb{R}^{m\times n}$ and $\vb\in \mathbb{R}^{n}$ can be considered as  the weight matrix and the bias term, respectively, at a given layer of the network. The dimensions $m$ and $n$ correspond to the input and output dimensions of the corresponding layer. The factor $\frac{1}{\sqrt{m}}$ serves to normalize the output.
We denote the optimal (i.e., greatest possible) lower bound $\mathcal{L}$ as $\mathcal{L}_{\vA,\vb}$ and the optimal (i.e., smallest possible) upper bound $\mathcal{U}$ as $\mathcal{U}_{\vA,\vb}$. These optimal constants are formally defined as:
\begin{equation}\label{lip_constant1}
\mathcal{L}_{\vA,\vb}:=\inf_{\boldsymbol{x},\boldsymbol{y}\in \mathbb{R}^{n}, \boldsymbol{x}\neq \boldsymbol{y}}\frac{1}{\sqrt{m}}\cdot \frac{\|\sigma(\boldsymbol{A}\boldsymbol{x}+\vb)-\sigma(\boldsymbol{A}\boldsymbol{y}+\vb)\|_2}{\|\boldsymbol{x}-\boldsymbol{y}\|_2},
\end{equation}
and 
\begin{equation}\label{lip_constant2}
  \mathcal{U}_{\vA,\vb}:=\sup_{\boldsymbol{x},\boldsymbol{y}\in \mathbb{R}^{n}, \boldsymbol{x}\neq \boldsymbol{y}}\frac{1}{\sqrt{m}}\cdot \frac{\|\sigma(\boldsymbol{A}\boldsymbol{x}+\vb)-\sigma(\boldsymbol{A}\boldsymbol{y}+\vb)\|_2}{\|\boldsymbol{x}-\boldsymbol{y}\|_2}.
\end{equation}

In this paper, we focus on investigating the condition number $\beta_{\boldsymbol{A},\boldsymbol{b}}$, which is formally defined as:
\begin{equation}\label{beta_def}
\beta_{\vA,\vb}:=\frac{\mathcal{U}_{\vA,\vb}}{\mathcal{L}_{\vA,\vb}}.
\end{equation}

It serves as a quantitative measure for assessing the stability and isometric properties of specific neural network layers. Notably, $\beta_{\vA,\vb}$ is invariant under positive scaling, meaning that $\beta_{\vA,\vb} = \beta_{c\vA, c\vb}$ for any positive constant $c \in \mathbb{R}$.

In this paper, we address two  questions:
 \begin{enumerate}[(i)]
  \item For any fixed weight matrix $\vA \in \mathbb{R}^{m \times n}$ and bias term $\vb \in \mathbb{R}^{m}$, what is the lower bound of the condition number $\beta_{\vA,\vb}$? 
  \item Do there exist specific $\vA$ and $\vb$ (particularly random generalizations) that achieve the optimal (i.e. smallest) condition number bound? 
  \end{enumerate}

\subsection{Related Works}
Several historical works have examined the bi-Lipschitz constants $\mathcal{U}_{\vA,\vb}$  and $\mathcal{L}_{\vA,\vb}$. {Since the ReLU function is a non-expansive mapping, it follows that for any fixed $\vA$ and $\vb$, we have 
\[
\mathcal{U}_{\vA,\vb} \leq \lambda_{\max}(\vA),
\]
 where $\lambda_{\max}(\vA)$ denotes the largest singular value  of $\vA$ \cite{lip_upper}. 
 
 Determining $\mathcal{L}_{\vA,\vb}$
  is more challenging. When setting the bias term as $\vb = \boldsymbol{0}$, it has been demonstrated in \cite[Theorem 3]{low_lip} that, if the function $f:\mathbb{R}^{n}\rightarrow \mathbb{R}^{m}$ such that $f(\vx)=\sigma(\vA\vx)$ is injective, then 
  \begin{equation}
  \label{eqn: L_a_temp}
  \mathcal{L}_{\vA,\boldsymbol{0}} \geq \sqrt{\frac{\lambda(\vA)}{2m}},
  \end{equation}
 where 
 \[
 \lambda(\vA):=\min_{\vx\in \mathbb{R}^{n}}\lambda_{\min}(\vA_{S(\vx,\vA)}).
 \]
  Here, $\lambda_{\min}(\cdot)$ denotes the smallest singular value, and $\vA_{S(\vx,\vA)}$ represents the submatrix of $\vA$ comprising the row vectors of $\vA$ indexed by the set 
  \[
  S(\vx,\vA):=\{j\in \mathbb{Z}\ :\ \langle \va_{j},\vx\rangle\geq 0\ \text{and}\ 1\leq j\leq m\},
  \] where $\va_{j}\in \mathbb{R}^{n}$ is the $j$-th row vector of $\vA$. 
    In \cite[Theorem 1]{Freeman}, the lower bound in (\ref{eqn: L_a_temp}) was refined to 
  \[
  \frac{1}{2}\sqrt{\lambda(\vA)} \leq \mathcal{L}_{\vA,\boldsymbol{0}} \leq \sqrt{\lambda(\vA)}.
  \]
   Consequently, the upper bound on the condition number $\beta_{\vA,\boldsymbol{0}}$ can be expressed as 
   \[
   \beta_{\vA,\boldsymbol{0}} \leq 2\sqrt{\frac{\lambda_{\max}(\vA)}{\lambda(\vA)}}.
   \]
    However, the preceding discussion does not provide a lower bound for $\beta_{\vA,\boldsymbol{0}}$, aside from the trivial lower bound $\beta_{\vA,\vb} \geq 1$.}

Furthermore, random initialization of weights is often preferred over sampling parameters from pre-trained networks \cite{random1, lip3, random2,lip2,DNN1}. Research indicates that the reconstruction error of randomly initialized neural networks yields comparable F1 scores \cite{Random}. However, unlike linear random projections \cite{CS1, CS2}, the theoretical analysis of the distance-preserving properties in neural networks becomes more complicated due to the inherent nonlinearities of the ReLU function \cite{lip3,candes}.

In the context of random weights, \cite[Theorem 2]{low_lip} establishes the injectivity for Gaussian weights. Let $\vA\in \mathbb{R}^{m\times n}$ be a standard Gaussian random matrix, and define $\mathcal{I}(m,n):=\mathbb{P}\{\vx\mapsto \text{ReLU}(\vA\vx)\text{ is injective}\}$. Then, as $n\rightarrow \infty$,
\[
\lim_{n\rightarrow \infty}\mathcal{I}(m,n)=1,\quad \text{if}\ \frac{m}{n}\geq 10.5\qquad \text{and}\qquad \lim_{n\rightarrow \infty}\mathcal{I}(m,n)=0,\quad\text{if}\ \frac{m}{n}\leq 3.3.
\] 
Nevertheless, the theoretical results do not yield precise bounds on the bi-Lipschitz constants for the ReLU mapping under Gaussian random weight matrices. 

The studies conducted in \cite{ref1,ref2}  demonstrate that for $\boldsymbol{A}\in \mathbb{R}^{m\times n}$ with elements 
  drawn from a standard Gaussian distribution, the following property holds with high probability for any fixed $\delta > 0$:
\begin{equation}\label{modified_lipschitz}
\frac{1}{4}\|\boldsymbol{x}-\boldsymbol{y}\|_2^2-\delta \leq \|\sigma(\boldsymbol{A}\boldsymbol{x})-\sigma(\boldsymbol{A}\boldsymbol{y})\|_2^2\leq \frac{1}{2}\|\boldsymbol{x}-\boldsymbol{y}\|_2^2+\delta,
\end{equation}
for all vectors $\boldsymbol{x}, \boldsymbol{y} \in \mathbb{R}^n$, provided that $m \geq C_{\delta} \cdot n$, where $C_{\delta}$ is a positive constant dependent on $\delta$. 

Numerous studies have indicated that under the conditions described in (\ref{modified_lipschitz}), distance can be preserved in a network with respect to a specific layer, as noted in \cite{lip3,wrong_lip}. However, it is important to recognize that the result in (\ref{modified_lipschitz}) does not fully capture the bi-Lipschitz property for positive and finite $\mathcal{L}$ and $\mathcal{U}$ as described in (\ref{original_lipschitz1}), where $\vb=\boldsymbol{0}$. To illustrate this limitation, consider the case where $\boldsymbol{x} \neq \boldsymbol{y}$. In this scenario, (\ref{modified_lipschitz}) leads to the following formulation:
\begin{equation}\label{wrong_lipschitz}
\frac{\frac{1}{4}\|\boldsymbol{x}-\boldsymbol{y}\|_2^2-\delta}{\|\boldsymbol{x}-\boldsymbol{y}\|_2^2} \leq \frac{\|\sigma(\boldsymbol{A}\boldsymbol{x})-\sigma(\boldsymbol{A}\boldsymbol{y})\|_2^2}{\|\boldsymbol{x}-\boldsymbol{y}\|_2^2}\leq \frac{\frac{1}{2}\|\boldsymbol{x}-\boldsymbol{y}\|_2^2+\delta}{\|\boldsymbol{x}-\boldsymbol{y}\|_2^2}.
\end{equation}
This formulation reveals a critical issue: when the parameter $\delta$, the input dimension $n$, and the output dimension $m$ are specified, as $\boldsymbol{y}$ approaches $\boldsymbol{x}$, the lower and upper bounds of (\ref{wrong_lipschitz}) diverge to negative and positive infinity, respectively. Given that positive and finite estimates for $\mathcal{L}$ and $\mathcal{U}$ cannot be derived, it follows that a bound on the condition number $\beta_{\vA,\vb}$ based on (\ref{modified_lipschitz}) is unattainable. Therefore, we must employ more delicate tools to refine the results presented in (\ref{modified_lipschitz}).

\subsection{Our Contributions}

Our investigation addresses the questions posed in Section \ref{sec: intro}.

Regarding Question (i), we demonstrate in Theorem \ref{thm: general_one} that for any arbitrary matrix $\boldsymbol{A} \in \mathbb{R}^{m \times n}$  and vector $\vb \in \mathbb{R}^{m}$, the condition number $\beta_{\vA,\vb}$ is bounded below by $\sqrt{2}$. Formally, we have:
\begin{equation}
\label{eqn: result1}
\beta_{\vA,\vb}\geq \beta_{\vA,\boldsymbol{0}}\geq\sqrt{2},
\end{equation}
   for any $\vb \in \mathbb{R}^{m}$.
This proven lower bound of $\sqrt{2}$
  for the condition number, provides a nontrivial lower bound that answers Question (i). It establishes a fundamental limit on the stability of any single layer in a ReLU network and serves as a benchmark against which various weight initialization strategies can be evaluated.

Turning to Question (ii), Theorem \ref{thm: Gaussian_bilip} examines the scenario where $\boldsymbol{A} \in \mathbb{R}^{m \times n}$
  is a standard Gaussian random matrix, aligning with a variant of the He initialization \cite{He}. Based on the result in (\ref{eqn: result1}), we only consider the case when $\vb=\boldsymbol{0}$. Our analysis reveals that for any fixed $\delta \in (0, 4/5)$, with high probability, the following holds:
 \[
\beta_{\vA,\boldsymbol{0}}\leq \sqrt{2}+\delta,
 \]
provided that $m \gtrsim C_{\delta} \cdot n$, where $C_{\delta}$ is a constant dependent on a positive parameter. Consequently, for a standard Gaussian random matrix, $\beta_{\vA,\boldsymbol{0}}$ asymptotically converges to the lower bound $\sqrt{2}$ as $m$ approaches infinity. This result formally establishes the distance-preserving property for Gaussian random weights in the context of ReLU neural networks, suggesting that random weights are not merely a convenient starting point, but may indeed be an optimal choice for maintaining stable signal propagation through the network layers.

In addition to our main theorems, we have obtained several byproducts of independent interest in Section \ref{sec: gaussian_byproduct}. To derive the results in Theorem \ref{thm: Gaussian_bilip}, we established a more general result applicable to any arbitrary cone $S \subseteq \mathbb{R}^{n}$, as presented in Theorem \ref{thm: lipschitz_result}. While Theorem  \ref{thm: lipschitz_result} shares certain similarities with the result presented in \cite[Theorem 3]{ref1}, our work significantly enhances the sampling complexity. Specifically, we have reduced the dependency from $\omega^4(\cdot)$ to $\omega^2(\cdot)$, where $\omega(\cdot)$ denotes the Gaussian width as defined in Definition \ref{eqn: gaussian_width}. More details can be seen in Remark \ref{rem: key_rem}. 

Besides,  Theorem  \ref{thm: lipschitz_result} builds upon the findings in Theorem \ref{thm: large} and Theorem \ref{thm: small}, which separately address cases where the distance between $\vx \in \mathbb{R}^{n}$ and $\vy \in \mathbb{R}^{n}$ is small or large, respectively. Furthermore, Theorem \ref{thm: small}, which focuses on scenarios where the distance between $\vx$ and $\vy$ is small, shows that the output distance $\frac{1}{\sqrt{m}}\|\sigma(\vA\vx)-\sigma(\vA\vy)\|^2_2$ is around $\frac{1}{2}\|\vx-\vy\|_2^2$. It also merits independent consideration, as the techniques employed in this theorem differ substantially from those in Theorem \ref{thm: large}, playing a crucial role in achieving effective condition number estimation.

These theoretical results thus bridge the gap between empirical practices in deep learning and their mathematical foundations, providing a solid theoretical justification for the use of random weights in neural network initialization.
\subsection{Notations and Definitions}
	For convenience, take $\mathbb{S}^{n-1}$ and $\mathbb{B}^{n}$ as unit sphere and unit ball in $n$-dimensional space:
		 \begin{equation}\label{eqn: S_n and B_n}
		\mathbb{S}^{n-1}:=\{\boldsymbol{z}\in \mathbb{R}^n\ :\ \|\boldsymbol{z}\|_2= 1\},\quad \text{and}\quad \mathbb{B}^n:=\{\boldsymbol{z}\in \mathbb{R}^n\ :\ \|\boldsymbol{z}\|_2\leq 1\}.
		 \end{equation}
		For a specific set $S\subseteq \mathbb{R}^{n}$, we define the difference set $S-S$ as: 
	\begin{equation}
\label{eqn: S-S}
S-S:=\{\vx-\vy\ :\ \vx\in S\ \text{and}\ \vy\in S\}. 
\end{equation}	
Additionally, we denote the function $\phi(\vx,\vy):\mathbb{R}^{n}\times \mathbb{R}^{n}\rightarrow \mathbb{R}$ as: 
		\begin{equation}
		\label{eqn: phi}
		\phi(\vx,\vy):=\frac{\left(\sin \theta_{\vx,\vy}-\theta_{\vx,\vy}\cdot \cos \theta_{\vx,\vy}\right)}{\pi}\cdot \frac{\|\vx\|_2\cdot \|\vy\|_2}{\|\vx-\vy\|_2^2},
		\end{equation}
		where 
		\begin{equation}\label{eqn: theta}
		\theta_{\vx,\vy}:=\cos^{-1}\left(\frac{\langle \vx,\vy\rangle}{\|\vx\|_2\cdot \|\vy\|_2}\right)\in [0,\pi].
		\end{equation}
		Here we adopt the convention that  $\phi(\vx,\vy):=0$ if $\vx=\boldsymbol{0}$ or $\vy=\boldsymbol{0}$. 
	
		It is worth noting that throughout the subsequent analysis, we employ the notation $c_A$  and $C_A$ to represent constants dependent on the parameter $A$. These constants should be understood as potentially distinct across various theorems. Moreover, we employ the following notational conventions: $A\lesssim B$ denotes that there exists an absolute positive constant $C$ such that $A\leq CB$. Similarly, $A\gtrsim B$ indicates that there exists an absolute positive constant $C$ such that $A\geq CB$.
		
\subsection{Orgainizations}
The structure of this paper is as follows: Section \ref{sec: main_result} presents the main theorems that address the questions raised in Section \ref{sec: intro}. In Theorem \ref{thm: general_one}, we establish the lower bound of the condition number $\beta_{\vA,\vb}$  for any $\vA\in \mathbb{R}^{m\times n}$ and $\vb\in \mathbb{R}^{m}$. In Theorem \ref{thm: Gaussian_bilip}, we demonstrate that the lower bound can be asymptotically achieved by selecting $\vA$ as a Gaussian weight matrix and setting $\vb=\boldsymbol{0}$. A more general result is presented in Theorem \ref{thm: lipschitz_result} in Section \ref{sec: gaussian_byproduct}, which can also be considered as a byproduct of independent interest. In this theorem, we establish the bi-Lipschitz constant for Gaussian random matrices for any arbitrary cone $S\subseteq\mathbb{R}^{n}$. This result heavily relies on Theorem \ref{thm: large} and Theorem \ref{thm: small} in Section \ref{sec: real_lipschitz}, which address the cases of large and small distances between $\vx,\vy\in S$, respectively. The proofs of Theorem \ref{thm: large} and Theorem \ref{thm: small} can be found in Section \ref{prof_large_sec} and Section \ref{proof of small}, respectively.

\section{Main Results}\label{sec: main_result}
We begin by establishing a universal lower bound for the condition number $\beta_{\boldsymbol{A},\vb}$ associated with arbitrary matrice $\boldsymbol{A}\in \mathbb{R}^{m\times n}$ and vector $\vb\in \mathbb{R}^{m}$. This result serves as a benchmark for subsequent analyses of random matrices.
\begin{theorem}\label{thm: general_one} 
For any matrix $\boldsymbol{A}\in \mathbb{R}^{m\times n}$ and vector $\vb\in \mathbb{R}^{m}$, the condition number $ \beta_{\boldsymbol{A},\vb}$, as formulated in (\ref{beta_def}), satisfies the following inequalities:
\begin{equation}
 \beta_{\vA,\vb}\geq \beta_{\vA,\boldsymbol{0}}\geq \sqrt{2}. 
 \end{equation} 
 \end{theorem}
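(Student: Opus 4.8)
The plan is to prove the two inequalities $\beta_{\vA,\vb}\ge\beta_{\vA,\boldsymbol 0}$ and $\beta_{\vA,\boldsymbol 0}\ge\sqrt2$ separately. Throughout one may assume $\vA\ne\boldsymbol 0$ (otherwise the ReLU map is constant and $\mathcal U_{\vA,\vb}=\mathcal L_{\vA,\vb}=0$, so no $\beta$ is defined) and that the lower Lipschitz constant under consideration is positive (if it vanishes the corresponding $\beta$ is $+\infty$ and the claim is trivial); the monotonicity established in Step 1 shows these degenerate situations are consistent.

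\textbf{Step 1: $\beta_{\vA,\vb}\ge\beta_{\vA,\boldsymbol 0}$.} The idea is that dilating the inputs washes out the bias. Fix any pair $\vx\ne\vy$ and consider the dilated pair $(t\vx,t\vy)$ for $t>0$. By positive homogeneity of $\sigma$ one has $\tfrac1t\,\sigma(t\vz+\vb)=\sigma(\vz+\vb/t)$, which converges to $\sigma(\vz)$ as $t\to\infty$ by continuity of $\sigma$; hence
\[
\frac{1}{\sqrt m}\cdot\frac{\|\sigma(\vA(t\vx)+\vb)-\sigma(\vA(t\vy)+\vb)\|_2}{\|t\vx-t\vy\|_2}\ \longrightarrow\ \frac{1}{\sqrt m}\cdot\frac{\|\sigma(\vA\vx)-\sigma(\vA\vy)\|_2}{\|\vx-\vy\|_2}\qquad(t\to\infty).
\]
Therefore, over the one-parameter family $\{(t\vx,t\vy):t>0\}$, the supremum of the left-hand quotient is at least, and its infimum is at most, the bias-free quotient at $(\vx,\vy)$. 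Taking the supremum (respectively the infimum) over all pairs $\vx\ne\vy$ yields $\mathcal U_{\vA,\vb}\ge\mathcal U_{\vA,\boldsymbol 0}$ and $\mathcal L_{\vA,\vb}\le\mathcal L_{\vA,\boldsymbol 0}$, and dividing these gives $\beta_{\vA,\vb}\ge\beta_{\vA,\boldsymbol 0}$.

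\textbf{Step 2: $\beta_{\vA,\boldsymbol 0}\ge\sqrt2$.} Here the point is to probe the bias-free map on the symmetric triple $\{\boldsymbol 0,\vu,-\vu\}$ for a unit vector $\vu$ with $\vA\vu\ne\boldsymbol 0$. Put $p:=\tfrac1{\sqrt m}\|\sigma(\vA\vu)\|_2$, $q:=\tfrac1{\sqrt m}\|\sigma(-\vA\vu)\|_2$ and $r:=\tfrac12\sqrt{p^2+q^2}$. Applying the scalar identities $\sigma(s)-\sigma(-s)=s$ and $\sigma(s)^2+\sigma(-s)^2=s^2$ coordinatewise to $\vA\vu$ gives $\sigma(\vA\vu)-\sigma(-\vA\vu)=\vA\vu$ and $p^2+q^2=\tfrac1m\|\vA\vu\|_2^2$, so evaluating the difference quotient in \eqref{lip_constant1}--\eqref{lip_constant2} at the pairs $(\vu,\boldsymbol 0)$, $(-\vu,\boldsymbol 0)$ and $(\vu,-\vu)$ produces the values $p$, $q$ and $r$, respectively. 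Hence $\mathcal U_{\vA,\boldsymbol 0}\ge\max\{p,q,r\}$ and $\mathcal L_{\vA,\boldsymbol 0}\le\min\{p,q,r\}$, so $\beta_{\vA,\boldsymbol 0}\ge\max\{p,q,r\}/\min\{p,q,r\}$. It then remains to check the elementary inequality: if $p,q\ge 0$ are not both zero and $r=\tfrac12\sqrt{p^2+q^2}$, then $\max\{p,q,r\}/\min\{p,q,r\}\ge\sqrt2$. Assuming WLOG $p\ge q$, one has $r\le p/\sqrt2<p$, so $\max\{p,q,r\}=p$; and either $q\le r$, which forces $3q^2\le p^2$ and hence a ratio $\ge\sqrt3$, or $q>r$, which gives a ratio equal to $2p/\sqrt{p^2+q^2}\ge 2p/(\sqrt2\,p)=\sqrt2$. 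Combining Steps 1 and 2 proves the theorem.

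\textbf{Main obstacle.} The argument is essentially elementary; the two points that need care are (i) the limiting step that removes the bias — one must justify $\tfrac1t\,\sigma(t\vz+\vb)\to\sigma(\vz)$ and handle the degenerate cases $\vA=\boldsymbol 0$ and $\mathcal L=0$ — and (ii) the choice of the symmetric test configuration $\{\boldsymbol 0,\vu,-\vu\}$, which is exactly what makes the ReLU identities $\sigma(s)-\sigma(-s)=s$ and $\sigma(s)^2+\sigma(-s)^2=s^2$ available and is responsible for the sharp constant $\sqrt2$, with equality approached along directions for which $\|\sigma(\vA\vu)\|_2=\|\sigma(-\vA\vu)\|_2$.
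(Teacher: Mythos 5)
Your proof is correct and uses essentially the same two ideas as the paper: the scaling limit $t\to\infty$ to show $\beta_{\vA,\vb}\ge\beta_{\vA,\boldsymbol 0}$, and probing the bias-free map with an antipodal pair $(\vu,-\vu)$ together with the identity $\|\sigma(\vA\vu)-\sigma(-\vA\vu)\|_2^2=\|\sigma(\vA\vu)\|_2^2+\|\sigma(-\vA\vu)\|_2^2$. The only cosmetic difference is that the paper picks the minimizer $\vx_2$ of $\|\sigma(\vA\vx)\|_2$ over the sphere and bounds both summands by the maximum, whereas you take an arbitrary $\vu$ with $\vA\vu\ne\boldsymbol 0$ and run a short case analysis on $p,q,r$; both routes hinge on the same orthogonality fact.
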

  \begin{proof}
  For any $\vx,\vy\in \mathbb{R}^{n}$ and $\vb\in \mathbb{R}^{m}$, through direct computation, we obtain:
	  \[
		\begin{split}
		&\lim_{\alpha\rightarrow +\infty}\frac{1}{\sqrt{m}}\cdot \frac{\|\sigma(\vA(\alpha \vx)+\vb)-\sigma(\vA(\alpha\vy)+\vb)\|_2}{\|\alpha\vx-\alpha\vy\|_2}\\
		=&\lim_{\alpha\rightarrow +\infty}\frac{1}{\sqrt{m}}\cdot \frac{\|\sigma(\vA \vx+\frac{1}{\alpha}\vb)-\sigma(\vA\vy+\frac{1}{\alpha}\vb)\|_2}{\|\vx-\vy\|_2}\\
		=&\frac{1}{\sqrt{m}}\cdot \frac{\|\sigma(\vA \vx)-\sigma(\vA\vy)\|_2}{\|\vx-\vy\|_2}.
		\end{split}
		\]
		Consequently, we can deduce:
		\[
		\mathcal{U}_{\vA,\vb}\geq \mathcal{U}_{\vA,\boldsymbol{0}}\quad \text{and}\quad \mathcal{L}_{\vA,\vb}\leq \mathcal{L}_{\vA,\boldsymbol{0}},
		\]
		which implies:
		\begin{equation}
		\label{eqn: v1}
		\beta_{\vA,\vb}\geq \beta_{\vA,\boldsymbol{0}}. 
		\end{equation}

Moreover, take $\vx_1,\vx_2\in \mathbb{R}^{n}$ such that 
		\[
		\vx_1:=\underset{\vx\in \mathbb{S}^{n-1}}{\text{argmax}}\ \frac{1}{\sqrt{m}}\cdot \frac{\|\sigma(\vA\vx)\|_2}{\|\vx\|_2}\quad \text{and}\quad \vx_2:=\underset{\vx\in \mathbb{S}^{n-1}}{\text{argmin}}\ \frac{1}{\sqrt{m}}\cdot \frac{\|\sigma(\vA\vx)\|_2}{\|\vx\|_2}.
		\]
			Through straightforward calculations, we can establish that  
			\begin{equation}
			\label{eqn: temp_lower1}
			\mathcal{U}_{\vA,\boldsymbol{0}}\geq \frac{1}{\sqrt{m}}\cdot \frac{\|\sigma(\vA\vx_1)\|_2}{\|\vx_1\|_2}.
			\end{equation}
On the other hand,  we derive the following inequality:
\begin{equation}
\label{eqn: temp_lower2}
\begin{aligned}
(\mathcal{L}_{\vA,\boldsymbol{0}})^2\leq &\frac{1}{m}\cdot \frac{\|\sigma(\vA\vx_2)-\sigma(\vA(-\vx_2))\|_2^2}{\|\vx_2-(-\vx_2)\|_2^2}=\frac{1}{m}\cdot \frac{\|\sigma(\vA\vx_2)\|_2^2+\|\sigma(-\vA\vx_2)\|_2^2}{4\|\vx_2\|_2^2}\\
=&\frac{1}{m}\cdot \frac{\|\sigma(\vA\vx_2)\|_2^2}{4\|\vx_2\|_2^2}+\frac{1}{m}\cdot \frac{\|\sigma(\vA(-\vx_2))\|_2^2}{4\|-\vx_2\|_2^2}\\
\leq & \frac{1}{2m}\cdot \frac{\|\sigma(\vA\vx_1)\|_2^2}{\|\vx_1\|_2^2}.
\end{aligned}
\end{equation}
The second equality  is a direct consequence of the orthogonality property:
\[
\langle \sigma(\boldsymbol{A}\boldsymbol{x}_2), \sigma(-\boldsymbol{A}\boldsymbol{x}_2) \rangle=0.
\]	 
Based on (\ref{eqn: temp_lower1}) and (\ref{eqn: temp_lower2}), we can directly deduce that
		\begin{equation}
		\label{eqn: v2}
		\beta_{\vA,\boldsymbol{0}}=\frac{\mathcal{U}_{\vA,\boldsymbol{0}}}{\mathcal{L}_{\vA,\boldsymbol{0}}}\geq \sqrt{2}.
		\end{equation}
	Thus (\ref{eqn: v1}) and (\ref{eqn: v2}) complete the proof.

   \end{proof}
		
Given that $\beta_{\vA,\boldsymbol{0}} \leq \beta_{\vA,\vb}$  holds for any $\vb \in \mathbb{R}^{m}$, we now focus our attention on analyzing the condition number of standard Gaussian random matrices $\boldsymbol{A} \in \mathbb{R}^{m\times n}$  in conjunction with $\vb = \boldsymbol{0}$. Theorem \ref{thm: Gaussian_bilip} that follows is a direct consequence of Theorem \ref{thm: lipschitz_result}, obtained by setting $S=\mathbb{R}^{n}$. Given that Theorem \ref{thm: lipschitz_result} relies on geometric concepts such as Gaussian width, we have placed its detailed discussion in Section \ref{sec: gaussian_byproduct} for those interested in a deeper exploration of these mathematical tools.

In this section, however, we concentrate on the primary conclusion concerning the condition number of Gaussian random weights under the ReLU mapping. This result provides insights into the properties of randomly initialized neural networks, bridging the gap between theoretical analysis and practical applications.
 
		\begin{theorem}\label{thm: Gaussian_bilip}
		Assume that  $\boldsymbol{A}\in \mathbb{R}^{m\times n}$ is a standard Gaussian random matrix, for any fixed positive constant $\delta<4/5$, with probability at least $1-2\exp(-c_{\delta}\cdot m)$, it obtains that  
		\begin{equation}
		\label{eqn: beta_upper}
		\beta_{\vA,\boldsymbol{0}}\leq \sqrt{2}+\delta,
		\end{equation}
		provided that $m\geq C_{\delta} \cdot n$. Here $c_{\delta}$ and $C_{\delta}$ are positive constants depending on $\delta$. 
		\end{theorem}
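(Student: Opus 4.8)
The plan is to derive Theorem \ref{thm: Gaussian_bilip} as a corollary of the general cone result Theorem \ref{thm: lipschitz_result} by specializing $S = \mathbb{R}^n$. The guiding observation is that for a standard Gaussian matrix the quantity $\frac{1}{m}\|\sigma(\vA\vx)-\sigma(\vA\vy)\|_2^2$ concentrates around its expectation, which by a direct Gaussian computation equals
\[
\frac{1}{2}\|\vx-\vy\|_2^2 - 2\,\phi(\vx,\vy)\,\|\vx-\vy\|_2^2,
\]
where $\phi$ is the function defined in \eqref{eqn: phi} (this is the standard arc-cosine / half-space overlap identity for $\E[\sigma(g_1)\sigma(g_2)]$ with $(g_1,g_2)$ jointly Gaussian). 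The scalar $\phi(\vx,\vy)$ depends only on the angle $\theta_{\vx,\vy}$ and on the ratio $\|\vx\|_2\|\vy\|_2/\|\vx-\vy\|_2^2$; its range over all pairs $\vx,\vy$ controls how far the normalized output ratio can deviate from $\tfrac12$. First I would record the elementary fact that $\phi \ge 0$ always, and that $\phi$ can be made arbitrarily close to $\tfrac14$ (attained in the limit $\vy \to -\vx$, i.e. $\theta \to \pi$, giving the extreme antipodal configuration that already appeared in the proof of Theorem \ref{thm: general_one}), while $\phi \to 0$ when $\vy \to \vx$. Hence the expected normalized squared ratio ranges over $(\tfrac14\|\vx-\vy\|_2^2,\ \tfrac12\|\vx-\vy\|_2^2]$ roughly, so that $\mathcal{U}_{\vA,\vO}^2 \approx \tfrac12$ and $\mathcal{L}_{\vA,\vO}^2 \approx \tfrac14$, whence $\beta_{\vA,\vO} \approx \sqrt{2}$.

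Concretely, I would invoke Theorem \ref{thm: lipschitz_result} with $S = \mathbb{R}^n$: since $\mathbb{R}^n - \mathbb{R}^n = \mathbb{R}^n$ and the relevant Gaussian width is $\omega(\mathbb{S}^{n-1}) \asymp \sqrt{n}$, the theorem furnishes — with probability at least $1 - 2\exp(-c_\delta m)$ provided $m \ge C_\delta n$ — a uniform two-sided estimate of the form
\[
\Bigl(\tfrac14 - \tfrac{\delta'}{4}\Bigr)\|\vx-\vy\|_2^2 \ \le\ \frac{1}{m}\|\sigma(\vA\vx)-\sigma(\vA\vy)\|_2^2\ \le\ \Bigl(\tfrac12 + \tfrac{\delta'}{2}\Bigr)\|\vx-\vy\|_2^2
\]
for all $\vx,\vy \in \mathbb{R}^n$, for a suitable $\delta'$ chosen as a function of $\delta$. (The lower bound requires both that the empirical quantity stays near its mean and that the mean itself stays above roughly $\tfrac14$; the antipodal pairs show $\tfrac14$ is the right floor, and that is exactly the configuration singled out in Theorem \ref{thm: general_one}.) Taking square roots gives $\mathcal{U}_{\vA,\vO} \le \sqrt{\tfrac12 + \tfrac{\delta'}{2}}$ and $\mathcal{L}_{\vA,\vO} \ge \sqrt{\tfrac14 - \tfrac{\delta'}{4}}$, so
\[
\beta_{\vA,\vO} \ =\ \frac{\mathcal{U}_{\vA,\vO}}{\mathcal{L}_{\vA,\vO}}\ \le\ \sqrt{\frac{\tfrac12 + \tfrac{\delta'}{2}}{\tfrac14 - \tfrac{\delta'}{4}}}\ =\ \sqrt{2}\cdot\sqrt{\frac{1+\delta'}{1-\delta'}}.
\]

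It then remains to choose $\delta'$ small enough, as a function of $\delta$, so that $\sqrt{2}\sqrt{(1+\delta')/(1-\delta')} \le \sqrt{2} + \delta$; this is possible for every $\delta > 0$ since the right-hand side tends to $\sqrt 2$ as $\delta' \to 0$, and the constraint $\delta < 4/5$ is exactly what guarantees the lower Lipschitz bound $\tfrac14 - \tfrac{\delta'}{4}$ stays strictly positive for an admissible $\delta'$ (so that $\beta$ is well-defined and finite). Absorbing $\delta'$ into $c_\delta$ and $C_\delta$ then yields the stated probability and sample-complexity bounds. The main obstacle is not in this final bookkeeping but in ensuring the invocation of Theorem \ref{thm: lipschitz_result} is legitimate with $S = \mathbb{R}^n$ — in particular that the lower bound $\tfrac14\|\vx-\vy\|_2^2$ (rather than something smaller or $0$) genuinely holds uniformly, which is where the delicate small-distance analysis of Theorem \ref{thm: small} and the large-distance analysis of Theorem \ref{thm: large} are both needed; here I would simply cite those results as already established and check that the hypothesis $\delta < 4/5$ matches the admissible range in Theorem \ref{thm: lipschitz_result}.
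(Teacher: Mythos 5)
Your proposal follows essentially the same route as the paper: specialize Theorem \ref{thm: lipschitz_result} to $S=\mathbb{R}^n$, use the pointwise bound $0\le\phi(\vx,\vy)\le\tfrac14$ to absorb the $\phi$-term into the upper and lower Lipschitz estimates (yielding $\mathcal{U}^2_{\vA,\vO}\lesssim\tfrac12$, $\mathcal{L}^2_{\vA,\vO}\gtrsim\tfrac14$), and then tune the auxiliary $\delta'$ to obtain $\beta_{\vA,\vO}\le\sqrt2+\delta$. One small slip: you write the expectation as $\tfrac12\|\vx-\vy\|_2^2-2\phi(\vx,\vy)\|\vx-\vy\|_2^2$, but the correct identity (as in the paper) is $(\tfrac12-\phi(\vx,\vy))\|\vx-\vy\|_2^2$ without the extra factor of $2$; this is evidently a typo since the rest of your argument uses the correct range $(\tfrac14,\tfrac12]\cdot\|\vx-\vy\|_2^2$, which only follows from the unduplicated $\phi$.
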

			\begin{proof}
		The proof is presented in Section \ref{sec: gaussian_byproduct}. 
		\end{proof}
		Our investigation indicates that, when $\vA$ is a Gaussian random matrix, $\beta_{\vA,\boldsymbol{0}}$  tends towards the optimal lower bound $\sqrt{2}$  as $m$ turns to infinity. This finding confirms the distance-preserving characteristic of Gaussian random weights in ReLU neural networks.

\section{Bi-Lipschitz Constants for Gaussian Random Matrices and Proof of Theorem \ref{thm: Gaussian_bilip}}\label{sec: gaussian_byproduct}

In this section, we extend our analysis beyond the conventional setting of $\mathbb{R}^n$, examining the behavior of the bi-Lipschitz constants on an arbitrary given cone $S \subseteq \mathbb{R}^n$, or equivalently, on sets satisfying $S=\lambda S$ for any positive $\lambda\in \mathbb{R}$. Specifically, we investigate constants $\mathcal{L}$ and $\mathcal{U}$ that satisfy the following inequality for any $\boldsymbol{x},\boldsymbol{y}\in S$:
\begin{equation}
\label{original_lipschitz2}
\mathcal{L}\cdot \|\boldsymbol{x}-\boldsymbol{y}\|_2\leq \frac{1}{\sqrt{m}}\cdot\|\sigma(\boldsymbol{A}\boldsymbol{x})-\sigma(\boldsymbol{A}\boldsymbol{y})\|_2\leq \mathcal{U}\cdot \|\boldsymbol{x}-\boldsymbol{y}\|_2,
\end{equation}
where $\vA\in \mathbb{R}^{m\times n}$ is a standard random Gaussian matrix. To establish estimates for $\mathcal{L}$ and $\mathcal{U}$, we first introduce two geometric measures:
		
		\begin{defi}[Gaussian width]\label{eqn: gaussian_width} 
		For any set $T\subset \mathbb{R}^{n}$, its Gaussian width is given by
\[
\omega(T) := \mathbb{E}\sup_{\boldsymbol{x}\in T}\langle \boldsymbol{g},\boldsymbol{x}\rangle,
\]
where $\boldsymbol{g}\sim \mathcal{N}(\boldsymbol{0},\boldsymbol{I}_n)$ is an $n$-dimensional standard Gaussian random vector.
		\end{defi}
		\begin{defi}[$\epsilon$-Net] \label{def:epsilon-net} Let $T$ be a subset of $\mathbb{R}^n$ and $\epsilon > 0$. A set $\mathcal{N}(T,\epsilon) \subset T$ is called an $\epsilon$-net of $T$: if for every $\mathbf{x} \in T$, there exists a point $\mathbf{x}_\epsilon \in \mathcal{N}(T,\epsilon)$ such that
$\|\mathbf{x} - \mathbf{x}_\epsilon\|_2 \leq \epsilon.$
The cardinality of $\mathcal{N}(T,\epsilon)$ is denoted by $\#\mathcal{N}(T,\epsilon)$. 
\end{defi}
The Gaussian width is a useful measure for the dimensionality of a set through the application of $\epsilon$-net cardinality bounds. Specifically, Dudley's and Sudakov's inequalities (Theorem 7.4.1 and Theorem 8.1.3 in \cite{vershynin}) establish a relationship between Gaussian width and $\epsilon$-net cardinality:
\begin{equation}\label{Gaussian_width} 
c\cdot {\epsilon}\cdot \sqrt{\log(\#\mathcal{N}(T,\epsilon))} \leq \omega(T) \leq C\cdot \int_{0}^\infty \sqrt{\log \#\mathcal{N}(T,\epsilon)}d\epsilon,
\end{equation} 
where $C$ and $c$ are positive absolute constants. This relationship proves particularly useful in various contexts. For instance, when analyzing $k$-sparse signals within the $n$-dimensional unit ball, defined as 
\begin{equation}
\label{eqn: T}
T := \{\boldsymbol{x}\in \mathbb{R}^{n} : \|\boldsymbol{x}\|_0\leq k,\ \|\boldsymbol{x}\|_2\leq 1\},
\end{equation}
 we find that 
\[
\omega(T) = O(\sqrt{k\log(n/k)}).
\] 
For more examples and details on the estimation of $\omega(T)$, we refer the readers to \cite{one-bit}.

The bi-Lipschitz constants described in (\ref{original_lipschitz2}) are examined in Theorem \ref{thm: lipschitz_result}.

		\begin{theorem}\label{thm: lipschitz_result}
Let $S\subset \mathbb{R}^n$ be a cone. Assume that  $\boldsymbol{A}\in \mathbb{R}^{m\times n}$ is a standard Gaussian random matrix.  Assume that $\delta$ is a fixed positive constant such that $\delta<1/2$. Then with probability at least $1-2\exp(-{c}_{\delta} \cdot m)$, we have: 
\begin{equation}
\label{eqn: final_lip_LU}
\Big(\frac{1}{2}-\phi(\vx,\vy)-\frac{\delta}{2}\Big)\cdot\|\boldsymbol{x}-\boldsymbol{y}\|_2^2\leq \frac{1}{{m}}\cdot \|\sigma(\boldsymbol{A}\boldsymbol{x})-\sigma(\boldsymbol{A}\boldsymbol{y})\|_2^2\leq \Big(\frac{1}{2}-\phi(\vx,\vy)+\frac{\delta}{2}\Big)\cdot\|\boldsymbol{x}-\boldsymbol{y}\|_2^2,
\end{equation}
for any $\boldsymbol{x},\boldsymbol{y}\in S$, provided that 
\[
m\geq {C}_{\delta}\cdot \omega^2((S-S)\cap \mathbb{B}^n).
\]
Here ${c}_{\delta}$ and ${C}_{\delta}$ are positive constants depending on $\delta$, and $\phi(\vx,\vy)$ is defined in (\ref{eqn: phi}).
		\end{theorem}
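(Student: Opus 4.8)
\textbf{Proof proposal for Theorem \ref{thm: lipschitz_result}.}

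The plan is to prove the bound separately in two regimes, depending on whether $\|\vx-\vy\|_2$ is small or large relative to a threshold that scales with the norm of the point, and then stitch the two estimates together. This is exactly the decomposition alluded to in the introduction: the ``large distance'' regime will be handled by Theorem \ref{thm: large} and the ``small distance'' regime by Theorem \ref{thm: small}. Since $S$ is a cone and both sides of \eqref{eqn: final_lip_LU} are invariant under the simultaneous scaling $(\vx,\vy)\mapsto(\lambda\vx,\lambda\vy)$ (the left/right factors, $\phi(\vx,\vy)$, and $\|\sigma(\vA\vx)-\sigma(\vA\vy)\|_2^2/\|\vx-\vy\|_2^2$ are all $0$-homogeneous), it suffices to control the ratio on a normalized slice, e.g.\ for pairs with $\max(\|\vx\|_2,\|\vy\|_2)=1$ or with $\vx-\vy\in(S-S)\cap\mathbb{B}^n$. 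The first step, then, is to record these homogeneity reductions carefully so that the quantity $\omega^2((S-S)\cap\mathbb{B}^n)$ is the only geometric parameter that enters.

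Next I would invoke the pointwise expectation identity (the computation sketched in the commented-out Lemma, giving $\frac1m\E\|\sigma(\vA\vx)-\sigma(\vA\vy)\|_2^2 = \frac12\|\vx-\vy\|_2^2 - \phi(\vx,\vy)\|\vx-\vy\|_2^2$ after rewriting in terms of $\theta_{\vx,\vy}$): this explains why $\frac12-\phi(\vx,\vy)$ is the correct target, and reduces the theorem to a uniform concentration statement around the mean. For the large-distance regime, Theorem \ref{thm: large} should already supply a uniform (in $\vx,\vy$ over the relevant cone slice) deviation bound with the claimed $m\gtrsim\omega^2$ sampling complexity; the mechanism there is a standard chaining/$\epsilon$-net argument on $(S-S)\cap\mathbb{S}^{n-1}$ using sub-exponential concentration of $\|\sigma(\vA\vx)-\sigma(\vA\vy)\|_2^2$, with Dudley/Sudakov \eqref{Gaussian_width} converting the covering numbers into $\omega(\cdot)$. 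For the small-distance regime, Theorem \ref{thm: small} handles the delicate case $\vy\to\vx$, where the ReLU sign pattern is nearly constant; here one writes $\sigma(\vA\vx)-\sigma(\vA\vy)$ as $\vA_{S(\vx,\vA)}(\vx-\vy)$ plus a controlled error supported on the few coordinates where the sign flips, and uses that $\phi(\vx,\vy)\to 0$ as $\theta_{\vx,\vy}\to 0$ so the target reverts to the ``$\frac12\|\vx-\vy\|_2^2$'' behavior. I would then choose the crossover threshold so the two ranges overlap, and take the intersection of the two high-probability events (each of the form $1-2\exp(-c_\delta m)$), absorbing constants into $c_\delta,C_\delta$.

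The main obstacle I anticipate is the uniformity in the small-distance regime: unlike a fixed-pair bound, one needs the estimate to hold simultaneously for \emph{all} pairs $\vx,\vy\in S$ with small separation, and the naive union bound over an $\epsilon$-net fails because the ReLU map is not globally Lipschitz in the right sense and the sign set $S(\vx,\vA)$ is discontinuous in $\vx$. The resolution — and presumably the technical heart of Theorem \ref{thm: small} — is to control the measure (in $\vx$) of the ``bad'' region near the hyperplanes $\langle\va_j,\cdot\rangle=0$ and to show that on the complement the map is locally affine with a stable restricted singular value, while on the bad region a crude bound suffices because the contribution is lower order; this is also where the improvement from $\omega^4$ to $\omega^2$ (Remark \ref{rem: key_rem}) comes from, by avoiding a wasteful second application of the net. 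Everything after assembling these two theorems is bookkeeping: combine the inequalities, rename constants, and verify that $\delta<1/2$ is exactly what is needed to keep the lower bound $\frac12-\phi(\vx,\vy)-\frac\delta2$ meaningful (recalling $0\le\phi(\vx,\vy)<1/2$, with the extreme approached only as $\vy\to-\vx$, i.e.\ $\theta_{\vx,\vy}\to\pi$).
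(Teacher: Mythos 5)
Your proposal matches the paper's proof in its essential structure: reduce by homogeneity to $\vx\in S\cap\mathbb{S}^{n-1}$, $\vy\in S\cap\mathbb{B}^n$, split on whether $\|\vx-\vy\|_2$ exceeds a $\delta$-dependent threshold $C$, apply Theorem~\ref{thm: large} in the far regime and Theorem~\ref{thm: small} in the near regime (using that $\phi(\vx,\vy)\lesssim\delta^{1/4}$ when $\|\vx-\vy\|_2\le C$ so the $\frac12\pm\cdots$ bound of Theorem~\ref{thm: small} absorbs the $\phi$ term), intersect the two high-probability events, and re-parameterize the constants. The only cosmetic discrepancy is your recollection that $\phi<1/2$; the paper in fact proves $\phi\le1/4$, but this does not affect the argument.
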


		\begin{proof}
		The proof is presented in Section \ref{sec: real_lipschitz}. 
		\end{proof}
		\begin{remark}
		\label{rem: key_rem}
		In this paper, we do not optimize the values of $c_{\delta}$ and $C_{\delta}$. We leave it for interested readers. Compared to the result with those in \cite[Theorem 3]{ref1}, our work improves the sampling complexity. Specifically, we have reduced the sampling dependency from $\omega^4(\cdot)$ to $\omega^2(\cdot)$. For instance, consider the $k$-sparse signal set $T$ in (\ref{eqn: T}).
		 Let $S$ be the cone generated by $T$, i.e.,
		\[
		S:=\mathrm{cone}(T):=\{\lambda \vx\ :\ \vx\in T\ \mathrm{and}\ \lambda>0\}. 
		\]
		Our result yields a sampling complexity on the order of $k \log(n/k)$, which is significantly lower than the $k^2\log(n/k)$
  requirement in \cite[Theorem 3]{ref1}.
		
		Furthermore, we contrast our findings with those in \cite[Theorem 4]{ref1}. By setting $\vy = \boldsymbol{0}$, we obtain, with high probability, the following inequality for any $\vx \in S$:

\[
\left( \frac{1}{2} - \frac{\delta}{2} \right) \cdot \|\boldsymbol{x}\|_2^2 \leq \frac{1}{m}\cdot  \|\sigma(\boldsymbol{A} \boldsymbol{x})\|_2^2 \leq \left( \frac{1}{2} + \frac{\delta}{2} \right) \cdot \|\boldsymbol{x}\|_2^2.
\]
Consequently, by directly applying the results from Theorem \ref{thm: lipschitz_result} (details omitted for brevity), we establish that, with high probability, for any nonzero $\vx, \vy \in S$:
\[
\left| \cos \left( \theta_{\sigma(\boldsymbol{A}\boldsymbol{x}), \sigma(\boldsymbol{A}\boldsymbol{y})} \right) - \left( \cos \theta_{\boldsymbol{x}, \boldsymbol{y}} + \frac{\sin \theta_{\boldsymbol{x}, \boldsymbol{y}} - \theta_{\boldsymbol{x}, \boldsymbol{y}} \cdot \cos \theta_{\boldsymbol{x}, \boldsymbol{y}}}{\pi} \right) \right| \lesssim \delta.
\]
 Here $\theta_{\vx,\vy}$ and $\theta_{\sigma(\boldsymbol{A}\boldsymbol{x}), \sigma(\boldsymbol{A}\boldsymbol{y})}$ are defined as in (\ref{eqn: theta}).
In contrast to Theorem 4 of \cite{ref1}, which assumes that $K \subset \mathbb{B}^{n} \setminus \mathbb{B}^{n}_{\beta}$, where $\mathbb{B}^{n}_{\beta} := \left\{ \vx \in \mathbb{R}^n : \|\vx\|_2 \leq \beta \right\}$, our result can be generalized for any $K \subset \mathbb{B}^{n}$ by setting $S$ as 
\[
S: = \mathrm{cone}(K) := \left\{\lambda  \vx\  :\  \vx \in K\  \mathrm{and}\   \lambda > 0 \right\}.
\]
		\end{remark}
	Now we begin to prove Theorem \ref{thm: Gaussian_bilip}. 
		\begin{proof}[Proof of Theorem \ref{thm: Gaussian_bilip}]
      We assert that:
		\begin{equation}\label{eqn: phi_up_low}
		0\leq \phi(\vx,\vy)\leq \frac{1}{4},
		\end{equation}
		for any non-zero $\vx,\vy$ such that $\vx\neq \vy$. The proof of this assertion will be provided at the end. Applying Theorem \ref{thm: lipschitz_result} with $S=\mathbb{R}^{n}$, we can directly infer that for any given $\delta\leq 1/5$, with probability at least $1-\exp(-c_{\delta}\cdot n)$:
		\[
		\mathcal{U}_{\vA,\boldsymbol{0}}\leq \sqrt{\frac{1}{2}+\frac{\delta}{2}},\quad \text{and}\quad \mathcal{L}_{\vA,\boldsymbol{0}}\geq \sqrt{\frac{1}{4}-\frac{\delta}{2}},
		\]
		provided that $m\geq C_{\delta}\cdot n$. 
		
		Therefore, 
		\begin{equation}
		\label{eqn: 4delta}
		\beta_{\vA,\boldsymbol{0}}\leq \sqrt{\frac{\frac{1}{2}+\frac{\delta}{2}}{\frac{1}{4}-\frac{\delta}{2}}}\leq \sqrt{2}+4\delta,
		\end{equation}
		as 
		\[
		\begin{aligned}
		(\sqrt{2}+4\delta)^2\cdot (1/4-\delta/2)-(1/2+\delta/2)\geq &(2+10\delta)\cdot (1/4-\delta/2)-(1/2+\delta/2)=\delta-5\delta^2\geq 0,
		\end{aligned}
		\]
		given that $\delta\leq 1/5$. The conclusion in (\ref{eqn: beta_upper}) holds if we replace $4\delta$ in (\ref{eqn: 4delta}) with $\delta$.

Now, we present the proof of the result in (\ref{eqn: phi_up_low}). Through direct calculation, for any non-zero $\vx,\vy\in \mathbb{R}^{n}$
  such that $\vx\neq \vy$, we have the following property: 
		\begin{equation}\label{eqn: phi_property}
		\begin{aligned}
              0\leq & \phi(\vx,\vy)=\frac{\left(\sin \theta_{\vx,\vy}-\theta_{\vx,\vy}\cdot \cos \theta_{\vx,\vy}\right)}{\pi}\cdot \frac{\|\vx\|_2\cdot \|\vy\|_2}{\|\vx\|_2^2+\|\vy\|_2^2-2\langle \vx,\vy\rangle}\\
              = &\frac{\left(\sin \theta_{\vx,\vy}-\theta_{\vx,\vy}\cdot \cos \theta_{\vx,\vy}\right)}{\pi}\cdot \frac{1}{\frac{\|\vx\|_2}{\|\vy\|_2}+\frac{\|\vy\|_2}{\|\vx\|_2}-2\cos\theta_{\vx,\vy}}\\
              \leq &\frac{\left(\sin \theta_{\vx,\vy}-\theta_{\vx,\vy}\cdot \cos \theta_{\vx,\vy}\right)}{\pi}\cdot \frac{1}{2-2\cos\theta_{\vx,\vy}}\\
              \leq & \frac{1}{4}.
              \end{aligned}
		\end{equation}
The second inequality is derived from the arithmetic-geometric mean inequality, which states that $\frac{\|\boldsymbol{x}\|_2}{\|\boldsymbol{y}\|_2}+\frac{\|\boldsymbol{y}\|_2}{\|\boldsymbol{x}\|_2}\geq 2$ for any nonzero vectors $\boldsymbol{x},\boldsymbol{y}\in \mathbb{R}^{n}$. The final inequality is based on the monotonicity of the function $g(t):=\frac{\sin t-t\cdot \cos t}{1-\cos t}$, which is strictly increasing on the interval $(0,\pi]$, and its limiting behavior as $t$ approaches zero is $\lim_{t\rightarrow 0^+} \frac{\sin t-t\cdot \cos t}{1-\cos t}=0$. Thus, the proof of (\ref{eqn: phi_up_low}) is complete.
		\end{proof}

		\section{Proof of Theorem \ref{thm: lipschitz_result}}\label{sec: real_lipschitz}

Theorem \ref{thm: lipschitz_result} is a direct consequence of two distinct results below, Theorem \ref{thm: large} and Theorem \ref{thm: small}, each employing different technical tools and addressing a specific regime: one for large distance and another for small distance between vectors $\boldsymbol{x},\boldsymbol{y}\in S$, respectively. 

 Initially, we present the behavior of the bi-Lipschitz constants in the regime where the Euclidean distance between the vectors $\boldsymbol{x},\boldsymbol{y}\in S$ is  large. 

\begin{theorem}\label{thm: large}
		Let $S\subseteq \mathbb{R}^n$ be a cone. Assume that  $\boldsymbol{A}\in \mathbb{R}^{m\times n}$ is a standard Gaussian random matrix. For any fixed positive constants $C$ and $\delta$ such that $C<1$ and $\delta<C^2/2$, with probability at least $1-2\exp(-{c}_{\delta}\cdot m)$, we have:
\begin{equation}
\label{eqn: conc_C}
\Big(\frac{1}{2}-\phi(\vx,\vy)-\delta\Big)\cdot\|\boldsymbol{x}-\boldsymbol{y}\|_2^2\leq \frac{1}{m}\cdot \|\sigma(\boldsymbol{A}\boldsymbol{x})-\sigma(\boldsymbol{A}\boldsymbol{y})\|_2^2\leq \Big(\frac{1}{2}-\phi(\vx,\vy)+\delta\Big)\cdot\|\boldsymbol{x}-\boldsymbol{y}\|_2^2,
\end{equation}
for any $\boldsymbol{x},\boldsymbol{y}\in S$ such that  $\|\boldsymbol{x}-\boldsymbol{y}\|_2\geq C\cdot \max\{\|\vx\|_2,\|\vy\|_2\}$, provided that 
\[
m\geq C_{\delta}\cdot \omega^2((S-S)\cap \mathbb{B}^{n}).
\]
Here ${c}_{\delta}$ and $C_{\delta}$ are positive constants depending on $\delta$, and $\phi(\vx,\vy)$ is defined in (\ref{eqn: phi}). 
		\end{theorem}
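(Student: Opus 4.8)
The plan is to reduce the statement to a uniform concentration bound over a suitable normalized index set, and then invoke a standard chaining/net argument controlled by the Gaussian width $\omega((S-S)\cap\mathbb{B}^n)$. Since both sides of \eqref{eqn: conc_C} are invariant under the scaling $(\vx,\vy)\mapsto(\lambda\vx,\lambda\vy)$ for $\lambda>0$ --- the left side because $\phi$ and the quotient $\|\sigma(\vA\vx)-\sigma(\vA\vy)\|_2^2/\|\vx-\vy\|_2^2$ are both scale-invariant, and $S$ is a cone --- we may normalize so that $\|\vx-\vy\|_2=1$. The constraint $\|\vx-\vy\|_2\geq C\max\{\|\vx\|_2,\|\vy\|_2\}$ then forces $\|\vx\|_2,\|\vy\|_2\leq 1/C$, so the relevant pairs $(\vx,\vy)$ live in a bounded region, and the difference vectors $\vx-\vy$ range over a subset of $(S-S)\cap\mathbb{S}^{n-1}\subseteq (S-S)\cap\mathbb{B}^n$. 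First I would fix the pointwise expectation: for $\va\sim\mathcal N(\vO,\bI_n)$, a direct computation (the arcsine/Grothendieck-type identity used implicitly in the commented-out Lemma) gives
\[
\E\big(\sigma(\langle\va,\vx\rangle)-\sigma(\langle\va,\vy\rangle)\big)^2=\Big(\tfrac12-\phi(\vx,\vy)\Big)\cdot\|\vx-\vy\|_2^2,
\]
so that $\tfrac1m\E\|\sigma(\vA\vx)-\sigma(\vA\vy)\|_2^2$ equals the target center. Thus \eqref{eqn: conc_C} is exactly a uniform deviation inequality around the mean.

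Next I would set up the concentration. For each fixed normalized pair, $\|\sigma(\vA\vx)-\sigma(\vA\vy)\|_2^2=\sum_{i=1}^m\big(\sigma(\langle\va_i,\vx\rangle)-\sigma(\langle\va_i,\vy\rangle)\big)^2$ is a sum of i.i.d.\ sub-exponential random variables (each summand is bounded by $(\langle\va_i,\vx-\vy\rangle)^2$ in magnitude since $\sigma$ is $1$-Lipschitz, hence sub-exponential with norm $\lesssim\|\vx-\vy\|_2^2=1$), so Bernstein's inequality gives a pointwise bound of the form $\exp(-c_\delta m)$ for deviation $\delta$. To upgrade to a uniform bound I would take an $\epsilon$-net of the set of admissible difference vectors (and, where needed, of the pair set) with $\epsilon$ a small multiple of $\delta$; the map $(\vx,\vy)\mapsto\frac1{\sqrt m}\|\sigma(\vA\vx)-\sigma(\vA\vy)\|_2$ is $\frac1{\sqrt m}\lambda_{\max}(\vA)$-Lipschitz in $(\vx,\vy)$, and $\lambda_{\max}(\vA)\lesssim\sqrt m$ with probability $1-\exp(-cm)$ on the event $m\gtrsim n\geq$ (dimension of the span). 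Combining the net cardinality bound $\log\#\mathcal N\lesssim \omega^2((S-S)\cap\mathbb B^n)/\epsilon^2$ from Sudakov's inequality \eqref{Gaussian_width} with the pointwise Bernstein estimate, a union bound succeeds provided $m\gtrsim C_\delta\,\omega^2((S-S)\cap\mathbb B^n)$, which is the stated sampling condition; the failure probability is $2\exp(-c_\delta m)$ after absorbing constants.

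The main obstacle I anticipate is making the net argument genuinely uniform over \emph{pairs} $(\vx,\vy)$ rather than just over the difference $\vx-\vy$: the center $\phi(\vx,\vy)$ and the random quantity both depend on the pair, not only on $\vx-\vy$, so one must either (a) discretize the two-parameter family $\{(\vx,\vy):\|\vx-\vy\|_2=1,\ \|\vx\|_2,\|\vy\|_2\leq1/C\}$ and show its metric entropy is still governed by $\omega((S-S)\cap\mathbb B^n)$ up to the $C$-dependent factor, using that $\vx\in(\vy+\mathbb S^{n-1})\cap S$ lies in a translate of a difference set, or (b) more cleverly, reparametrize by $\vu=\vx-\vy$ and the "midpoint direction" and show the dependence on the second parameter is Lipschitz with a harmless constant, then absorb it. The hypothesis $C<1$ and $\delta<C^2/2$ should be exactly what is needed to keep the quantities $\|\vx\|_2/\|\vy\|_2+\|\vy\|_2/\|\vx\|_2-2\cos\theta_{\vx,\vy}$ (the denominator appearing in $\phi$) bounded away from $0$, so that $\phi$ is Lipschitz in the pair and the center is stable under $\epsilon$-perturbations; I would track this carefully, since it is where the large-distance assumption is used and where the small-distance case (Theorem \ref{thm: small}) genuinely requires different tools.
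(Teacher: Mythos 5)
Your proposal captures the right skeleton: compute the pointwise expectation (getting the center $\tfrac12-\phi$), concentrate via Bernstein on a net, and control the net cardinality by Sudakov's inequality applied to $\omega((S-S)\cap\mathbb B^n)$. This is exactly the paper's strategy, and you correctly anticipate that the delicate step is passing from the net to the full set when the center $\phi(\vx,\vy)$ depends on the \emph{pair}, not just on $\vx-\vy$; the paper handles this precisely by your option (a), taking a product net $\mathcal N(S\cap\mathbb S^{n-1},\delta_0)\times\mathcal N(S\cap\mathbb B^n,\delta_0)$ after normalizing so that the larger vector lies on $\mathbb S^{n-1}$ (rather than normalizing $\|\vx-\vy\|_2=1$ as you do; the paper's choice makes the index sets literal subsets of $S\cap\mathbb B^n$, so Sudakov applies without further reductions). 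The paper then separately proves a Lipschitz-type stability bound $|\psi(\vx_{\delta_0},\vy_{\delta_0})-\psi(\vx,\vy)|\lesssim\sqrt{\delta_0}$ for $\psi=\phi\cdot\|\vx-\vy\|_2^2$, split into cases according to whether $\|\vy\|_2$ is small or not.

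There is, however, one concrete gap. To transfer from net points to arbitrary points you invoke the Lipschitz constant $\tfrac1{\sqrt m}\lambda_{\max}(\vA)$ and the bound $\lambda_{\max}(\vA)\lesssim\sqrt m$, but that bound requires $m\gtrsim n$, which is \emph{not} implied by the theorem's hypothesis $m\gtrsim\omega^2((S-S)\cap\mathbb B^n)$. For a low-complexity cone (say $S$ the cone of $k$-sparse vectors, where $\omega^2\sim k\log(n/k)\ll n$), one has $\lambda_{\max}(\vA)\sim\sqrt m+\sqrt n\gg\sqrt m$, so your transfer step would blow up the error and the claimed sampling complexity would degrade to $m\gtrsim n$. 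The correct replacement is a \emph{restricted} operator norm: the perturbations $\vx-\vx_{\delta_0}$ and $\vy-\vy_{\delta_0}$ live in $S-S$, and the paper's Lemma 5.4 (an RIP over the cone, itself proved via Gaussian width) gives $\tfrac1{\sqrt m}\|\vA\vz\|_2\leq\sqrt{1+\delta_0}\,\|\vz\|_2$ simultaneously for all $\vz\in S-S$ under exactly the condition $m\gtrsim\delta_0^{-4}\omega^2((S-S)\cap\mathbb B^n)$. Combined with the pointwise nonexpansiveness $\|\sigma(\vz_1)-\sigma(\vz_2)\|_2\leq\|\vz_1-\vz_2\|_2$, this replaces $\lambda_{\max}(\vA)$ in your triangle-inequality transfer step and closes the gap.

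One smaller remark: the role of the hypotheses $C<1$ and $\delta<C^2/2$ is not primarily to bound the denominator inside $\phi$ away from $0$. In the paper they are used (i) to ensure that when squaring the triangle-inequality bound the lower bound stays nonnegative, i.e.\ $\tfrac12\|\vx-\vy\|_2^2-\psi-O(\sqrt{\delta_0})\geq\tfrac12 C^2-O(\sqrt{\delta_0})>0$; and (ii) to convert the additive error $\delta^2$ into a relative error $\delta\|\vx-\vy\|_2^2$ via $\delta^2\leq(\delta^2/C^2)\|\vx-\vy\|_2^2\leq\delta\|\vx-\vy\|_2^2$.
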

		\begin{proof}
		The proof is presented in Section \ref{prof_large_sec}.
		\end{proof}

The following theorem addresses the scenario where the distance between $\boldsymbol{x},\boldsymbol{y}\in S$ is small. In this case, traditional analytical tools are insufficient, and the result may be of independent interest to readers.
			\begin{theorem}\label{thm: small}
		Let $S\subset \mathbb{R}^n$ be a cone. Assume that  $\boldsymbol{A}\in \mathbb{R}^{m\times n}$ is a standard Gaussian random matrix.  For any fixed positive constants $\alpha,\delta,C \in (0,1)$ and $\beta\geq 10$ such that $\alpha-C\beta>0$,  with probability at least $1-2\exp(-{c}_{\alpha,\beta,\delta} \cdot m)$, we have: 
\begin{equation}\label{eqn: small_C}
\small
\Big(\frac{1}{2}-3\alpha- \frac{3}{\beta}-2\delta\Big)\cdot\|\boldsymbol{x}-\boldsymbol{y}\|_2^2\leq \frac{1}{m}\cdot \|\sigma(\boldsymbol{A}\boldsymbol{x})-\sigma(\boldsymbol{A}\boldsymbol{y})\|_2^2\leq \Big(\frac{1}{2}+3\alpha+\frac{3}{\beta}+2\delta\Big)\cdot\|\boldsymbol{x}-\boldsymbol{y}\|_2^2,
\end{equation}
for any $\boldsymbol{x},\vy\in S$ such that $\|\boldsymbol{x}-\boldsymbol{y}\|_2\leq C\cdot \max\{\|\vx\|_2,\|\vy\|_2\}$, provided that 
\[m\geq {C}_{\alpha,\beta,\delta}\cdot\omega^2((S-S)\cap \mathbb{B}^{n}).\] 
Here ${c}_{\alpha,\beta,\delta}$ and ${C}_{\alpha,\beta,\delta}$ are positive constants depending on $\alpha,\beta$ and $\delta$. 
\end{theorem}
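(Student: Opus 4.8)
The plan is to compare $\tfrac1m\|\sigma(\vA\vx)-\sigma(\vA\vy)\|_2^2$ with the reference quantity
\[
R(\vx,\vy):=\frac1m\sum_{j=1}^m\langle\va_j,\vx-\vy\rangle^2\,\mathbf{1}\{\langle\va_j,\vx\rangle>0\},
\]
where $\va_1,\dots,\va_m$ are the rows of $\vA$. A short Gaussian computation gives $\E\,R(\vx,\vy)=\tfrac12\|\vx-\vy\|_2^2$ for \emph{every} pair: decomposing $\langle\va_j,\vx-\vy\rangle$ into its component along $\vx/\|\vx\|_2$ and an orthogonal component, the event $\{\langle\va_j,\vx\rangle>0\}$ and the conditional second moment of $\langle\va_j,\vx-\vy\rangle$ split off symmetrically, which halves the total second moment. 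By homogeneity of the cone $S$ and of the inequality I may normalize $\max\{\|\vx\|_2,\|\vy\|_2\}=\|\vx\|_2=1$, so that $\|\vx-\vy\|_2\le C$, and I set $\vv=\vx/\|\vx\|_2$, $\vu=(\vx-\vy)/\|\vx-\vy\|_2$.

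First I would reduce the theorem to estimating $R$. Call a row $j$ \emph{active} if $\langle\va_j,\vx\rangle$ and $\langle\va_j,\vy\rangle$ have the same sign. On active rows one has the exact identity $\big(\sigma(\langle\va_j,\vx\rangle)-\sigma(\langle\va_j,\vy\rangle)\big)^2=\langle\va_j,\vx-\vy\rangle^2\,\mathbf{1}\{\langle\va_j,\vx\rangle>0\}$, whereas on the remaining (boundary) rows both sides lie in $[0,\langle\va_j,\vx-\vy\rangle^2]$, so
\[
\left|\,\tfrac1m\|\sigma(\vA\vx)-\sigma(\vA\vy)\|_2^2-R(\vx,\vy)\,\right|\ \le\ \frac1m\sum_{j\ \text{boundary}}\langle\va_j,\vx-\vy\rangle^2 .
\]
A boundary row forces opposite signs, hence $|\langle\va_j,\vx\rangle|\le|\langle\va_j,\vx-\vy\rangle|$, i.e.\ $|\langle\va_j,\vv\rangle|\le C|\langle\va_j,\vu\rangle|$ after using $\|\vx-\vy\|_2\le C$. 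Next, the elementary inclusion $\mathbf{1}\{|s|\le C|t|\}\le\mathbf{1}\{|s|\le\alpha\}+\mathbf{1}\{|t|\ge\alpha/C\}$ splits $\frac1m\sum_j\mathbf{1}\{|\langle\va_j,\vv\rangle|\le C|\langle\va_j,\vu\rangle|\}\langle\va_j,\vu\rangle^2$ into a \emph{slab} part $\frac1m\sum_j\mathbf{1}\{|\langle\va_j,\vv\rangle|\le\alpha\}\langle\va_j,\vu\rangle^2$, whose expectation is $\le\alpha$ for every $\vv,\vu$ (again by removing the common direction), and a \emph{tail} part whose expectation is $\le\E[Z^2\mathbf{1}\{|Z|\ge\alpha/C\}]$, $Z\sim\mathcal N(0,1)$. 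The hypothesis $\alpha-C\beta>0$ forces $\alpha/C>\beta\ge10$, so this last expectation is $\lesssim\beta e^{-\beta^2/2}\le1/\beta$. Thus $R$ supplies $\tfrac12\|\vx-\vy\|_2^2$ and the correction is $O(\alpha+1/\beta)$ in expectation; tallying constants gives the window $\tfrac12\pm(3\alpha+3/\beta)$ before fluctuations. (This also explains why no $\phi$ appears: in this regime $\|\vx-\vy\|_2\le C\max\{\|\vx\|_2,\|\vy\|_2\}$ forces $\theta_{\vx,\vy}=O(C)$, hence $\phi(\vx,\vy)=O(C)\lesssim\alpha$ is absorbed into the $3\alpha$ term.)

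Finally I would upgrade the in-expectation bounds to uniform ones. Each of $R$, the slab part and the tail part is an empirical average $\frac1m\sum_j\psi\big(\langle\va_j,\vv\rangle,\langle\va_j,\vu\rangle\big)$ of i.i.d.\ copies of a fixed function of the two Gaussian marginals, and the theorem follows once each stays within $\delta$ of its expectation \emph{simultaneously} over all admissible $(\vx,\vy)$. Since $\vv$ runs over normalized directions of $S$ and $\vu$ over normalized directions of $S-S$, both contained in $(S-S)\cap\mathbb{B}^n$, the governing complexity is $\omega\big((S-S)\cap\mathbb{B}^n\big)$, and I want the uniform deviation $\lesssim\delta$ once $m\gtrsim C_{\alpha,\beta,\delta}\,\omega^2\big((S-S)\cap\mathbb{B}^n\big)$. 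I would first truncate the sub-exponential weights $\langle\va_j,\vu\rangle^2$ at level $O(\beta^2)$ (the discarded mass is again $\lesssim e^{-\beta^2/2}\le1/\beta$) so that Bernstein-type control holds pointwise, then run a chaining/$\epsilon$-net argument over the pair $(\vx,\vy)$; the $1-2\exp(-c_{\alpha,\beta,\delta}m)$ probability comes from Gaussian concentration at the net points. Piecing the steps together yields $\tfrac1m\|\sigma(\vA\vx)-\sigma(\vA\vy)\|_2^2\in\big(\tfrac12\pm(3\alpha+\tfrac3\beta+2\delta)\big)\|\vx-\vy\|_2^2$, as claimed.

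The genuine obstacle is this uniform concentration in the presence of the \emph{discontinuous} indicators $\mathbf{1}\{\langle\va_j,\vv\rangle>0\}$ and $\mathbf{1}\{|\langle\va_j,\vv\rangle|\le\alpha\}$ together with the heavy-tailed weights $\langle\va_j,\vu\rangle^2$, at the \emph{optimal} sample size $m\gtrsim\omega^2(\cdot)$ rather than the $\omega^4(\cdot)$ a cruder union bound would yield (cf.\ Remark~\ref{rem: key_rem}). A naive net over $\vv$ fails because an arbitrarily small perturbation of $\vv$ can flip the sign indicator on a constant fraction of the rows; the key is to show that the flipped rows all have $\langle\va_j,\vv\rangle$ inside a thin slab and hence carry negligible $\langle\va_j,\vu\rangle^2$-mass — which is exactly the slab estimate used above — and this is what squares, rather than quarters, the dependence on the Gaussian width.
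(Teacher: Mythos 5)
Your proposal takes essentially the same route as the paper. The paper also normalizes to $\vx\in S\cap\mathbb{S}^{n-1}$, $\vy\in S\cap\mathbb{B}^n$, sets $\vz=(\vx-\vy)/\|\vx-\vy\|_2$, and exploits the same sign-agreement observation: if $|\langle\va_i,\vx\rangle|>\alpha$ and $|\langle\va_i,\vz\rangle|\le\beta$, then, because $\alpha>C\beta\ge\|\vx-\vy\|_2\cdot|\langle\va_i,\vz\rangle|$, the signs of $\langle\va_i,\vx\rangle$ and $\langle\va_i,\vy\rangle$ agree. This is your active/boundary split; your slab and tail estimates are Lemma~\ref{lem: mid} and Lemma~\ref{lem: beta}, and the uniform-concentration step is carried out in Lemmas~\ref{lem: middle}, \ref{lem: upper} and \ref{lem: lower}, which implement exactly the fix you anticipate for the discontinuous-indicator obstacle — replace the hard thresholds by Lipschitz ramps $g_\alpha$, truncate $\langle\va_i,\vz\rangle^2$ at level $O(\beta^2)$, and run an $\epsilon$-net argument powered by Sudakov minoration (Lemma~\ref{lem: sudakov}) and the RIP estimate (Lemma~\ref{lem: RIP}). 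The one point worth tightening in your sketch: the reference $R$ with its hard threshold at $0$ has exact mean $\tfrac12\|\vx-\vy\|_2^2$, but its uniform concentration cannot be established as stated, for precisely the sign-flip reason you identify. The paper sidesteps this by never concentrating that quantity directly; instead it sandwiches the ReLU quantity between the two $\alpha$-shifted versions with indicators $\boldsymbol{1}_{\{\langle\va_i,\vx\rangle>\alpha\}}$ (lower) and $\boldsymbol{1}_{\{\langle\va_i,\vx\rangle\ge-\alpha\}}$ (upper), each of which is compatible with the mollifier $g_\alpha$, so the clean $\tfrac12$ picks up an $O(\alpha)$ bias that is absorbed into the $3\alpha$ budget; your two-sided slab $\boldsymbol{1}_{\{|\langle\va_j,\vv\rangle|\le\alpha\}}$ also costs roughly twice the one-sided constant, which is why the paper works with one-sided thresholds throughout.
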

\begin{proof}
The proof is presented in Section \ref{proof of small}.
\end{proof}
		\begin{remark}
		 Theorem \ref{thm: small} explicitly states that:
		\begin{equation}\label{eqn: good_case}
\lim_{\boldsymbol{y}\rightarrow \boldsymbol{x}}\left|\frac{\|\sigma(\boldsymbol{A}\boldsymbol{x})-\sigma(\boldsymbol{A}\boldsymbol{y})\|_2^2}{\|\boldsymbol{x}-\boldsymbol{y}\|_2^2}-\frac{1}{2}\right|\leq 3\alpha+\frac{3}{\beta}+2\delta.
\end{equation}
In contrast to equation (\ref{wrong_lipschitz}), which diverges as $\|\vx - \vy\|_2$ approaches zero, our result demonstrates that the ratio $\frac{\|\sigma(\boldsymbol{A}\boldsymbol{x})-\sigma(\boldsymbol{A}\boldsymbol{y})\|_2^2}{\|\boldsymbol{x}-\boldsymbol{y}\|_2^2}$ converges to approximately $1/2$. This finding provides a more solid theoretical foundation for determining the condition number of standard Gaussian random matrices.
		\end{remark}

Building upon the results established in Theorems \ref{thm: large} and \ref{thm: small}, we derive the bi-Lipschitz constants for any arbitrary given cone $S$ in Theorem \ref{thm: lipschitz_result}. This provides a generalized framework for determining the condition number $\beta_{\vA,\vb}$,  when $\vb=\boldsymbol{0}$.
\begin{proof}[The proof of Theorem \ref{thm: lipschitz_result}]
In the trivial case where $\boldsymbol{x}=\boldsymbol{y}$, the conclusion holds immediately. Let us now concentrate on the non-trivial scenario in which  $\boldsymbol{x}\neq \boldsymbol{y}$. By symmetry of $\vx$ and $\vy$, we can assume that $\|\boldsymbol{y}\|_2\leq \|\boldsymbol{x}\|_2$ and $\boldsymbol{x}\neq \boldsymbol{0}$. Normalizing both vectors by $\|\vx\|_2$, that is, replacing $\boldsymbol{y}$ with $\frac{\boldsymbol{y}}{\|\boldsymbol{x}\|_2}$ and $\boldsymbol{x}$ with $\frac{\boldsymbol{x}}{\|\boldsymbol{x}\|_2}$, the conclusions also hold. Therefore,  it suffices to examine the case where $\boldsymbol{x}\in S\cap \mathbb{S}^{n-1}$
  and $\boldsymbol{y}\in S\cap\mathbb{B}^{n}$.

For any fixed $\delta \in (0,(1/120)^4)$, we define constants $C = 3\sqrt{\delta}$, $\alpha = \delta^{1/4}$, and $\beta = \frac{1}{6}\delta^{-1/4}$. 
These parameter choices ensure that $C < 1$, $\delta < C^2/2$, $\alpha, \delta \in (0,1)$, $\alpha - C\beta > 0$, and $\beta \geq 10$, thereby satisfying the conditions in Theorem \ref{thm: large} and Theorem \ref{thm: small}. 

On one hand, for any $\vx\in S\cap \mathbb{S}^{n-1}$ and $\vy\in S\cap\mathbb{B}^{n}$ such that $\|\vx-\vy\|_2\geq C$, applying Theorem \ref{thm: large}, we have 
\[
\Big(\frac{1}{2}-\phi(\vx,\vy)-30\delta^{1/4}\Big)\cdot\|\boldsymbol{x}-\boldsymbol{y}\|_2^2\leq \frac{1}{m}\cdot \|\sigma(\boldsymbol{A}\boldsymbol{x})-\sigma(\boldsymbol{A}\boldsymbol{y})\|_2^2\leq \Big(\frac{1}{2}-\phi(\vx,\vy)+30\delta^{1/4}\Big)\cdot\|\boldsymbol{x}-\boldsymbol{y}\|_2^2.
\]

On the other hand, for any $\vx\in S\cap \mathbb{S}^{n-1}$ and $\vy\in S\cap \mathbb{B}^{n}$ such that $\|\vx-\vy\|_2\leq C$, we claim that 
\begin{equation}
\label{eqn: phi_est_final}
0\leq \phi(\vx,\vy)\leq 6\delta^{1/4}.
\end{equation}
Then following from Theorem \ref{thm: small} and 
$
3\alpha +\frac{3}{\beta}+2\delta\leq 23\delta^{1/4},
$
we also obtain that 
\[
\begin{aligned}
\Big(\frac{1}{2}-\phi(\vx,\vy)-30\delta^{1/4}\Big)\cdot\|\boldsymbol{x}-\boldsymbol{y}\|_2^2\leq &\Big(\frac{1}{2}-30\delta^{1/4}\Big)\cdot\|\boldsymbol{x}-\boldsymbol{y}\|_2^2\\
\leq &\frac{1}{m}\cdot\|\sigma(\boldsymbol{A}\boldsymbol{x})-\sigma(\boldsymbol{A}\boldsymbol{y})\|_2^2\leq \Big(\frac{1}{2}+23\delta^{1/4}\Big)\cdot\|\boldsymbol{x}-\boldsymbol{y}\|_2^2\\
\leq &\Big(\frac{1}{2}-\phi(\vx,\vy)+30\delta^{1/4}\Big)\cdot\|\boldsymbol{x}-\boldsymbol{y}\|_2^2.
\end{aligned}
\]
Consequently, we have:
\begin{equation}
\label{eqn: rip_origin}
\Big(\frac{1}{2}-\phi(\vx,\vy)-30\delta^{1/4}\Big)\cdot\|\boldsymbol{x}-\boldsymbol{y}\|_2^2\leq \frac{1}{m}\cdot\|\sigma(\boldsymbol{A}\boldsymbol{x})-\sigma(\boldsymbol{A}\boldsymbol{y})\|_2^2\leq \Big(\frac{1}{2}-\phi(\vx,\vy)+30\delta^{1/4}\Big)\cdot\|\boldsymbol{x}-\boldsymbol{y}\|_2^2
\end{equation}
for any $\vx\in S\cap \mathbb{S}^{n-1}$ and $\vy\in S\cap \mathbb{B}^{n}$, provided that 
\[
m\geq {C}_{\delta}\cdot\omega^2((S-S)\cap \mathbb{B}^{n}).
\]
Here ${c}_{\delta}$ and ${C}_{\delta}$ are positive constants depending on $\delta$. 

By substituting $\delta$ for $60\delta^{1/4}$ in (\ref{eqn: rip_origin}), we can conclude that for any fixed $\delta < \frac{1}{2}$:
\[
\Big(\frac{1}{2}-\phi(\vx,\vy)-\frac{\delta}{2}\Big)\cdot\|\boldsymbol{x}-\boldsymbol{y}\|_2^2\leq \frac{1}{m}\cdot \|\sigma(\boldsymbol{A}\boldsymbol{x})-\sigma(\boldsymbol{A}\boldsymbol{y})\|_2^2\leq \Big(\frac{1}{2}-\phi(\vx,\vy)+\frac{\delta}{2}\Big)\cdot\|\boldsymbol{x}-\boldsymbol{y}\|_2^2.
\]
This completes the conclusion in (\ref{eqn: final_lip_LU}).

Now the only thing left is to prove (\ref{eqn: phi_est_final}). For any $\vx\in S\cap \mathbb{S}^{n-1}$ and $\vy\in S\cap \mathbb{B}^{n}$ such that $0<\|\vx-\vy\|_2\leq C$, we can directly obtain that $\vy\neq \boldsymbol{0}$. Therefore, $\frac{\|\vx\|_2}{\|\vy\|_2}+\frac{\|\vy\|_2}{\|\vx\|_2}\geq 2$, and 
\[
\begin{aligned}
0\leq \phi(\vx,\vy)=&\frac{\sin \theta_{\vx,\vy} -\theta_{\vx,\vy}\cdot \cos\theta_{\vx,\vy}}{\pi}\cdot \frac{1}{\frac{\|\vx\|_2}{\|\vy\|_2}+\frac{\|\vy\|_2}{\|\vx\|_2}-2\cos\theta_{\vx,\vy}}\\
\leq &\frac{\sin \theta_{\vx,\vy} -\theta_{\vx,\vy}\cdot \cos\theta_{\vx,\vy}}{2\pi(1-\cos\theta_{\vx,\vy})}\leq \frac{\theta_{\vx,\vy} -\theta_{\vx,\vy}\cdot \cos\theta_{\vx,\vy}}{2\pi(1-\cos\theta_{\vx,\vy})}\\
= & \frac{\theta_{\vx,\vy}}{2\pi}\leq  \frac{\sin \theta_{\vx,\vy}}{2\pi}\leq  \frac{C}{2\pi\cdot \sqrt{1-C^2}}\leq 6\sqrt{\delta}\leq 6\delta^{1/4},
\end{aligned}
\]
given that $C=3\sqrt{\delta}$ and $\delta\in (0,(1/120)^4)$. Then the proof of (\ref{eqn: phi_est_final}) is complete. 
		\end{proof}

\section{Proof of Theorem \ref{thm: large}}\label{prof_large_sec}
We commence our analysis by introducing several auxiliary results that form the foundation for our subsequent investigation.
\begin{theorem}\label{thm: Bernstein}\cite[Theorem 2.8.1]{vershynin}
 Let $\{x_i\}_{i=1}^n$ be a sequence of independent random variables satisfying $\max_{i}\|x_i\|_{\psi_1}\leq K$, where 
 \begin{equation}
 \label{eqn: psi_1}
 \|\cdot\|_{\psi_1}:=\sup_{p\geq 1}p^{-1}(\mathbb{E}|\cdot|^p)^{1/p}
 \end{equation}
  denotes the sub-exponential norm. Then, for every $t>0$, \begin{equation} \mathbb{P}\left(\Big|\frac{1}{n}\cdot \sum_{i=1}^{n} {x_i} - \frac{1}{n}\cdot \sum_{i=1}^{n} \mathbb{E}{x_i} \Big| > t\right)\leq 2\exp\left(-c_0 n \min\Big(\frac{t^2}{K^2}, \frac{t}{K}\Big)\right), \end{equation} where $c_0>0$ is an absolute constant. 
 \end{theorem}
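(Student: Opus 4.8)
The statement is the classical Bernstein inequality for sums of independent sub-exponential random variables, and the plan is to reproduce its standard proof by the exponential-moment (Chernoff) method. First I would reduce to centered summands: set $y_i := x_i - \mathbb{E} x_i$, and note that $\|\cdot\|_{\psi_1}$ is a norm, that $\|\mathbb{E}x_i\|_{\psi_1} = |\mathbb{E}x_i| \le \mathbb{E}|x_i| \le \|x_i\|_{\psi_1}$ (the last bound being the $p=1$ case of the defining supremum), and hence $\max_i \|y_i\|_{\psi_1} \le 2K$. It then suffices to bound the upper tail $\mathbb{P}\big(\sum_{i=1}^n y_i > s\big)$: the lower tail follows by applying the same bound to $-y_i$, at the cost of the factor $2$, and setting $s = nt$ recovers the normalized form in the statement.

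The analytic heart of the argument is a moment generating function estimate: if $\mathbb{E}y = 0$ and $\|y\|_{\psi_1} \le L$, then $\mathbb{E}\exp(\lambda y) \le \exp(c_1 \lambda^2 L^2)$ for all $|\lambda| \le c_2/L$, where $c_1, c_2$ are absolute constants. To establish it I would use the defining property $p^{-1}(\mathbb{E}|y|^p)^{1/p} \le L$, i.e.\ $\mathbb{E}|y|^p \le p^p L^p$, Taylor-expand $e^{\lambda y}$ (the linear term drops because $\mathbb{E}y = 0$), bound $p^p/p! \le e^p$ to obtain $\sum_{p\ge 2}|\lambda|^p \mathbb{E}|y|^p/p! \le \sum_{p\ge 2}(e|\lambda|L)^p$, and sum this geometric series; for $|\lambda| \le 1/(2eL)$ the tail is at most $2(e\lambda L)^2$, so $\mathbb{E}e^{\lambda y} \le 1 + 2e^2\lambda^2 L^2 \le \exp(2e^2\lambda^2 L^2)$.

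With the MGF bound in hand I would run the Chernoff argument: for $\lambda \in (0, c_2/L]$, independence gives $\mathbb{P}\big(\sum_i y_i > s\big) \le e^{-\lambda s}\prod_i \mathbb{E} e^{\lambda y_i} \le \exp(-\lambda s + c_1 n \lambda^2 L^2)$. Optimizing the quadratic exponent over the admissible interval splits into two regimes. When $s \le 2 c_1 c_2 n L$ the unconstrained minimizer $\lambda = s/(2c_1 n L^2)$ is admissible, yielding $\exp\!\big(-s^2/(4 c_1 n L^2)\big)$; when $s > 2 c_1 c_2 n L$ I take $\lambda = c_2/L$, and since then $c_1 c_2^2 n < c_2 s/(2L)$ the linear term dominates, yielding $\exp\!\big(-c_2 s/(2L)\big)$. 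Combining the two cases gives $\mathbb{P}\big(\sum_i y_i > s\big) \le \exp\!\big(-c_3 \min(s^2/(nL^2),\, s/L)\big)$; substituting $s = nt$, recalling $L \le 2K$, doubling for the two-sided estimate, and renaming constants produces exactly the claimed bound with some absolute $c_0 > 0$.

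The only step requiring genuine care is the MGF lemma — in particular tracking the precise range of $\lambda$ on which the exponential series converges and verifying that its sum is dominated by the quadratic leading term. Everything after that is routine convex optimization of the exponent and bookkeeping with absolute constants, which the statement deliberately leaves unspecified.
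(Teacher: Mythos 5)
Your proposal is correct and is essentially the standard Chernoff--MGF proof of Bernstein's inequality for sub-exponential variables, which is precisely the argument given in the cited source (Vershynin, Theorem 2.8.1); the paper itself does not prove this statement but merely cites it. The centering step, the moment bound $\mathbb{E}|y|^p \le p^pL^p$ feeding into the Taylor-series MGF estimate, and the two-regime optimization of the exponent all match the textbook argument, so there is nothing further to compare.
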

	Leveraging Theorem \ref{thm: Bernstein} in conjunction with the concept of $\epsilon$-net in Definition \ref{def:epsilon-net}, we derive the following lemma. As this result follows directly from the aforementioned theorem, we omit the detailed proof.
		\begin{lemma}\label{lem: cover_union}
		Let $\{\va_i\}_{i=1}^m$ be a sequence of independent standard Gaussian random vectors in $\mathbb{R}^n$, i.e., $\va_i \sim \mathcal{N}(\boldsymbol{0}, \boldsymbol{I}_n)$,  for $i = 1, \ldots, m$. Consider two sets $T_1, T_2 \subseteq \mathbb{R}^n$ and a function $f: T_1 \times T_2 \times \mathbb{R}^n \to \mathbb{R}$. Assume that there exists a positive absolute constant $C$ such that
$\max_{\vx \in T_1, \vy \in T_2} \|f(\vx, \vy, \va)\|_{\psi_1} \leq C,$
where $\va \sim \mathcal{N}(\boldsymbol{0}, \boldsymbol{I}_n)$ and $\|\cdot\|_{\psi_1}$ denotes the sub-exponential norm in (\ref{eqn: psi_1}). For any $\delta_0 \in (0,1)$ and $\epsilon > 0$, if
\begin{equation}
\label{eqn: sample_number}
m \geq c_1 \cdot \delta_0^{-2}\cdot\big(\log(\#\mathcal{N}(T_1, \epsilon)) + \log(\#\mathcal{N}(T_2, \epsilon))\big),
\end{equation}
then with probability at least $1 - 2\exp(-{c}'_1\delta_0^2m)$, the following inequality holds:
\[
\Big|\frac{1}{m}\cdot \sum_{i=1}^{m}f(\vx,\vy,\va_i) - \mathbb{E}f(\vx,\vy,\va)\Big| \leq \delta_0,
\]
simultaneously for all $\vx \in \mathcal{N}(T_1, \epsilon)$ and $\vy \in \mathcal{N}(T_2, \epsilon)$. Here, $c_1$ and ${c}'_1$ are  positive absolute constants, and $\mathcal{N}(T, \epsilon)$ denotes the $\epsilon$-net of the set $T$. When $T_2=\emptyset$, the sampling complexity in (\ref{eqn: sample_number}) is reduced to $m \geq c_1\cdot \delta_0^{-2}\cdot \log(\#\mathcal{N}(T_1, \epsilon))$.
		\end{lemma}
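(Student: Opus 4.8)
The plan is a textbook $\epsilon$-net plus union bound argument resting entirely on the sub-exponential Bernstein inequality (Theorem \ref{thm: Bernstein}); we may assume both nets $\mathcal{N}(T_1,\epsilon)$ and $\mathcal{N}(T_2,\epsilon)$ are finite, since otherwise the sampling hypothesis on $m$ is void. First I would fix a single net pair $\vx\in\mathcal{N}(T_1,\epsilon)$, $\vy\in\mathcal{N}(T_2,\epsilon)$ and set $x_i:=f(\vx,\vy,\va_i)$ for $i=1,\dots,m$. Since the $\va_i$ are i.i.d.\ and $f$ is a fixed function, the $x_i$ are i.i.d.\ with $\|x_i\|_{\psi_1}\le C$ by hypothesis. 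Applying Theorem \ref{thm: Bernstein} with $t=\delta_0$ and $K=C$ gives
\[
\mathbb{P}\!\left(\Big|\frac1m\sum_{i=1}^m x_i-\mathbb{E}\,x_i\Big|>\delta_0\right)\le 2\exp\!\Big(-c_0 m\min\big(\tfrac{\delta_0^2}{C^2},\tfrac{\delta_0}{C}\big)\Big),
\]
where $\tfrac1m\sum_{i=1}^m\mathbb{E}\,x_i=\mathbb{E}\,f(\vx,\vy,\va)$ because the $x_i$ are identically distributed. Since $\delta_0\in(0,1)$ one checks $\min(\delta_0^2/C^2,\delta_0/C)\ge\delta_0^2/\max(C^2,C)$, so the right-hand side is at most $2\exp(-c_2 m\delta_0^2)$ with $c_2:=c_0/\max(C^2,C)$, which is an absolute constant because both $c_0$ and $C$ are.

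Next I would take a union bound over the $\#\mathcal{N}(T_1,\epsilon)\cdot\#\mathcal{N}(T_2,\epsilon)$ net pairs: the probability that $|\tfrac1m\sum_i f(\vx,\vy,\va_i)-\mathbb{E}\,f(\vx,\vy,\va)|>\delta_0$ for at least one such pair is at most
\[
2\exp\!\Big(\log\#\mathcal{N}(T_1,\epsilon)+\log\#\mathcal{N}(T_2,\epsilon)-c_2 m\delta_0^2\Big).
\]
Choosing $c_1:=2/c_2$, the hypothesis $m\ge c_1\delta_0^{-2}\big(\log\#\mathcal{N}(T_1,\epsilon)+\log\#\mathcal{N}(T_2,\epsilon)\big)$ forces $c_2 m\delta_0^2\ge 2\big(\log\#\mathcal{N}(T_1,\epsilon)+\log\#\mathcal{N}(T_2,\epsilon)\big)$, hence the exponent above is at most $-\tfrac12 c_2 m\delta_0^2$ and the failure probability is at most $2\exp(-c_1'\delta_0^2 m)$ with $c_1':=c_2/2$. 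This is exactly the asserted bound, and on the complementary event the desired inequality holds simultaneously at all net points. In the degenerate case $T_2=\emptyset$ there is no union over $\vy$, so the same chain works with $\log\#\mathcal{N}(T_2,\epsilon)$ deleted, giving the reduced requirement $m\ge c_1\delta_0^{-2}\log\#\mathcal{N}(T_1,\epsilon)$.

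I do not expect a genuine obstacle here, which is presumably why the paper omits the proof. The only points needing care are (i) collapsing the $\min(\delta_0^2/C^2,\delta_0/C)$ in Bernstein's bound to the clean rate $\delta_0^2$, which uses $\delta_0<1$, and (ii) keeping track that all the constants $c_2,c_1,c_1'$ depend only on the absolute constants $c_0$ and $C$, and in particular not on the sets $T_1,T_2$, the ambient dimension $n$, or $\delta_0$. Note also that no continuity or Lipschitz property of $f$ is invoked: the conclusion is stated only at the net points, so $\epsilon$ enters solely through the cardinalities $\#\mathcal{N}(T_i,\epsilon)$ and no approximation step is required.
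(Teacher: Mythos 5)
Your argument is correct and is exactly the route the paper intends: the authors state that Lemma \ref{lem: cover_union} "follows directly" from Theorem \ref{thm: Bernstein} together with the $\epsilon$-net notion and omit the proof, and your Bernstein-plus-union-bound derivation (including the observation that $\delta_0<1$ collapses the $\min$ in Bernstein's bound to the $\delta_0^2$ rate) fills in precisely that routine gap.
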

		
		Next, we recall the concept of Gaussian width in Definition \ref{eqn: gaussian_width} and present Sudakov's inequality:
		\begin{lemma}\cite[Corollary 7.4.3]{vershynin}
		\label{lem: sudakov}
		Let $T\subseteq \mathbb{R}^{n}$. Then, for any $\epsilon>0$, we have 
		\[
		\omega(T)\geq c_2\cdot \epsilon\cdot \sqrt{\log(\#\mathcal{N}(T,\epsilon))},
		\]
		where $c_2$ is an absolute constant, and $\omega(T)$ denotes the Gaussian width of $T$.
		\end{lemma}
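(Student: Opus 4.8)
This inequality is Sudakov's minoration and can be quoted verbatim from \cite[Corollary~7.4.3]{vershynin}; should one wish to prove it directly, the plan is to follow the classical route through a Gaussian comparison inequality, reducing the continuous set $T$ to a finite, well-separated point configuration.

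First I would pass from the covering number to a packing. We take $N:=\#\mathcal{N}(T,\epsilon)$ to be the \emph{minimal} cardinality of an $\epsilon$-net of $T$, which is the only reading for which the stated inequality can hold. Fix a maximal $\epsilon$-separated subset $\{\vx_1,\dots,\vx_M\}\subseteq T$: such a set is automatically an $\epsilon$-net, hence $M\ge N$, and its points are pairwise at distance $>\epsilon$. If $N\le 1$ the right-hand side is $0$ and the inequality holds because $\omega(T)\ge 0$ for every nonempty $T$ (bound $\sup_{\vx\in T}\langle\vg,\vx\rangle$ from below by $\langle\vg,\vx_0\rangle$ for a fixed $\vx_0\in T$ and take expectations); if $T$ admits arbitrarily large $\epsilon$-separated subsets, applying the finite bound below to each forces $\omega(T)=+\infty$ and the claim is again trivial. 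Since $\vx_i\in T$ we have $\omega(T)=\E\sup_{\vx\in T}\langle\vg,\vx\rangle\ge\E\max_{1\le i\le M}\langle\vg,\vx_i\rangle$, so it suffices to lower-bound this expected maximum.

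Next I would set up and apply the comparison. Let $X_i:=\langle\vg,\vx_i\rangle$, a centered Gaussian family indexed by $\{1,\dots,M\}$ satisfying $\E(X_i-X_j)^2=\|\vx_i-\vx_j\|_2^2\ge\epsilon^2$ for $i\ne j$. Introduce an independent family $Y_i:=(\epsilon/\sqrt2)\,h_i$ with $h_1,\dots,h_M$ i.i.d.\ $\mathcal{N}(0,1)$, for which $\E(Y_i-Y_j)^2=\epsilon^2\le\E(X_i-X_j)^2$ when $i\ne j$. By the Sudakov--Fernique comparison inequality (see \cite{vershynin}), domination of increments gives $\E\max_i X_i\ge\E\max_i Y_i=(\epsilon/\sqrt2)\,\E\max_{i\le M}h_i$. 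Finally, the elementary bound $\E\max_{i\le M}h_i\ge c_3\sqrt{\log M}$ for $M\ge 2$ — obtained from $\mathbb{P}(\max_i h_i\ge t)=1-(1-\mathbb{P}(h_1\ge t))^M$ together with the Gaussian tail estimate evaluated at $t\asymp\sqrt{\log M}$ — chains with the previous displays to yield $\omega(T)\ge(c_3/\sqrt2)\,\epsilon\sqrt{\log M}\ge(c_3/\sqrt2)\,\epsilon\sqrt{\log N}$, i.e.\ the claim with $c_2=c_3/\sqrt2$.

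The main obstacle is the Sudakov--Fernique step: comparing expected maxima of two Gaussian processes from an inequality on their increments is the one genuinely nontrivial ingredient, proved by Gaussian interpolation (differentiating the expected soft-max along a smooth path between the two covariance structures), so if it may not be used as a black box one would have to reproduce that argument. Everything else — the packing/covering comparison and the maximum-of-i.i.d.-Gaussians estimate — is routine bookkeeping. The one subtlety worth flagging is that Sudakov--Fernique needs only the increment inequality, with no hypothesis on the individual variances $\E X_i^2=\|\vx_i\|_2^2$; this is precisely why the argument never has to control the norms of the points $\vx_i$, only their pairwise distances.
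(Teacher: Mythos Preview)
Your proposal is correct: the argument via packing, Sudakov--Fernique comparison, and the expected maximum of i.i.d.\ Gaussians is the standard proof of Sudakov's minoration, and is precisely the route taken in the reference \cite[Corollary~7.4.3]{vershynin} that the paper cites. The paper itself does not prove this lemma at all---it is stated purely as a citation---so there is nothing to compare beyond noting that your sketch reproduces the cited source's argument.
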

Lastly, we present a lemma demonstrating the Restricted Isometry Property (RIP) for Gaussian random matrices, which builds upon the concept of Gaussian width. This result follows directly from a modification of Theorem 9.2 in \cite{CS} and the application of Sudakov's minoration inequality as stated in Lemma \ref{lem: sudakov}. It is worth noting that a similar result can be observed in Lemma 6.1 of \cite{relu_gradient}.
	\begin{lemma}\label{lem: RIP}
	Let $S\subseteq \mathbb{R}^n$ be a cone.  Consider a standard Gaussian random matrix $\boldsymbol{A}\in \mathbb{R}^{m\times n}$. For any fixed $\delta \in (0, 1)$, there exists a positive absolute constant $c_3$ such that, with probability at least $1-2\exp(-c_3 \delta^2 m)$, the following inequality holds:
\begin{equation} (1-\delta)\cdot\|\vx-\vy\|_2^2 \leq \frac{1}{m}\cdot \|\boldsymbol{A}(\vx-\vy)\|_2^2 \leq (1+\delta)\cdot\|\vx-\vy\|_2^2, \end{equation}
for all $\vx,\vy\in S$, provided that
\begin{equation} m \gtrsim \delta^{-4}\cdot \omega^2((S-S)\cap \mathbb{B}^{n}). \end{equation}
Here, $\omega(\cdot)$ denotes the Gaussian width, $\mathbb{B}^{n}$ is the unit ball in $\mathbb{R}^n$, and $S-S$ is defined in (\ref{eqn: S-S}). 
		\end{lemma}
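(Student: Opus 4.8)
The plan is to read the two-sided inequality as a single uniform statement about how much the Gaussian map distorts Euclidean norms on the cone $T:=S-S$, and to prove it by the usual $\epsilon$-net plus concentration scheme. Observe first that $T$ is again a cone, is symmetric since $-(\vx-\vy)=\vy-\vx\in T$, and that both sides of the asserted bound are homogeneous of degree two in $\vu:=\vx-\vy$. Consequently it suffices to establish, with the stated probability,
\[
\mu:=\sup_{\vu\in T\cap\mathbb{S}^{n-1}}\Bigl|\tfrac1m\|\vA\vu\|_2^2-1\Bigr|\le\delta,
\]
since every $\vx\neq\vy$ in $S$ reduces to a point of $T\cap\mathbb{S}^{n-1}$ after dividing by $\|\vx-\vy\|_2$; and because $T$ is a cone, $\mu$ equals as well $\sup_{\vu\in K}\bigl|\tfrac1m\|\vA\vu\|_2^2-\|\vu\|_2^2\bigr|$ with $K:=(S-S)\cap\mathbb{B}^n$.

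The execution is then standard. Fix $\epsilon$ equal to a small absolute multiple of $\delta$ and take an $\epsilon$-net $\mathcal{N}=\mathcal{N}(K,\epsilon)$ of $K$ in the sense of Definition \ref{def:epsilon-net}. Sudakov's minoration (Lemma \ref{lem: sudakov}) turns its metric entropy into Gaussian width, $\log\#\mathcal{N}\lesssim\epsilon^{-2}\omega^2(K)\asymp\delta^{-2}\omega^2\!\bigl((S-S)\cap\mathbb{B}^n\bigr)$. For a fixed unit vector $\vu$ the variable $\|\vA\vu\|_2^2=\sum_{i=1}^m\langle\va_i,\vu\rangle^2$ is a sum of $m$ i.i.d.\ $\chi^2_1$ random variables whose sub-exponential norm is an absolute constant, so Bernstein's inequality (Theorem \ref{thm: Bernstein}) gives $\mathbb{P}\bigl(\bigl|\tfrac1m\|\vA\vu\|_2^2-1\bigr|>\delta/4\bigr)\le 2\exp(-c_0 m\delta^2)$, the factor $\min(\delta,\delta^2)$ in Bernstein becoming $\delta^2$ as $\delta<1$. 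A union bound over $\mathcal{N}$ shows that the event $\bigl|\tfrac1m\|\vA\vu\|_2^2-\|\vu\|_2^2\bigr|\le\delta/4$ for all $\vu\in\mathcal{N}$ fails with probability at most $2\exp\bigl(C\delta^{-2}\omega^2-c_0 m\delta^2\bigr)$, which is at most $2\exp(-c_3\delta^2 m)$ as soon as $m\ge C_3\delta^{-4}\omega^2\!\bigl((S-S)\cap\mathbb{B}^n\bigr)$ for suitable absolute constants $c_3,C_3$. This bookkeeping is where the $\delta^{-4}$ in the hypothesis comes from: one power $\delta^{-2}$ from the cardinality of the net and one from turning the $\delta^2$-rate in Bernstein into room for the union bound.

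The delicate step, and the one I expect to be the main obstacle, is passing from $\mathcal{N}$ to all of $K$, i.e.\ bounding $\mu$ itself. The naive estimate that $\vu\mapsto\tfrac1m\|\vA\vu\|_2^2$ is Lipschitz on $K$ is useless, since its modulus involves $\|\vA\|_{\mathrm{op}}\asymp\sqrt n$, which is uncontrolled precisely in the over-parametrised regime $n\gg m$ that a Gaussian-width bound is designed for. Instead one approximates $\vu\in T\cap\mathbb{S}^{n-1}$ by $\vu'\in\mathcal{N}$ with $\|\vu-\vu'\|_2\le\epsilon$ and writes $\tfrac1m\|\vA\vu\|_2^2-\tfrac1m\|\vA\vu'\|_2^2=\tfrac1m\langle\vA(\vu-\vu'),\vA(\vu+\vu')\rangle$; when $T$ is a subspace the vectors $\vu\pm\vu'$ lie again in $T$, so Cauchy--Schwarz combined with the $\mu$-bound applied to them gives $\mu\le\tfrac\delta4+c\epsilon(1+\mu)$, which forces $\mu\le\delta$ once $\epsilon$ is a small enough absolute multiple of $\delta$ (the same works for $\mathrm{cone}(\{k\text{-sparse}\})$ after replacing it by the comparable cone of $2k$-sparse vectors, so the case $S=\mathbb{R}^n$ needed for Theorem \ref{thm: Gaussian_bilip} and the sparse cones of Remark \ref{rem: key_rem} are covered directly). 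For a genuinely general cone $\vu\pm\vu'$ need not belong to $S-S$, and one closes the gap either by running the covering estimate once more against the difference set $(K-K)\cap\mathbb{B}^n$ --- whose Gaussian width is at most $2\,\omega\!\bigl((S-S)\cap\mathbb{B}^n\bigr)$ since $S-S$ is symmetric, so only absolute constants are lost --- or, equivalently, by replacing the single net with Dudley-type chaining; this refinement is exactly the ``modification of Theorem~9.2 in \cite{CS}'' alluded to in the text, and fed with the entropy bound of Lemma \ref{lem: sudakov} it yields Lemma \ref{lem: RIP} with $c_3$ an absolute constant.
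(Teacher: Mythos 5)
The paper itself does not prove Lemma \ref{lem: RIP}: it is stated as a direct consequence of ``a modification of Theorem 9.2 in \cite{CS}'' together with Sudakov's inequality (Lemma \ref{lem: sudakov}), with \cite[Lemma 6.1]{relu_gradient} cited as a comparable statement. So there is no detailed argument in the text against which to check yours line by line, only a citation.

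Your scheme --- reduce to $\sup_{\vu\in(S-S)\cap\mathbb{S}^{n-1}}|\tfrac1m\|\vA\vu\|_2^2-1|$, put a $\delta$-scale net on $(S-S)\cap\mathbb{B}^n$, control each net point by Bernstein applied to the sum of $\chi^2_1$ variables, and union bound --- is the expected route, and your accounting for where $\delta^{-4}$ comes from (one factor $\delta^{-2}$ from Sudakov's entropy estimate at scale $\epsilon\asymp\delta$, one from the Bernstein rate $\exp(-cm\delta^2)$ needing to absorb the entropy) matches the sampling complexity in the lemma exactly. The observation that a chaining argument would recover the sharper $\delta^{-2}$ is also correct, and consistent with the paper's remark that constants are not being optimized.

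The gap you flag in passing from the net to the whole set is genuine, and I would emphasize that your proposed patch does not by itself close it. For a general cone $S$ the difference $\vu-\vu'$ of a point and its net approximant does not lie in $S-S$, so the bootstrap $\mu\le\delta_0+c\epsilon(1+\mu)$ is unavailable; you correctly note this. But ``running the covering estimate once more against $(K-K)\cap\mathbb{B}^n$'' only pushes the problem one step: the new approximation error lives in $(K-K)-(K-K)$, then $(K-K)-(K-K)-(K-K)+(K-K)$, and so on, and without a multiscale (chaining) bookkeeping that makes the resulting geometric series converge there is no termination. This chaining step is precisely the substance of the cited theorem in \cite{CS}; your parenthetical ``or, equivalently, by replacing the single net with Dudley-type chaining'' names the right tool, but it is not equivalent to a single extra covering pass and it is where the actual work lies. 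Your bootstrap \emph{does} close when $S-S$ is a subspace, or when $S$ is a cone of $k$-sparse vectors and you enlarge to $2k$-sparse vectors, and those two cases cover the applications made of this lemma elsewhere in the paper (namely $S=\mathbb{R}^n$ in Theorem \ref{thm: Gaussian_bilip} and $S=\mathrm{cone}(T)$ with $T$ the sparse ball in Remark \ref{rem: key_rem}), so the sketch is defensible in context; it just is not a complete self-contained proof for an arbitrary cone.
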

		
Having established these fundamental results, we are now prepared to proceed with the proof of Theorem \ref{thm: large}.
		\begin{proof}[Proof of Theorem \ref{thm: large}]
		Initially, if $\boldsymbol{x}=\boldsymbol{y}=\boldsymbol{0}$, the conclusion holds trivially. We now focus on the case where either $\boldsymbol{x}\neq \boldsymbol{0}$ or $\boldsymbol{y}\neq \boldsymbol{0}$. Without loss of generality, due to the symmetry between $\boldsymbol{x}$ and $\boldsymbol{y}$, we can assume that $\|\boldsymbol{y}\|_2\leq \|\boldsymbol{x}\|_2$ and $\boldsymbol{x}\neq \boldsymbol{0}$. By normalizing both vectors by $\|\vx\|_2$, that is, replacing $\boldsymbol{y}$ with $\frac{\boldsymbol{y}}{\|\boldsymbol{x}\|_2}$ and $\boldsymbol{x}$ with $\frac{\boldsymbol{x}}{\|\boldsymbol{x}\|_2}$, the condition $\|\boldsymbol{x}-\boldsymbol{y}\|_2\geq C\cdot \max\{\|\boldsymbol{x}\|_2,\|\boldsymbol{y}\|_2\}$ can be  transformed into $\|\frac{\boldsymbol{x}}{\|\boldsymbol{x}\|_2}-\frac{\boldsymbol{y}}{\|\boldsymbol{x}\|_2}\|_2\geq C$, while the conclusion remains invariant.

Consequently, in the subsequent analysis, we need only consider the scenario where $\boldsymbol{x}\in S\cap \mathbb{S}^{n-1}$ and $\boldsymbol{y}\in S\cap\mathbb{B}^{n}$, subject to the condition $\|\boldsymbol{x}-\boldsymbol{y}\|_2\geq C$.

				 Denote $\psi:\mathbb{R}^{n}\times \mathbb{R}^{n}\rightarrow \mathbb{R}$ as 
				\[
				\psi(\vx,\vy)=\phi(\vx,\vy)\cdot \|\vx-\vy\|_2^2,
				\]
				 where $\phi(\vx,\vy)$ is defined in (\ref{eqn: phi}). 
				 
				 Let $C<1$. We claim that, for any fixed positive constant $\delta_0$ such that $\sqrt{\delta_0}<C^2/120$, with probability at least $1-4\exp\left(-\widetilde{c}_0\delta_0^2m\right)$, the following inequalities hold:
\begin{equation} 
\label{eqn: final_temp1}
\frac{1}{2}\|\boldsymbol{x}-\boldsymbol{y}\|_2^2-\psi(\vx,\vy)- 60\sqrt{\delta_0} \leq \frac{1}{m}\cdot \|\sigma(\boldsymbol{A}\boldsymbol{x})-\sigma(\boldsymbol{A}\boldsymbol{y})\|_2^2 \leq \frac{1}{2} \|\boldsymbol{x}-\boldsymbol{y}\|_2^2-\psi(\vx,\vy)+60\sqrt{ \delta_0},
  \end{equation}
for all $\boldsymbol{x} \in S \cap \mathbb{S}^{n-1}$ and $\boldsymbol{y} \in S \cap \mathbb{B}^{n}$ satisfying $\|\boldsymbol{x}-\boldsymbol{y}\|_2 \geq C$, provided that
\begin{equation} 
\label{eqn: sampling} 
m \geq \widetilde{c}_1 \cdot {\delta_0^{-4}}\cdot {\omega^2((S-S) \cap \mathbb{B}^{n})}.
 \end{equation}
 Here $\widetilde{c}_0$ and $\widetilde{c}_1$ are positive absolute constants. 
 
Let $\delta$ and $\delta_0$ be chosen such that  $\delta \leq C^2/2$ and  $\sqrt{\delta_0}=\delta^2/60$. It follows immediately that $\sqrt{\delta_0}<C^2/120$, which satisfies the condition for (\ref{eqn: final_temp1}).  Consequently, (\ref{eqn: final_temp1}) can be rewritten as:
\begin{equation} \label{eqn: final_temp2}
\Big( \frac{1}{2}-\phi(\vx,\vy)\Big)\cdot \|\boldsymbol{x}-\boldsymbol{y}\|_2^2- \delta^2 \leq \frac{1}{m}\cdot\|\sigma(\boldsymbol{A}\boldsymbol{x})-\sigma(\boldsymbol{A}\boldsymbol{y})\|_2^2 \leq \Big( \frac{1}{2}-\phi(\vx,\vy)\Big)\cdot \|\boldsymbol{x}-\boldsymbol{y}\|_2^2 + \delta^2, \end{equation}
with probability at least $1-4\exp\left(-{c}_{\delta}\cdot m\right)$. Besides,  the sampling complexity becomes $m \geq C_{\delta}\cdot \omega^2((S-S) \cap \mathbb{B}^{n})$. Here  ${c}_{\delta}=\widetilde{c}_0\cdot \delta^4/(60^4)$ and ${C}_{\delta}=\widetilde{c}_1\cdot (60)^{16}/\delta^{16}$.

Moreover, given that $\|\boldsymbol{x}-\boldsymbol{y}\|_2 \geq C$ and $\delta \leq C^2/2$, we have:
\[
\delta^2 \leq \delta^2 \cdot \frac{\|\boldsymbol{x}-\boldsymbol{y}\|_2^2}{C^2} \leq \delta \cdot \|\boldsymbol{x}-\boldsymbol{y}\|_2^2.
\]
Thus, \eqref{eqn: final_temp2} can be further simplified to:
\[
\left(\frac{1}{2}-\phi(\vx,\vy)-\delta\right) \cdot \|\boldsymbol{x}-\boldsymbol{y}\|_2^2 \leq \frac{1}{m}\cdot\|\sigma(\boldsymbol{A}\boldsymbol{x})-\sigma(\boldsymbol{A}\boldsymbol{y})\|_2^2 \leq \left(\frac{1}{2}-\phi(\vx,\vy)+\delta\right) \cdot \|\boldsymbol{x}-\boldsymbol{y}\|_2^2.
\]
This establishes the validity of the conclusion in (\ref{eqn: conc_C}).
 
To prove \eqref{eqn: final_temp1} under the sampling complexity in \eqref{eqn: sampling}, we denote the  sets:
\begin{equation} \label{eqn: T} T_1:=S\cap \mathbb{S}^{n-1}\qquad \text{and}\qquad T_2:=S\cap \mathbb{B}^{n}. \end{equation}
Let $\mathcal{N}(T_1,\delta_0)$ and $\mathcal{N}(T_2,\delta_0)$ denote the $\delta_0$-nets of $T_1$ and $T_2$, respectively. 
The subsequent proof is divided into three steps:
 
 \textbf{Step 1: Demonstrate the concentration behavior for all $(\vx_{\delta_0},\vy_{\delta_0})\in \mathcal{N}(T_1,\delta_0)\times \mathcal{N}(T_2,\delta_0)$.}
 
Let us define $f: T_1 \times T_2 \times \mathbb{R}^n \rightarrow \mathbb{R}$ as:
\begin{equation} \label{eqn: f_1} f(\vx,\vy,\va):=(\sigma(\langle \va,\vx\rangle)-\sigma(\langle \va,\vy\rangle))^2. \end{equation}
Through direct computation, we can establish that for any $(\vx,\vy) \in T_1 \times T_2$, $\|f(\vx,\vy,\va)\|_{\psi_1} \leq C$ for some absolute constant $C$, where  $\va\sim \mathcal{N}(\boldsymbol{0},\boldsymbol{I}_n)$. 

 Take $\va_i\in \mathbb{R}^{n}$ as the $i$-th row of matrix $\vA$, for $i=1,\ldots,m$. Consequently, by applying Lemma \ref{lem: cover_union} to the sets $T_1$ and $T_2$ in \eqref{eqn: T} and the function $f$  in \eqref{eqn: f_1}, we can assert that for any $(\vx_{\delta_0},\vy_{\delta_0}) \in \mathcal{N}(T_1,\delta_0) \times \mathcal{N}(T_2,\delta_0)$, with probability at least
\begin{equation} \label{eqn: c0} 1-2\exp(-c'_2\delta_0^2m), \end{equation}
the following concentration inequality holds:
\begin{equation} \label{eqn: f_concentration}
 -\delta_0+\sum_{i=1}^{m}\mathbb{E}f(\vx_{\delta_0},\vy_{\delta_0},\va_i) \leq \frac{1}{m}\cdot \sum_{i=1}^{m}f(\vx_{\delta_0},\vy_{\delta_0},\va_i) \leq \sum_{i=1}^{m}\mathbb{E}f(\vx_{\delta_0},\vy_{\delta_0},\va_i)+\delta_0, 
 \end{equation}
uniformly for all pairs $(\vx_{\delta_0},\vy_{\delta_0}) \in \mathcal{N}(T_1,\delta_0) \times \mathcal{N}(T_2,\delta_0)$, provided that 
\begin{equation}
\label{eqn: sample_temp}
m \geq c_2\cdot  \delta_0^{-2}\cdot \big(\log(\#\mathcal{N}(T_1,\delta_0))+\log(\#\mathcal{N}(T_2,\delta_0))\big).
\end{equation}
 Here, $c_2$ and $c_2'$ are positive absolute constants.

In conjunction with the expectation bound established in \cite[Equality (7)]{ref2}, that is,
		 \begin{equation}\label{eqn: expectation}
 \mathbb{E} f(\vx_{\delta_0},\vy_{\delta_0},\va)= \frac{1}{2}\|\boldsymbol{x}_{{\delta_0}}-\boldsymbol{y}_{{\delta_0}}\|_2^2-\psi(\vx_{\delta_0},\vy_{\delta_0}),
	\end{equation}
we can reformulate (\ref{eqn: f_concentration}) as:
		\begin{equation}\label{eqn: X_epsilon_concen}
		 \frac{1}{2}\|\boldsymbol{x}_{{\delta_0}}-\boldsymbol{y}_{{\delta_0}}\|_2^2-\psi(\vx_{\delta_0},\vy_{\delta_0})- \delta_0\leq \frac{1}{m}\cdot \|\sigma(\boldsymbol{A}\boldsymbol{x}_{{\delta_0}})-\sigma(\boldsymbol{A}\boldsymbol{y}_{{\delta_0}})\|_2^2\leq   \frac{1}{2}\|\boldsymbol{x}_{{\delta_0}}-\boldsymbol{y}_{{\delta_0}}\|_2^2-\psi(\vx_{\delta_0},\vy_{\delta_0})+ \delta_0,
		\end{equation}
	for all $(\vx_{{\delta_0}},\vy_{{\delta_0}})\in \mathcal{N}(T_1,\delta_0) \times \mathcal{N}(T_2,\delta_0)$. The sampling complexity in (\ref{eqn: sample_temp}) becomes  $m\gtrsim \delta_0^{-4}\cdot \omega^2((S-S)\cap \mathbb{B}^n)$, which is satisfied due to the following chain of inequalities:
\[
\begin{aligned}
\omega((S-S)\cap \mathbb{B}^{n})\geq \omega(S\cap \mathbb{B}^{n})\overset{(a)}\gtrsim& \delta_0\cdot\sqrt{\log(\#\mathcal{N}(S\cap \mathbb{B}^n,{\delta_0}))}\\
\overset{(b)}\gtrsim& \delta_0\cdot \sqrt{\log(\#\mathcal{N}(T_1,{\delta_0}))+\log(\#\mathcal{N}(T_2,{\delta_0}))}.
\end{aligned}
\]
	Here, $(a)$ is based on {the Sudakov's inequality} as stated in Lemma \ref{lem: sudakov}, and $(b)$ is a consequence of the following bound:
	\[
	\mathcal{N}(T_1,\delta_0) \times \mathcal{N}(T_2,\delta_0)\leq \mathcal{N}(S\cap \mathbb{B}^{n},{\delta_0})\times \mathcal{N}(S\cap \mathbb{B}^{n},{\delta_0}).
	\]
	
	\textbf{Step 2: Prove the upper bound in  (\ref{eqn: final_temp1}).} 
	For any $\vx\in T_1$ and $\vy\in T_2$, we can take corrsesponding  $\vx_{\delta_0} \in \mathcal{N}(T_1,\delta_0)$ and $\vy_{\delta_0} \in \mathcal{N}(T_2,\delta_0)$ such that $\|\vx-\vx_{\delta_0}\|_2\leq \delta_0$ and $\|\vy-\vy_{\delta_0}\|_2\leq \delta_0$, which leads to:
\begin{equation} \label{eqn: epsilon_net} \|\vx-\vx_{\delta_0}\|_2^2+ \|\vy-\vy_{\delta_0}\|_2^2\leq 2\delta_0^2. \end{equation}
We claim that:
	\begin{equation}\label{eqn: claim1}
 \|\vx_{\delta_0}-\vy_{\delta_0}\|_2^2\leq \|\vx-\vy\|_2^2+12\delta_0,
	\end{equation}
	and 
		\begin{equation}\label{eqn: claim11}
| \psi(\vx_{\delta_0},\vy_{\delta_0})- \psi(\vx,\vy)|\leq 10\sqrt{\delta_0}. 
	\end{equation}
The proofs of them will be provided later.

Applying Lemma \ref{lem: RIP}, we can assert that for any fixed positive constant $\delta_0<C^2/120$, with probability at least $1-4\exp(-\widetilde{c}_2\delta_0^2m)$, where $\widetilde{c}_2$ is also an absolute constant, the following inequalities hold:
	\begin{equation}
	\label{eqn: upper_temp1}
	\begin{aligned}
	&\frac{1}{\sqrt{m}}\cdot \|\boldsymbol{A}(\boldsymbol{x}-\boldsymbol{x}_{{\delta_0}})\|_2 + \frac{1}{\sqrt{m}}\cdot \|\boldsymbol{A}(\boldsymbol{y}_{{\delta_0}}-\boldsymbol{y})\|_2\\
	\leq &\sqrt{1+\delta_0}\cdot (\|\vx-\vx_{\delta_0}\|_2+\|\vy-\vy_{\delta_0}\|_2)\\
	\leq & \sqrt{2(1+\delta_0)}\cdot\sqrt{\|\vx-\vx_{\delta_0}\|_2^2+\|\vy-\vy_{\delta_0}\|_2^2}\leq 3\delta_0,
	\end{aligned}
	\end{equation}
	provided that $m\gtrsim \delta_0^{-4}\cdot \omega^2((S-S)\cap \mathbb{B}^n)$. 
	The final inequality in the above expression is derived from the bound:
	\[
	\sqrt{2(1+\delta_0)}\cdot\sqrt{\|\vx-\vx_{\delta_0}\|_2^2+\|\vy-\vy_{\delta_0}\|_2^2}\leq 3\delta_0,
	\]
	given that $\delta_0\leq 1$.
	
	Consequently,  the upper bound for $\frac{1}{{m}}\cdot \|\sigma(\boldsymbol{A}\boldsymbol{x})-\sigma(\boldsymbol{A}\boldsymbol{y})\|_2^2$ can be derived:
\begin{equation}
\label{eqn: upper_c_large}
			\begin{split}
				& \frac{1}{\sqrt{m}}\cdot \|\sigma(\boldsymbol{A}\boldsymbol{x})-\sigma(\boldsymbol{A}\boldsymbol{y})\|_2 \\
				\leq& \frac{1}{\sqrt{m}}\cdot \|\sigma(\boldsymbol{A}\boldsymbol{x})-\sigma(\boldsymbol{A}\boldsymbol{x}_{{\delta_0}})\|_2 + \frac{1}{\sqrt{m}}\cdot \|\sigma(\boldsymbol{A}\boldsymbol{y}_{{\delta_0}})-\sigma(\boldsymbol{A}\boldsymbol{y})\|_2+\frac{1}{\sqrt{m}}\cdot \|\sigma(\boldsymbol{A}\boldsymbol{x}_{{\delta_0}})-\sigma(\boldsymbol{A}\boldsymbol{y}_{{\delta_0}})\|_2\\
				\overset{(c)}\leq &\frac{1}{\sqrt{m}}\cdot \|\boldsymbol{A}(\boldsymbol{x}-\boldsymbol{x}_{{\delta_0}})\|_2 + \frac{1}{\sqrt{m}}\cdot \|\boldsymbol{A}(\boldsymbol{y}_{{\delta_0}}-\boldsymbol{y})\|_2+\frac{1}{\sqrt{m}}\cdot \|\sigma(\boldsymbol{A}\boldsymbol{x}_{{\delta_0}})-\sigma(\boldsymbol{A}\boldsymbol{y}_{{\delta_0}})\|_2\\
				\overset{(d)}{\leq}& 3\delta_0+\sqrt{ \frac{1}{2}\|\boldsymbol{x}_{{\delta_0}}-\boldsymbol{y}_{{\delta_0}}\|_2^2-\psi(\vx_{\delta_0},\vy_{\delta_0})+\delta_0},
			\end{split}
\end{equation}
where $(c)$ follows from $\|\sigma(\vz_1)-\sigma(\vz_2)\|_2\leq \|\vz_1-\vz_2\|_2$ for all $\vz_1,\vz_2\in \mathbb{R}^{m}$, and $(d)$ is based on (\ref{eqn: upper_temp1}) and (\ref{eqn: X_epsilon_concen}).

Besides, 	substituting (\ref{eqn: claim1}) and (\ref{eqn: claim11}) into (\ref{eqn: upper_c_large}), we obtain:
 \begin{equation}\label{large_upper}
 \begin{split}
 \frac{1}{{m}}\cdot \|\sigma(\boldsymbol{A}\boldsymbol{x})-\sigma(\boldsymbol{A}\boldsymbol{y})\|_2^2\leq &\left(3\delta_0+\sqrt{\frac{1}{2}\|\boldsymbol{x}_{{\delta_0}}-\boldsymbol{y}_{{\delta_0}}\|_2^2-\psi(\vx_{\delta_0},\vy_{\delta_0})+\delta_0}\right)^2\\
 \leq & \left(3 {\delta_0}+\sqrt{\frac{1}{2}\|\boldsymbol{x}-\boldsymbol{y}\|_2^2-\psi(\vx,\vy)+7{\delta_0}+10\sqrt{\delta_0}}\right)^2\\
 \leq & \frac{1}{2}\|\boldsymbol{x}-\boldsymbol{y}\|_2^2-\psi(\vx,\vy)+7\delta_0+10\sqrt{\delta_0}+30\delta_0+9\delta_0^2\\
 \leq &\frac{1}{2}\|\vx-\vy\|_2^2-\psi(\vx,\vy)+60\sqrt{\delta_0}.
 \end{split}
 \end{equation}
 The third line above follows from: 
 \[
\sqrt{\frac{1}{2}\|\vx-\vy\|_2^2-\psi(\vx,\vy)+7\delta_0+10\sqrt{\delta_0}}\leq 5,
 \]
 as $\psi(\vx,\vy)\geq 0$, $\|\vx-\vy\|_2\leq 2$ and $\delta_0\leq 1$. 
Thus, the upper bound in (\ref{eqn: final_temp1}) is established.

To complete the proof, we now demonstrate (\ref{eqn: claim1}) and (\ref{eqn: claim11}). By direct calculations, we have:
 \[
 \|\vx-\vx_{{\delta_0}}\|_2+\|\vy-\vy_{{\delta_0}}\|_2\leq 2\delta_0.
 \] 
 Therefore, 
\begin{equation}
\label{eqn: temp_xdelta_ydelta}
\begin{aligned}
 \|\boldsymbol{x}_{{\delta_0}}-\boldsymbol{y}_{{\delta_0}}\|_2^2= &\|\vx_{\delta_0}-\vx+\vy-\vy_{\delta_0}+\vx-\vy\|_2^2 \\
 \leq & (\|\vx-\vy\|_2+\|\vx-\vx_{{\delta_0}}\|_2+\|\vy-\vy_{{\delta_0}}\|_2)^2
 \leq (\|\boldsymbol{x}-\boldsymbol{y}\|_2+2{\delta_0})^2\\
 \leq & \|\boldsymbol{x}-\boldsymbol{y}\|_2^2+12{\delta_0}.
 \end{aligned}
 \end{equation}
The last line above also utilizes $\|\vx-\vy\|_2\leq 2$ and $\delta_0\leq 1$. This completes the proof of (\ref{eqn: claim1}).
 
We now proceed to demonstrate the result stated in (\ref{eqn: claim11}). Our analysis will be divided into the following two subcases:

				\textbf{Subcase 1: $\|\vy\|_2\leq \sqrt{\delta_0}$.}
				
				Given that
				\[
				\|\vy_{\delta_0}\|_2\leq \|\vy-\vy_{\delta_0}\|_2+\|\vy\|_2\leq \delta_0+\sqrt{\delta_0}.
				\]
				and 
				\[
				\max_{\theta\in [0,\pi]}|\sin\theta-\theta\cdot \cos\theta|\leq \pi,
				\]
				coupled with the fact that $\|\boldsymbol{x}\|_2 = \|\boldsymbol{x}_{\delta_0}\|_2 = 1$, we can deduce that
				\[
				|\psi(\vx,\vy)|\leq \sqrt{\delta_0},\quad \text{and}\quad |\psi(\vx_{\delta_0},\vy_{\delta_0})|\leq \sqrt{\delta_0}+\delta_0.
				\]
				Consequently, by the triangle inequality, we obtain:
				\[
				|\psi(\vx_{\delta_0},\vy_{\delta_0})-\psi(\vx,\vy)|\leq  \delta_0+2\sqrt{\delta_0}\leq 10\sqrt{\delta_0}. 
				\]
				\textbf{Subcase 2: $\|\vy\|_2> \sqrt{\delta_0}$.}  
				
				Given that $\|\boldsymbol{x}\|_2=1$  and $\|\boldsymbol{y}\|_2\geq \sqrt{\delta_0}$, coupled with the conditions $\|\boldsymbol{x}-\boldsymbol{x}_{\delta_0}\|_2\leq \delta_0$ and $\|\boldsymbol{y}-\boldsymbol{y}_{\delta_0}\|_2\leq \delta_0$, we can deduce that
				\[
				\theta_{\vx,\vx_{\delta_0}}\leq \tan (\theta_{\vx,\vx_{\delta_0}})\leq \frac{\delta_0}{\sqrt{1-\delta_0^2}}\leq 2\delta_0\qquad \text{and}\qquad \theta_{\vy,\vy_{\delta_0}}\leq \tan (\theta_{\vy,\vy_{\delta_0}})\leq \frac{\delta_0}{\sqrt{\delta_0-\delta_0^2}}\leq 2\sqrt{\delta_0},
				\]
				as $\delta_0<1/120$. Consequently, 
				\[
				|\theta_{\vx,\vy}-\theta_{\vx_{\delta_0},\vy_{\delta_0}}|\leq 2\delta_0+2\sqrt{\delta_0},
				\]
				which implies
				\begin{equation}
				\label{eqn: sin_theta_temp}
				\begin{aligned}
				&\big|(\sin \theta_{\vx,\vy}-\theta_{\vx,\vy}\cdot \cos \theta_{\vx,\vy})-(\sin \theta_{\vx_{\delta_0},\vy_{\delta_0}}-\theta_{\vx_{\delta_0},\vy_{\delta_0}}\cdot \cos \theta_{\vx_{\delta_0},\vy_{\delta_0}})\big|\\
				\leq &2|\theta_{\vx,\vy}-\theta_{\vx_{\delta_0,\vy_{\delta_0}}}|\leq 4\delta_0+4\sqrt{\delta_0}, 
				\end{aligned}
				\end{equation}
				as the derivative of $f(\theta):=\sin \theta-\theta\cdot \cos \theta$ is $g(\theta):=\theta\sin \theta\in [0,2]$, for  any $\theta\in [0,\pi]$.
				
				Therefore, we can get that 
				\[
				\begin{aligned}
			&|\psi(\vx_{\delta_0},\vy_{\delta_0})-\psi(\vx,\vy)|\\
			\leq &\frac{\Big|(\sin \theta_{\vx,\vy}-\theta_{\vx,\vy}\cdot \cos \theta_{\vx,\vy})-(\sin \theta_{\vx_{\delta_0},\vy_{\delta_0}}-\theta_{\vx_{\delta_0},\vy_{\delta_0}}\cdot \cos \theta_{\vx_{\delta_0},\vy_{\delta_0}})\Big|}{\pi}\cdot \|\vx\|_2\cdot \|\vy\|_2\\
			 &+\frac{\big|\sin \theta_{\vx_{\delta_0},\vy_{\delta_0}}-\theta_{\vx_{\delta_0},\vy_{\delta_0}}\cdot \cos \theta_{\vx_{\delta_0},\vy_{\delta_0}}\big|}{\pi}\cdot \Big|\|\vx\|_2\cdot \|\vy\|_2-\|\vx_{\delta_0}\|_2\cdot \|\vy_{\delta_0}\|_2\Big|\\
			\leq & 5\delta_0+4\sqrt{\delta_0}\leq 10\sqrt{\delta_0}. 
			\end{aligned}
				\]
				The last line above follows from (\ref{eqn: sin_theta_temp}), $\big|\sin \theta_{\vx_{\delta_0},\vy_{\delta_0}}-\theta_{\vx_{\delta_0},\vy_{\delta_0}}\cdot \cos \theta_{\vx_{\delta_0},\vy_{\delta_0}}\big|\leq \pi$, and $\|\vy-\vy_{\delta_0}\|_2\leq \delta_0$. 
				 Thus, the proof of (\ref{eqn: claim11}) is complete.

  \textbf{Step 3: Prove the lower bound in  (\ref{eqn: final_temp1}).} 
 
 Similarly, we begin by asserting the following claim:
	\begin{equation}\label{eqn: claim2}
 \|\vx_{\delta_0}-\vy_{\delta_0}\|_2^2\geq \|\vx-\vy\|_2^2-12\delta_0,
	\end{equation}
	for any  $(\vx,\vy)\in T_1\times T_2$ with corresponding $(\vx_{\delta_0},\vy_{\delta_0})\in \mathcal{N}(T_1,\delta_0)\times \mathcal{N}(T_2,\delta_0)$ such that $\|\vx-\vx_{\delta_0}\|_2\leq \delta_0$ and $\|\vy-\vy_{\delta_0}\|_2\leq \delta_0$.
	Combined with (\ref{eqn: claim11}), we can directly deduce:
	\begin{equation}
	\label{eqn: lower_tempXY}
	\frac{1}{2}\|\vx_{\delta_0}-\vy_{\delta_0}\|_2^2-\psi(\vx_{\delta_0},\vy_{\delta_0})-\delta_0\geq \frac{1}{2}\|\vx-\vy\|_2^2-10\sqrt{\delta_0}-7\delta_0\geq \frac{1}{2}\|\vx-\vy\|_2^2-17\sqrt{\delta_0}>0,
	\end{equation}
	given that $\|\vx-\vy\|_2^2\geq C^2>120\sqrt{\delta_0}$. 
	
Combining (\ref{eqn: lower_tempXY}) with (\ref{eqn: upper_temp1}) and (\ref{eqn: X_epsilon_concen}), we obtain:
 \begin{equation}
 \label{eqn: lower_long}
			\begin{split}
				& \frac{1}{\sqrt{m}}\cdot \|\sigma(\boldsymbol{A}\boldsymbol{x})-\sigma(\boldsymbol{A}\boldsymbol{y})\|_2 \\
				\geq& -\frac{1}{\sqrt{m}}\cdot \|\sigma(\boldsymbol{A}\boldsymbol{x})-\sigma(\boldsymbol{A}\boldsymbol{x}_{{\delta_0}})\|_2 -\frac{1}{\sqrt{m}}\cdot \|\sigma(\boldsymbol{A}\boldsymbol{y}_{{\delta_0}})-\sigma(\boldsymbol{A}\boldsymbol{y})\|_2+\frac{1}{\sqrt{m}}\cdot \|\sigma(\boldsymbol{A}\boldsymbol{x}_{{\delta_0}})-\sigma(\boldsymbol{A}\boldsymbol{y}_{{\delta_0}})\|_2\\
				\geq &-\frac{1}{\sqrt{m}}\cdot \|\boldsymbol{A}(\boldsymbol{x}-\boldsymbol{x}_{{\delta_0}})\|_2 - \frac{1}{\sqrt{m}}\cdot \|\boldsymbol{A}(\boldsymbol{y}_{{\delta_0}}-\boldsymbol{y})\|_2+\frac{1}{\sqrt{m}}\cdot \|\sigma(\boldsymbol{A}\boldsymbol{x}_{{\delta_0}})-\sigma(\boldsymbol{A}\boldsymbol{y}_{{\delta_0}})\|_2\\
				\geq & -3\delta_0+\sqrt{ \frac{1}{2}\|\boldsymbol{x}_{{\delta_0}}-\boldsymbol{y}_{{\delta_0}}\|_2^2-\psi(\vx_{\delta_0},\vy_{\delta_0})- \delta_0}
				\geq -3\delta_0+\sqrt{ \frac{1}{2}\|\boldsymbol{x}-\boldsymbol{y}\|_2^2-\psi(\vx,\vy)- 17\sqrt{\delta_0}}.
			\end{split}
\end{equation}
Therefore, squaring both sides of (\ref{eqn: lower_long}), we derive: 
 \begin{equation}\label{large_lower}
 \begin{split}
 &\frac{1}{{m}}\|\sigma(\boldsymbol{A}\boldsymbol{x})-\sigma(\boldsymbol{A}\boldsymbol{y})\|_2^2\geq \left(-3\delta_0+\sqrt{ \frac{1}{2}\|\boldsymbol{x}-\boldsymbol{y}\|_2^2-\psi(\vx,\vy)- 17\sqrt{\delta_0}}\right)^2\\
\geq& \frac{1}{2}\|\boldsymbol{x}-\boldsymbol{y}\|_2^2-\psi(\vx,\vy)- 17\sqrt{\delta_0}-6\delta_0\cdot \sqrt{\frac{1}{2}\|\boldsymbol{x}-\boldsymbol{y}\|_2^2-\psi(\vx,\vy)- 17\sqrt{\delta_0}}\\
\geq &\frac{1}{2}\|\boldsymbol{x}-\boldsymbol{y}\|_2^2-\psi(\vx,\vy)- 17\sqrt{\delta_0}-6\sqrt{2}\delta_0\geq \frac{1}{2}\|\boldsymbol{x}-\boldsymbol{y}\|_2^2-\psi(\vx,\vy)- 60\sqrt{\delta_0}.
 \end{split}
 \end{equation}
The last line above follows from $\psi(\vx,\vy)\geq 0$ and  $\|\vx-\vy\|_2\leq 2$. Thus, the lower bound in (\ref{eqn: final_temp1}) is established.

To complete the proof, we demonstrate (\ref{eqn: claim2}). By direct calculations:
 \[
 \|\vx-\vx_{{\delta_0}}\|_2+\|\vy-\vy_{{\delta_0}}\|_2\leq \sqrt{2}\delta_0<2\delta_0<2\sqrt{\delta_0}<C^2<C<\|\vx-\vy\|_2.
 \] 
 Therefore, 
\begin{equation}
\label{eqn: temp_xdelta_ydelta}
\begin{aligned}
 \|\boldsymbol{x}_{{\delta_0}}-\boldsymbol{y}_{{\delta_0}}\|_2^2= &\|\vx_{\delta_0}-\vx+\vy-\vy_{\delta_0}+\vx-\vy\|_2^2 \\
 \geq & (\|\vx-\vy\|_2-\|\vx-\vx_{{\delta_0}}\|_2-\|\vy-\vy_{{\delta_0}}\|_2)^2
 \geq (\|\boldsymbol{x}-\boldsymbol{y}\|_2-2{\delta_0})^2\\
 \geq & \|\boldsymbol{x}-\boldsymbol{y}\|_2^2-4{\delta_0}\cdot\|\vx-\vy\|_2\geq \|\vx-\vy\|_2^2-8\delta_0\geq \|\vx-\vy\|_2^2-12\delta_0.
 \end{aligned}
 \end{equation}
The last line above also follows from $\|\vx-\vy\|_2\leq 2$. This completes the proof of (\ref{eqn: claim2}).
	    \end{proof}

%%%%%%%%%%%%%%%%%%%%%%%%%%%%%%%%%%%%%%%%%%%%%%%%%%%%%%%		
		
\section{Proof of Theorem \ref{thm: small}}\label{proof of small}
	\subsection{Auxiliary Expectation Lemmas}
	
	We begin by presenting several expectation results that are instrumental in establishing the main theorem. In the following discussion, $\mathbf{1}_{\{\cdot\}}$ denotes the indicator function, which takes the value 1 when the condition inside the curly braces is true and 0 otherwise.
	
	\begin{lemma} \label{lem: mid} Let $\boldsymbol{a}\in \mathbb{R}^{n}$ be a random vector  such that $\boldsymbol{a} \sim \mathcal{N}(\boldsymbol{0},\boldsymbol{I}_n)$. For any fixed unit vectors $\boldsymbol{x}, \boldsymbol{y} \in \mathbb{S}^{n-1}$ and a non-negative constant $\alpha \in (0,1)$, the following conclusions hold:
	\begin{equation}
	\label{eqn: E1}
	\mathbb{E}\left(\langle \boldsymbol{a}, \boldsymbol{y} \rangle^2\cdot  \boldsymbol{1}_{\{\langle \boldsymbol{a}, \boldsymbol{x} \rangle \geq 0\}} \right)= \frac{1}{2},
	\end{equation}
	and 
	\begin{equation}
	\label{eqn: E2}
	\mathbb{E}\left(\langle \boldsymbol{a}, \boldsymbol{y} \rangle^2\cdot \boldsymbol{1}_{\{\alpha\geq \langle \boldsymbol{a}, \boldsymbol{x} \rangle > 0\}}\right)=\mathbb{E}\left(\langle \boldsymbol{a}, \boldsymbol{y} \rangle^2 \cdot \boldsymbol{1}_{\{-\alpha\leq \langle \boldsymbol{a}, \boldsymbol{x} \rangle < 0\}}\right) \leq  2\alpha.
	\end{equation}

		\end{lemma}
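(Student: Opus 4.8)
The plan is to collapse the $n$-dimensional integral to a two-dimensional Gaussian computation using rotation invariance of $\boldsymbol{a}\sim\mathcal{N}(\boldsymbol{0},\boldsymbol{I}_n)$. I would write $\boldsymbol{y}=\cos\theta\cdot\boldsymbol{x}+\sin\theta\cdot\boldsymbol{x}_{\perp}$, where $\theta:=\theta_{\boldsymbol{x},\boldsymbol{y}}\in[0,\pi]$ and $\boldsymbol{x}_{\perp}\in\mathbb{S}^{n-1}$ is orthogonal to $\boldsymbol{x}$ (when $\boldsymbol{y}=\pm\boldsymbol{x}$ any such $\boldsymbol{x}_{\perp}$ works, since then $\sin\theta=0$). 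Setting $u:=\langle\boldsymbol{a},\boldsymbol{x}\rangle$ and $v:=\langle\boldsymbol{a},\boldsymbol{x}_{\perp}\rangle$, orthonormality of $\{\boldsymbol{x},\boldsymbol{x}_{\perp}\}$ together with Gaussianity of $\boldsymbol{a}$ makes $u$ and $v$ independent $\mathcal{N}(0,1)$ variables, and
\[
\langle\boldsymbol{a},\boldsymbol{y}\rangle^2=\cos^2\theta\cdot u^2+2\sin\theta\cos\theta\cdot uv+\sin^2\theta\cdot v^2 .
\]
Since every event appearing in the lemma depends on $\boldsymbol{a}$ only through $u$, all the expectations split into products of one-dimensional Gaussian moments.

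For \eqref{eqn: E1} I would evaluate the three resulting pieces separately: $\mathbb{E}\big[u^2\boldsymbol{1}_{\{u\ge 0\}}\big]=\tfrac12\mathbb{E}[u^2]=\tfrac12$ by symmetry of the density about $0$; $\mathbb{E}\big[uv\,\boldsymbol{1}_{\{u\ge 0\}}\big]=\mathbb{E}\big[u\,\boldsymbol{1}_{\{u\ge 0\}}\big]\cdot\mathbb{E}[v]=0$ by independence and $\mathbb{E}[v]=0$; and $\mathbb{E}\big[v^2\boldsymbol{1}_{\{u\ge 0\}}\big]=\mathbb{E}[v^2]\cdot\mathbb{P}(u\ge 0)=\tfrac12$, again by independence. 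Combining these with $\cos^2\theta+\sin^2\theta=1$ gives $\mathbb{E}\big(\langle\boldsymbol{a},\boldsymbol{y}\rangle^2\boldsymbol{1}_{\{u\ge 0\}}\big)=\tfrac12$, which is \eqref{eqn: E1}.

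For \eqref{eqn: E2}, the equality of the two expectations is the symmetry $-\boldsymbol{a}\overset{d}{=}\boldsymbol{a}$: this map fixes $\langle\boldsymbol{a},\boldsymbol{y}\rangle^2$ and sends the event $\{\alpha\ge\langle\boldsymbol{a},\boldsymbol{x}\rangle>0\}$ to $\{-\alpha\le\langle\boldsymbol{a},\boldsymbol{x}\rangle<0\}$. For the numerical bound, the same expansion again annihilates the cross term (by $\mathbb{E}[v]=0$), and on the event $\{\alpha\ge u>0\}$ one has $u^2\le\alpha^2<1$ since $\alpha\in(0,1)$; hence $\mathbb{E}\big[u^2\boldsymbol{1}_{\{\alpha\ge u>0\}}\big]\le\mathbb{P}(0<u\le\alpha)$ and $\mathbb{E}\big[v^2\boldsymbol{1}_{\{\alpha\ge u>0\}}\big]=\mathbb{P}(0<u\le\alpha)$, so that
\[
\mathbb{E}\big(\langle\boldsymbol{a},\boldsymbol{y}\rangle^2\boldsymbol{1}_{\{\alpha\ge\langle\boldsymbol{a},\boldsymbol{x}\rangle>0\}}\big)\le(\cos^2\theta+\sin^2\theta)\,\mathbb{P}(0<u\le\alpha)=\int_0^{\alpha}\frac{1}{\sqrt{2\pi}}e^{-t^2/2}\,dt\le\frac{\alpha}{\sqrt{2\pi}}\le 2\alpha .
\]

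I do not expect a genuine obstacle here; the only points needing care are verifying that $u$ and $v$ are truly independent standard normals (immediate from orthonormality of $\{\boldsymbol{x},\boldsymbol{x}_{\perp}\}$ and Gaussianity of $\boldsymbol{a}$) and accommodating the degenerate case $\boldsymbol{y}=\pm\boldsymbol{x}$ in the decomposition. The target constant $2\alpha$ in \eqref{eqn: E2} is very generous, so no sharp estimate of the truncated Gaussian integral is required.
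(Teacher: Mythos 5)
Your proof is correct. Both you and the paper reduce to the two-dimensional Gaussian setting via the decomposition $\boldsymbol{y}=\cos\theta\cdot\boldsymbol{x}+\sin\theta\cdot\boldsymbol{x}_\perp$ and the independent standard normals $u=\langle\boldsymbol{a},\boldsymbol{x}\rangle$, $v=\langle\boldsymbol{a},\boldsymbol{x}_\perp\rangle$, so the overall strategy is the same. The computations afterward, however, diverge. For \eqref{eqn: E1} the paper switches to polar coordinates and integrates $\rho^3\cos^2(\varphi-\theta)e^{-\rho^2/2}$, whereas you expand $\langle\boldsymbol{a},\boldsymbol{y}\rangle^2=\cos^2\theta\,u^2+2\sin\theta\cos\theta\,uv+\sin^2\theta\,v^2$ and let independence and $\mathbb{E}[v]=0$ kill the cross term and factor the remaining pieces; this is more transparent and avoids coordinate changes. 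For \eqref{eqn: E2} the distinction is more substantive: the paper applies the pointwise bound $(\cos\theta\,t_1+\sin\theta\,t_2)^2\le(|t_1|+|t_2|)^2$, which discards the exact cancellation of the cross term and forces three separate integral estimates (including a $|t_1t_2|$ term), arriving at $2\alpha/\sqrt{2\pi}+\sqrt{2/\pi}\,\alpha\le 2\alpha$. You instead keep the exact expansion, note the cross term vanishes in expectation, and obtain $\mathbb{P}(0<u\le\alpha)\le\alpha/\sqrt{2\pi}$ directly. Your argument is shorter and gives a strictly sharper constant ($1/\sqrt{2\pi}\approx 0.40$ versus the paper's $\approx 1.60$), though both suffice for the stated $2\alpha$. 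You also handle the degenerate cases $\boldsymbol{y}=\pm\boldsymbol{x}$ more cleanly: the paper isolates $\boldsymbol{x}=\boldsymbol{y}$ as Case~1 but its Case~2 decomposition is not literally well-defined when $\boldsymbol{y}=-\boldsymbol{x}$ (the normalizing vector $\boldsymbol{x}_\perp$ degenerates), whereas your remark that any $\boldsymbol{x}_\perp$ works when $\sin\theta=0$ covers both endpoints uniformly.
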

		\begin{proof}
		Leveraging the symmetry of the Gaussian distribution, we have:
		\[
		\mathbb{E}\left(\langle \boldsymbol{a}, \boldsymbol{y} \rangle^2 \cdot \boldsymbol{1}_{\{\alpha\geq \langle \boldsymbol{a}, \boldsymbol{x} \rangle > 0\}}\right)=\mathbb{E}\left(\langle \boldsymbol{a}, \boldsymbol{y} \rangle^2 \cdot \boldsymbol{1}_{\{-\alpha\leq \langle \boldsymbol{a}, \boldsymbol{x} \rangle < 0\}}\right).
		\]
		Thus, it suffices to consider the upper bound of $\mathbb{E}\left(\langle \boldsymbol{a}, \boldsymbol{y} \rangle^2 \cdot \mathbf{1}_{\{0 < \langle \boldsymbol{a}, \boldsymbol{x} \rangle \leq \alpha\}}\right)$ in (\ref{eqn: E2}).

		\textbf{Case 1: $\boldsymbol{x}=\boldsymbol{y}$.} Let $a:=\langle \va,\vx\rangle\sim \mathcal{N}(0,1)$. Then: 
		\[\mathbb{E}\left(\langle \boldsymbol{a}, \boldsymbol{y} \rangle^2\cdot  \boldsymbol{1}_{\{\langle \boldsymbol{a}, \boldsymbol{x} \rangle \geq 0\}}\right) =\mathbb{E}\left(\langle \boldsymbol{a}, \boldsymbol{x} \rangle^2 \cdot \boldsymbol{1}_{\{\langle \boldsymbol{a}, \boldsymbol{x} \rangle \geq 0\}}\right) =\mathbb{E}\left(a^2\cdot \boldsymbol{1}_{\{a\geq 0\}}\right)= \frac{1}{2},\]
		and 
		\[
\mathbb{E}\left(\langle \boldsymbol{a}, \boldsymbol{y} \rangle^2\cdot  \boldsymbol{1}_{\{\alpha\geq \langle \boldsymbol{a}, \boldsymbol{x} \rangle >0\}}\right)=\mathbb{E}\left(\langle \boldsymbol{a}, \boldsymbol{x} \rangle^2\cdot  \boldsymbol{1}_{\{\alpha\geq \langle \boldsymbol{a}, \boldsymbol{x} \rangle > 0\}}\right)=\mathbb{E}\left(a^2\cdot \boldsymbol{1}_{\{\alpha\geq a>0\}}\right)\leq \alpha^2\leq 2\alpha. 
\]
		\textbf{Case 2: $\boldsymbol{x}\neq \boldsymbol{y}$.} We can represent  $\boldsymbol{y}$  as:
		\[
		\boldsymbol{y}=\cos\theta \cdot \boldsymbol{x}+\sin\theta \cdot \boldsymbol{x}_{\perp},
		\]
		where $\cos\theta = \langle \boldsymbol{x}, \boldsymbol{y} \rangle$ and $\sin\theta = \|\boldsymbol{y}-\langle \boldsymbol{x}, \boldsymbol{y} \rangle\cdot \vx\|_2$ for some $\theta\in[0,\pi]$, and $\boldsymbol{x}_{\perp} = \frac{\boldsymbol{y}-\langle \boldsymbol{x}, \boldsymbol{y} \rangle \boldsymbol{x}}{\|\boldsymbol{y}-\langle \boldsymbol{x}, \boldsymbol{y} \rangle \boldsymbol{x}\|_2}$. Let  $a_1:=\langle \va,\vx\rangle$ and $a_2:=\langle \va,\vx_{\perp}\rangle$. Then:
		\[
		\langle \va,\vy \rangle = \cos\theta\cdot \langle \va,\vx\rangle+\sin\theta\cdot\langle \va,\vx_{\perp}\rangle=\cos \theta\cdot a_1+\sin\theta\cdot a_2.
		\]
Given the orthogonality of $a_1$ and $a_2$, as evidenced by $\mathbb{E}(a_1\cdot a_2)=0$, we can conclude that $a_1$ and $a_2$ are independently sampled from $\mathcal{N}(0,1)$.

For (\ref{eqn: E1}):  
				\begin{eqnarray*}
			\begin{split}
				&\mathbb{E}\left(\langle \boldsymbol{a}, \boldsymbol{y} \rangle^2 \cdot\boldsymbol{1}_{\{\langle \boldsymbol{a}, \boldsymbol{x} \rangle \geq 0\}}\right) = \mathbb{E}\left((\cos\theta \cdot a_1 +\sin\theta \cdot a_2)^2\cdot\boldsymbol{1}_{\{a_1 \geq 0\}}\right)\\
				 =& \frac{1}{2\pi}\int_{-\infty}^{+\infty}\int_{0}^{+\infty}(\cos\theta \cdot t_1 + \sin\theta \cdot t_2)^2\cdot \exp\Big(-\frac{t_1^2 + t_2^2}{2}\Big)dt_1dt_2\\			
				 	 \overset{(a)}=& \frac{1}{2\pi}\int_{0}^{+\infty}\int_{-\frac{\pi}{2}}^{\frac{\pi}{2}} \rho^3\cdot  (\cos\theta \cdot \cos\varphi+ \sin\theta \cdot \sin\varphi)^2\cdot \exp\Big(-\frac{\rho^2}{2}\Big) d\varphi d\rho\\
				 =&\frac{1}{2\pi}\int_{0}^{+\infty}\rho^3\cdot \exp\left(-\frac{\rho^2}{2}\right)d\rho\cdot \int_{-\frac{\pi}{2}}^{\frac{\pi}{2}}\cos^2(\varphi-\theta)d\varphi= \frac{1}{2}.
			\end{split}
		\end{eqnarray*}
		The equality $(a)$ is derived through polar coordinate transformation by setting $t_1 = \rho\cdot \cos\varphi$ and $t_2 = \rho\cdot \sin\varphi$, with $\varphi \in [-\frac{\pi}{2}, \frac{\pi}{2}]$ and $\rho \in [0, \infty)$.
		
For (\ref{eqn: E2}): 
		\[
			\begin{split}
						&\mathbb{E}\left(\langle \boldsymbol{a}, \boldsymbol{y} \rangle^2 \cdot \boldsymbol{1}_{\{\alpha\geq \langle \boldsymbol{a}, \boldsymbol{x} \rangle > 0\}}\right)=\mathbb{E}\left((\cos\theta \cdot a_1 +\sin\theta \cdot a_2)^2\cdot \boldsymbol{1}_{\{\alpha\geq a_1 > 0\}}\right)\\
				=& \frac{1}{2\pi}\int_{-\infty}^{\infty}\int_{0}^{\alpha}(\cos\theta \cdot t_1 + \sin\theta \cdot t_2)^2\cdot \exp\Big(-\frac{t_1^2 + t_2^2}{2}\Big)dt_1dt_2 \\
				\leq &\frac{1}{2\pi}\int_{-\infty}^{\infty}\int_{0}^{\alpha}( |t_1| + |t_2|)^2\cdot \exp\Big(-\frac{t_1^2 + t_2^2}{2}\Big)dt_1dt_2\\
				\leq &\frac{1}{2\pi}\int_{-\infty}^{\infty}\int_{0}^{\alpha}t_1^2\cdot \exp\Big(-\frac{t_1^2 + t_2^2}{2}\Big)dt_1dt_2+\frac{1}{2\pi}\int_{-\infty}^{\infty}\int_{0}^{\alpha}t_2^2\cdot \exp\Big(-\frac{t_1^2 + t_2^2}{2}\Big)dt_1dt_2\\
				&+\frac{1}{\pi}\int_{-\infty}^{\infty}\int_{0}^{\alpha}|t_1\cdot t_2| \cdot \exp\Big(-\frac{t_1^2 + t_2^2}{2}\Big)dt_1dt_2\\
				\overset{(a)}\leq & \frac{1}{\sqrt{2\pi}}\alpha+\frac{1}{\sqrt{2\pi}}\alpha+\frac{\sqrt{2}}{\sqrt{\pi}}\alpha\leq 2\alpha,
			\end{split}
		\]
		where (a) follows from
		\[\int_{0}^{\alpha}t_1^2\cdot\exp(-t_1^2/2)dt_1\leq \int_{0}^{\alpha}t_1\cdot\exp(-t_1^2/2)dt_1\leq \int_{0}^{\alpha}\exp(-t_1^2/2)dt_1\leq \alpha,\] 
		$\frac{1}{\sqrt{2\pi}}\int_{-\infty}^{\infty}|t_2|\cdot\exp(-t_2^2/2)dt_2\leq 1$, and
		\[
		\frac{1}{\sqrt{2\pi}}\int_{-\infty}^{\infty}\exp(-t_2^2/2)dt_2=\frac{1}{\sqrt{2\pi}}\int_{-\infty}^{\infty}t_2^2\cdot \exp(-t_2^2/2)dt_2=1.\]
		These conclude the proofs of both (\ref{eqn: E1}) and (\ref{eqn: E2}).
		\end{proof}

		\begin{lemma}\label{lem: beta}
		Let $\boldsymbol{a}\in \mathbb{R}^{n}$ be a random vector  such that $\boldsymbol{a} \sim \mathcal{N}(\boldsymbol{0},\boldsymbol{I}_n)$. For any fixed unit vector $\boldsymbol{y}\in \mathbb{S}^{n-1}$ and  constant $\beta\geq 10$, the following inequality holds:
		\begin{equation}
				\mathbb{E}\left(\langle \boldsymbol{a}, \boldsymbol{y} \rangle^2\cdot  \boldsymbol{1}_{\{|\langle \boldsymbol{a}, \boldsymbol{y} \rangle| \geq \beta\}}\right) \leq \exp(-{\beta^2}/{4}).
		\end{equation}
		\end{lemma}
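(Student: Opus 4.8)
The plan is to reduce the claim to a one-dimensional Gaussian tail estimate and then compute. Since $\|\vy\|_2 = 1$, the scalar $g := \langle \va, \vy\rangle$ is a standard Gaussian, so the statement becomes
\[
\mathbb{E}\big(g^2\cdot \boldsymbol{1}_{\{|g|\geq \beta\}}\big)\leq \exp(-\beta^2/4),\qquad g\sim\mathcal{N}(0,1),\ \beta\geq 10.
\]
This rotation-invariance step is the only place the hypothesis $\vy\in\mathbb{S}^{n-1}$ is used; everything afterwards is an elementary integral bound.

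Next I would evaluate the truncated second moment explicitly. By symmetry,
\[
\mathbb{E}\big(g^2\cdot \boldsymbol{1}_{\{|g|\geq \beta\}}\big)=\frac{2}{\sqrt{2\pi}}\int_{\beta}^{\infty}t^2 e^{-t^2/2}\,dt,
\]
and integration by parts, writing $t^2 e^{-t^2/2}=t\cdot\big(t e^{-t^2/2}\big)$, gives
\[
\int_{\beta}^{\infty}t^2 e^{-t^2/2}\,dt=\beta e^{-\beta^2/2}+\int_{\beta}^{\infty}e^{-t^2/2}\,dt.
\]
For the leftover tail integral I would use the crude bound $1\leq t/\beta$ on $[\beta,\infty)$, which yields $\int_{\beta}^{\infty}e^{-t^2/2}\,dt\leq \frac1\beta e^{-\beta^2/2}$. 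Combining these and using $1/\beta\leq\beta$ (valid since $\beta\geq 1$) gives
\[
\mathbb{E}\big(g^2\cdot \boldsymbol{1}_{\{|g|\geq \beta\}}\big)\leq \frac{2}{\sqrt{2\pi}}\Big(\beta+\frac1\beta\Big)e^{-\beta^2/2}\leq \frac{4\beta}{\sqrt{2\pi}}\,e^{-\beta^2/2}.
\]

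Finally I would verify that $\frac{4\beta}{\sqrt{2\pi}}\,e^{-\beta^2/2}\leq e^{-\beta^2/4}$, i.e. that $\frac{4\beta}{\sqrt{2\pi}}\leq e^{\beta^2/4}$. For $\beta\geq 4$ one has $\beta^2/4\geq\beta$, hence $e^{\beta^2/4}\geq e^{\beta}\geq \beta^2\geq 2\beta>\frac{4\beta}{\sqrt{2\pi}}$ (using $e^\beta\geq\beta^2$ and $\sqrt{2\pi}>2$), and a fortiori this holds for $\beta\geq 10$; this closes the argument. There is no genuine obstacle: the hypothesis $\beta\geq 10$ is far stronger than what the estimate requires, and the only mild point of care is keeping the chain of elementary inequalities ($1\leq t/\beta$, $1/\beta\leq\beta$, $\beta^2/4\geq\beta$, $e^\beta\geq\beta^2$) pointing the right way. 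If a sharper or cleaner constant were desired one could instead bound $\int_\beta^\infty e^{-t^2/2}\,dt\leq e^{-\beta^2/2}$ directly, but the version above already delivers the stated $\exp(-\beta^2/4)$ with room to spare.
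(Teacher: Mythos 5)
Your proof is correct. Both you and the paper reduce to the one-dimensional problem $\mathbb{E}\big(g^2\,\boldsymbol{1}_{\{|g|\ge\beta\}}\big)$ with $g\sim\mathcal{N}(0,1)$ and then bound the truncated second moment, but the estimates differ: the paper upper-bounds $t^2\le t^3$ on $[\beta,\infty)$ (valid since $\beta\ge 1$) so the integral $\int_\beta^\infty t^3 e^{-t^2/2}\,dt$ can be computed exactly by the substitution $u=t^2/2$, giving $(\beta^2+2)e^{-\beta^2/2}$; you instead integrate $t^2 e^{-t^2/2}$ by parts to obtain a boundary term plus a Gaussian tail, which you kill with the Mills-ratio-type bound $\int_\beta^\infty e^{-t^2/2}\,dt\le \beta^{-1}e^{-\beta^2/2}$, ending with $\tfrac{4\beta}{\sqrt{2\pi}}e^{-\beta^2/2}$. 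Your intermediate bound is actually sharper (linear vs.\ quadratic in $\beta$), and the final comparison against $e^{-\beta^2/4}$ is equally routine in either case; the paper's version is a hair shorter to verify, yours is a more standard tail-estimate argument. Both are fine, and the hypothesis $\beta\ge 10$ is overkill for either.
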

		\begin{proof}
		Take $a=\langle \boldsymbol{a},\boldsymbol{y}\rangle$. Then $a\sim \mathcal{N}(0,1)$ and 
		\[
			\begin{split}
				\mathbb{E}\left(\langle \boldsymbol{a}, \boldsymbol{y} \rangle^2\cdot \boldsymbol{1}_{\{|\langle \boldsymbol{a}, \boldsymbol{y} \rangle| \geq \beta\}}\right)
				 =& \mathbb{E}\left(a^2\cdot\boldsymbol{1}_{\{|a| \geq \beta\}}\right)=\frac{2}{\sqrt{2\pi}}\int_{\beta}^{\infty}t^2\cdot\exp(-t^2/2)dt\\				\leq &\frac{\sqrt{2}}{\sqrt{\pi}}\int_{\beta}^{\infty} t^3\cdot\exp(-t^2/2)dt
				= \frac{\sqrt{2}}{\sqrt{\pi}}\cdot 2(\beta^2/2+1)\cdot\exp(-\beta^2/2)\\
				\leq& (\beta^2+2)\cdot \exp(-\beta^2/2)
				\leq \exp(-\beta^2/4),
			\end{split}
		\]
		The last line follows from $\beta\geq 10$,  which leads to the conclusion $\beta^2+2\leq \exp(\beta^2/4)$.		\end{proof}

 		\begin{lemma}\label{lem: final}
			Let $\boldsymbol{a}\in \mathbb{R}^{n}$ be a random vector  such that $\boldsymbol{a} \sim \mathcal{N}(\boldsymbol{0},\boldsymbol{I}_n)$. For any fixed unit vectors $\boldsymbol{x}, \boldsymbol{y} \in \mathbb{S}^{n-1}$ and positive constants $\alpha,\beta$ satisfying $\alpha\in (0,1)$ and $\beta\geq 10$, the following inequalities hold:
		\begin{equation}
		\label{eqn: alpha_beta}
		\begin{split}
				\mathbb{E}\left(\langle \boldsymbol{a}, \boldsymbol{y} \rangle^2\cdot \boldsymbol{1}_{\{\langle \boldsymbol{a}, \boldsymbol{x} \rangle > \alpha \}}\cdot \boldsymbol{1}_{\{|\langle \boldsymbol{a}, \boldsymbol{y} \rangle| < \beta\}}\right) \geq   \frac{1}{2}- 2\alpha - \exp(-\beta^2/4),
				\end{split}
		\end{equation}
		and
				\begin{equation}
				\label{eqn: alpha_new}
				\mathbb{E}\left(\langle \boldsymbol{a}, \boldsymbol{y} \rangle^2 \cdot\boldsymbol{1}_{\{\langle \boldsymbol{a}, \boldsymbol{x} \rangle \geq - \alpha \}}\cdot \boldsymbol{1}_{\{|\langle \boldsymbol{a}, \boldsymbol{y} \rangle| < \beta\}}\right)\leq \mathbb{E}\left(\langle \boldsymbol{a}, \boldsymbol{y} \rangle^2\cdot \boldsymbol{1}_{\{\langle \boldsymbol{a}, \boldsymbol{x} \rangle \geq -\alpha \}}\right)\leq \frac{1}{2}+2\alpha.
		\end{equation}
			\end{lemma}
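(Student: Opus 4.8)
The plan is to derive both inequalities by decomposing the indicator functions involved and then invoking the expectation estimates already proved in Lemma \ref{lem: mid} and Lemma \ref{lem: beta}; no new machinery is required, and the whole argument is an exercise in bookkeeping with indicators.

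For the upper bound (\ref{eqn: alpha_new}), the first inequality is immediate since $\boldsymbol{1}_{\{|\langle \va,\vy\rangle|<\beta\}}\leq 1$. For the second, I would write
\[
\boldsymbol{1}_{\{\langle \va,\vx\rangle\geq -\alpha\}}=\boldsymbol{1}_{\{\langle \va,\vx\rangle\geq 0\}}+\boldsymbol{1}_{\{-\alpha\leq \langle \va,\vx\rangle<0\}},
\]
so that $\mathbb{E}\big(\langle \va,\vy\rangle^2\boldsymbol{1}_{\{\langle \va,\vx\rangle\geq -\alpha\}}\big)$ splits as the quantity in (\ref{eqn: E1}), which equals $\frac{1}{2}$, plus the quantity bounded by $2\alpha$ in (\ref{eqn: E2}); this gives $\frac{1}{2}+2\alpha$.

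For the lower bound (\ref{eqn: alpha_beta}), I would first use $\boldsymbol{1}_{\{\langle \va,\vx\rangle>\alpha\}}\boldsymbol{1}_{\{|\langle \va,\vy\rangle|<\beta\}}=\boldsymbol{1}_{\{\langle \va,\vx\rangle>\alpha\}}-\boldsymbol{1}_{\{\langle \va,\vx\rangle>\alpha\}}\boldsymbol{1}_{\{|\langle \va,\vy\rangle|\geq\beta\}}$ to get
\[
\mathbb{E}\big(\langle \va,\vy\rangle^2\boldsymbol{1}_{\{\langle \va,\vx\rangle>\alpha\}}\boldsymbol{1}_{\{|\langle \va,\vy\rangle|<\beta\}}\big)\geq \mathbb{E}\big(\langle \va,\vy\rangle^2\boldsymbol{1}_{\{\langle \va,\vx\rangle>\alpha\}}\big)-\mathbb{E}\big(\langle \va,\vy\rangle^2\boldsymbol{1}_{\{|\langle \va,\vy\rangle|\geq\beta\}}\big),
\]
where the subtracted term is at most $\exp(-\beta^2/4)$ by Lemma \ref{lem: beta}. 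For the remaining expectation on the right, the decomposition $\boldsymbol{1}_{\{\langle \va,\vx\rangle\geq 0\}}=\boldsymbol{1}_{\{0<\langle \va,\vx\rangle\leq\alpha\}}+\boldsymbol{1}_{\{\langle \va,\vx\rangle>\alpha\}}$ together with (\ref{eqn: E1}) and (\ref{eqn: E2}) gives $\mathbb{E}\big(\langle \va,\vy\rangle^2\boldsymbol{1}_{\{\langle \va,\vx\rangle>\alpha\}}\big)=\frac{1}{2}-\mathbb{E}\big(\langle \va,\vy\rangle^2\boldsymbol{1}_{\{0<\langle \va,\vx\rangle\leq\alpha\}}\big)\geq \frac{1}{2}-2\alpha$. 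Combining the two estimates yields the claimed $\frac{1}{2}-2\alpha-\exp(-\beta^2/4)$.

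There is no real obstacle here; the only point to watch is the strict-versus-nonstrict distinction between events such as $\{\langle\va,\vx\rangle>\alpha\}$ and $\{0<\langle\va,\vx\rangle\leq\alpha\}$, which causes no difficulty since the boundary events $\{\langle\va,\vx\rangle=0\}$ and $\{\langle\va,\vx\rangle=\alpha\}$ have Gaussian measure zero, so every decomposition above lines up exactly with the events appearing in Lemma \ref{lem: mid}.
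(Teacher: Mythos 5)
Your proposal is correct and follows essentially the same approach as the paper: both proofs decompose the indicator functions in a straightforward way and then invoke the expectation bounds from Lemma \ref{lem: mid} and Lemma \ref{lem: beta}. The only cosmetic difference is the order in which you split off the $\{|\langle \va,\vy\rangle|\geq\beta\}$ piece and the $\{0<\langle\va,\vx\rangle\leq\alpha\}$ piece, which leads to the same three-term estimate the paper writes in one line.
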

			\begin{proof}
			Applying Lemma \ref{lem: mid} and Lemma \ref{lem: beta}, we can directly derive the following inequalities:
			\[
		\begin{aligned}
				&\mathbb{E}\left(\langle \boldsymbol{a}, \boldsymbol{y} \rangle^2\cdot \boldsymbol{1}_{\{\langle \boldsymbol{a}, \boldsymbol{x} \rangle > \alpha \}}\cdot \boldsymbol{1}_{\{|\langle \boldsymbol{a}, \boldsymbol{y} \rangle| < \beta\}}\right) \\
				\geq & \mathbb{E}\left(\langle \boldsymbol{a}, \boldsymbol{y} \rangle^2\cdot  \boldsymbol{1}_{\{\langle \boldsymbol{a}, \boldsymbol{x} \rangle \geq 0\}}\right)-\mathbb{E}\left(\langle \boldsymbol{a}, \boldsymbol{y} \rangle^2\cdot \boldsymbol{1}_{\{0<\langle \boldsymbol{a}, \boldsymbol{x} \rangle \leq\alpha\} }\right)-\mathbb{E}\left(\langle \boldsymbol{a}, \boldsymbol{y} \rangle^2\cdot \boldsymbol{1}_{\{|\langle \boldsymbol{a}, \boldsymbol{y} \rangle|\geq  \beta \}}\right) \\
				\geq &  \frac{1}{2}- 2\alpha - \exp(-\beta^2/4),
				\end{aligned}
		\]
		and
				\[
				\begin{aligned}
				\mathbb{E}\left(\langle \boldsymbol{a}, \boldsymbol{y} \rangle^2\cdot \boldsymbol{1}_{\{\langle \boldsymbol{a}, \boldsymbol{x} \rangle \geq - \alpha \}}\cdot \boldsymbol{1}_{\{|\langle \boldsymbol{a}, \boldsymbol{y} \rangle| < \beta\}} \right)\leq &\mathbb{E}\left(\langle \boldsymbol{a}, \boldsymbol{y} \rangle^2\cdot \boldsymbol{1}_{\{\langle \boldsymbol{a}, \boldsymbol{x} \rangle \geq -\alpha \}}\right)\\
				\leq &\mathbb{E}\left(\langle \boldsymbol{a}, \boldsymbol{y} \rangle^2\cdot \boldsymbol{1}_{\{0>\langle \boldsymbol{a}, \boldsymbol{x} \rangle \geq -\alpha \}}\right)+\mathbb{E}\left(\langle \boldsymbol{a}, \boldsymbol{y} \rangle^2\cdot \boldsymbol{1}_{\{\langle \boldsymbol{a}, \boldsymbol{x} \rangle \geq 0 \}}\right) \\
				\leq &\frac{1}{2}+2\alpha.
				\end{aligned}
		\]
	These inequalities establish the desired results, thereby completing the proof.
			\end{proof}

	\subsection{Technical Concentration Inequalities}
Moreover, we introduce a series of critical concentration lemmas that are instrumental in proving our main theorem. While these lemmas share similar technical foundations, their specific details and applications differ significantly. For the sake of completeness and to elucidate these nuances, we present the full proofs of these lemmas in this section.
	\begin{lemma}\label{lem: middle}
	Let $\{\va_i\}_{i=1}^m$ be a sequence of independent standard Gaussian random vectors in $\mathbb{R}^n$, i.e., $\va_i \sim \mathcal{N}(\boldsymbol{0}, \boldsymbol{I}_n)$,  for $i = 1, \ldots, m$.  For any fixed positive constants $\delta$ and $\beta$ such that $\delta<1$ and $\beta\geq 10$, with probability at least $1-2\exp(-c_\delta\cdot m)$, we have:
		    \[
	    	\frac{1}{m}\cdot \sum_{i=1}^{m}(\langle \boldsymbol{a}_{i}, \boldsymbol{z} \rangle)^2\cdot \boldsymbol{1}_{\{|\langle \boldsymbol{a}_{i}, \boldsymbol{z} \rangle| \geq \beta\}}
	    		 \leq \exp(-0.9^2\beta^2/4) +\delta,
	    \]
	    for any $\boldsymbol{z}\in S\cap \mathbb{S}^{n-1}$, provided that $m\gtrsim \delta^{-4}\cdot\omega^2((S-S)\cap \mathbb{B}^{n})$. Here $c_\delta$ is a positive constant depending on $\delta$. 
	\end{lemma}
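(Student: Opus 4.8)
The idea is to dominate the discontinuous integrand $\boldsymbol{1}_{\{|t|\geq\beta\}}$ by a Lipschitz surrogate, establish the bound for this surrogate on an $\epsilon$-net of $S\cap\mathbb{S}^{n-1}$ via Lemma \ref{lem: cover_union}, and then transfer the estimate to all of $S\cap\mathbb{S}^{n-1}$ using the restricted isometry property of Lemma \ref{lem: RIP} to control the discretization error in a dimension-free way.

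\textbf{Surrogate and expectation.} Let $h\colon\mathbb{R}\to[0,1]$ be the even, piecewise-linear function with $h(t)=0$ for $|t|\le 0.9\beta$, $h(t)=1$ for $|t|\ge\beta$, and $h$ affine on $[0.9\beta,\beta]$; then $\boldsymbol{1}_{\{|t|\geq\beta\}}\le h(t)\le\boldsymbol{1}_{\{|t|\geq 0.9\beta\}}$ and $h$ is $(10/\beta)$-Lipschitz. Put $g(\vz,\va):=\langle\va,\vz\rangle^2\,h(\langle\va,\vz\rangle)$, so that $\langle\va,\vz\rangle^2\boldsymbol{1}_{\{|\langle\va,\vz\rangle|\geq\beta\}}\le g(\vz,\va)$ pointwise, and note that $g(\vz,\va)$ is sub-exponential with $\|g(\vz,\va)\|_{\psi_1}\le C$ (absolute), since $\langle\va,\vz\rangle\sim\mathcal{N}(0,1)$ for $\vz\in\mathbb{S}^{n-1}$. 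Because $0.9\beta\ge 9>1$, the computation in the proof of Lemma \ref{lem: beta}, carried out with threshold $0.9\beta$ in place of $\beta$, gives
\[
\mathbb{E}\,g(\vz,\va)\;\le\;\mathbb{E}\bigl(\langle\va,\vz\rangle^2\boldsymbol{1}_{\{|\langle\va,\vz\rangle|\geq 0.9\beta\}}\bigr)\;\le\;\exp(-0.9^2\beta^2/4),\qquad \vz\in S\cap\mathbb{S}^{n-1}.
\]

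\textbf{Net step.} Fix a small absolute constant $c$ and set $\epsilon:=c\delta/\beta$. Applying Lemma \ref{lem: cover_union} with $T_1=S\cap\mathbb{S}^{n-1}$, $T_2=\emptyset$, the function $g$, and $\delta_0=\delta/2$, we obtain that with probability at least $1-2\exp(-c'\delta^2 m)$,
\[
\frac1m\sum_{i=1}^m g(\vz_\epsilon,\va_i)\;\le\;\exp(-0.9^2\beta^2/4)+\frac\delta2\qquad\text{for all }\vz_\epsilon\in\mathcal{N}(S\cap\mathbb{S}^{n-1},\epsilon),
\]
provided $m\gtrsim\delta^{-2}\log\#\mathcal{N}(S\cap\mathbb{S}^{n-1},\epsilon)$. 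By Sudakov's inequality (Lemma \ref{lem: sudakov}) and the inclusion $S\cap\mathbb{S}^{n-1}\subseteq(S-S)\cap\mathbb{B}^{n}$ (valid since $\boldsymbol 0\in S$ for a cone), this requirement is implied by $m\gtrsim\delta^{-2}\epsilon^{-2}\omega^2((S-S)\cap\mathbb{B}^{n})$, i.e.\ by $m\gtrsim\delta^{-4}\omega^2((S-S)\cap\mathbb{B}^{n})$ after absorbing the fixed factor $\beta^2/c^2$ into the constant.

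\textbf{Off-net transfer --- the crux.} For $\vz\in S\cap\mathbb{S}^{n-1}$ pick $\vz_\epsilon$ in the net with $\|\vz-\vz_\epsilon\|_2\le\epsilon$ and write $\psi(t):=t^2h(t)$. A direct check gives $|\psi'(t)|\le 12\beta+2|t|$ for all $t$, hence $|\psi(s)-\psi(s')|\le|s-s'|\,(12\beta+2|s|+2|s'|)$. Taking $s=\langle\va_i,\vz\rangle$, $s'=\langle\va_i,\vz_\epsilon\rangle$ and using Cauchy--Schwarz,
\[
\frac1m\sum_{i=1}^m\bigl|g(\vz,\va_i)-g(\vz_\epsilon,\va_i)\bigr|\;\le\;\bigl(12\beta+2D_{\vz}+2D_{\vz_\epsilon}\bigr)\,\bigl(\tfrac1m\textstyle\sum_{i=1}^m\langle\va_i,\vz-\vz_\epsilon\rangle^2\bigr)^{1/2},
\]
where $D_{\vu}:=\bigl(\tfrac1m\sum_{i=1}^m\langle\va_i,\vu\rangle^2\bigr)^{1/2}$. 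On the event of Lemma \ref{lem: RIP} applied to $S$ with accuracy $\tfrac12$ --- which holds with probability $1-2\exp(-c''m)$ once $m\gtrsim\omega^2((S-S)\cap\mathbb{B}^{n})$ --- every $\vu\in S-S$ obeys $\tfrac1m\sum_{i=1}^m\langle\va_i,\vu\rangle^2\le\tfrac32\|\vu\|_2^2$; invoking this for $\vu=\vz-\vz_\epsilon$ (norm $\le\epsilon$) and for $\vu=\vz,\vz_\epsilon$ (norm $\le 1$, again using $\boldsymbol0\in S$) bounds the right-hand side by $C_1\beta\epsilon\le\delta/2$ for a suitable choice of $c$. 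Combining with the net step and the pointwise domination $\langle\va_i,\vz\rangle^2\boldsymbol{1}_{\{|\langle\va_i,\vz\rangle|\geq\beta\}}\le g(\vz,\va_i)$ then yields the asserted inequality, uniformly over $\vz\in S\cap\mathbb{S}^{n-1}$, on the intersection of the two high-probability events. The main obstacle is exactly this transfer step: the surrogate $t\mapsto t^2h(t)$ grows quadratically and is not globally Lipschitz, so a naive net bound would force $\epsilon$ to shrink with the ambient dimension and destroy the $\omega^2$ scaling; Lemma \ref{lem: RIP} is what supplies the dimension-free control of $\tfrac1m\sum_i\langle\va_i,\vz-\vz_\epsilon\rangle^2$ and $\tfrac1m\sum_i\langle\va_i,\vz\rangle^2$ that keeps $\epsilon\asymp\delta/\beta$ and hence the sampling complexity at the level $\omega^2((S-S)\cap\mathbb{B}^{n})$.
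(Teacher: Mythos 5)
Your overall strategy mirrors the paper's exactly: dominate the hard indicator by a compactly-supported continuous surrogate, establish concentration on an $\epsilon$-net of $S\cap\mathbb{S}^{n-1}$ via Lemma~\ref{lem: cover_union}, and then transfer to the full set using Lemma~\ref{lem: RIP} to control the discretization error in a dimension-free way. The paper's surrogate is the quadratic ramp $g_\beta$ (chosen so that $t^2g_\beta(|t|)=\bigl(|t|\sqrt{g_\beta(|t|)}\bigr)^2$ factors as a square), while you use a linear ramp $h$, but that choice is immaterial; both give the same expectation bound and the same sub-exponential norm.

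There is, however, a quantitative slip in your off-net transfer that propagates an extraneous $\beta^2$ into the sampling complexity, and the way you dismiss it is not valid. Your derivative bound $|\psi'(t)|\le 12\beta+2|t|$ forces $\epsilon\asymp\delta/\beta$, which through Sudakov gives $m\gtrsim\beta^2\delta^{-4}\omega^2((S-S)\cap\mathbb{B}^n)$. You then claim this reduces to $m\gtrsim\delta^{-4}\omega^2$ ``after absorbing the fixed factor $\beta^2/c^2$ into the constant,'' but the implied constant in the lemma's $\gtrsim$ is an \emph{absolute} constant (see the paper's notational conventions), and $\beta\ge 10$ is a free parameter, so $\beta^2$ cannot be absorbed. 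As written, you have proved a strictly weaker statement than the lemma.

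The fix is one line. Since $\psi$ (and hence $\psi'$) vanishes for $|t|<0.9\beta$, on the support of $\psi'$ you automatically have $\beta\le|t|/0.9$, so the term $12\beta$ can be replaced by $(12/0.9)|t|\le 14|t|$; altogether $|\psi'(t)|\le 14|t|$ for every $t$, with no residual $\beta$. Then $|\psi(s)-\psi(s')|\le 14\,|s-s'|\,(|s|+|s'|)$, the $12\beta$ term drops out of the Cauchy--Schwarz bound, $\epsilon\asymp\delta$ suffices, and the sampling complexity becomes $m\gtrsim\delta^{-4}\omega^2((S-S)\cap\mathbb{B}^n)$ as required. This is precisely what the paper accomplishes with its difference-of-squares computation: the map $t\mapsto|t|\sqrt{g_\beta(|t|)}$ is $11$-Lipschitz uniformly in $\beta\ge 10$, yielding $\bigl|t_1^2g_\beta(|t_1|)-t_2^2g_\beta(|t_2|)\bigr|\le 11\,|t_1-t_2|\,(|t_1|+|t_2|)$ with no $\beta$ dependence, so that the net parameter $\delta_0$ can be chosen proportional to $\delta$ alone. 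Everything else in your argument is correct and coincides with the paper's proof.
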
		
	
	\begin{proof}
	Denote the function $g_\beta: \mathbb{R}\rightarrow \mathbb{R}$ as 
	    \begin{eqnarray*}
	    	\begin{split}
	    		g_{\beta}(t) := 
	    		\begin{cases}
	    			1, &t \geq \beta;\\
	    			\frac{(t-0.9\beta)^2}{(0.1\beta)^2}, &0.9\beta \leq t < \beta; \\
	    			0, &t < 0.9\beta.
	    		\end{cases}
	    	\end{split}
	    \end{eqnarray*}
	    By direct calculation, we have 
	    \begin{equation}\label{X1_temp}
	    \begin{aligned}
	    \frac{1}{m}\cdot \sum_{i=1}^{m}(\langle \boldsymbol{a}_{i}, \boldsymbol{z} \rangle)^2 \cdot \boldsymbol{1}_{\{|\langle \boldsymbol{a}_{i}, \boldsymbol{z} \rangle| \geq \beta\}}\leq& \frac{1}{m}\cdot \sum_{i=1}^{m}(\langle \boldsymbol{a}_{i}, \boldsymbol{z} \rangle)^2\cdot g_{\beta}(|\langle \boldsymbol{a}_{i}, \boldsymbol{z} \rangle|)\\
	    \leq &\frac{1}{m}\cdot \sum_{i=1}^{m}(\langle \boldsymbol{a}_{i}, \boldsymbol{z} \rangle)^2\cdot \boldsymbol{1}_{\{|\langle \boldsymbol{a}_{i}, \boldsymbol{z} \rangle| \geq 0.9\beta\}}.
	    \end{aligned}
	    \end{equation}
	    Therefore, for any $\vz\in S\cap \mathbb{S}^{n-1}$,  in order to get the upper bound of 
	    \[
	     \frac{1}{m}\cdot \sum_{i=1}^{m}(\langle \boldsymbol{a}_{i}, \boldsymbol{z} \rangle)^2 \cdot \boldsymbol{1}_{\{|\langle \boldsymbol{a}_{i}, \boldsymbol{z} \rangle| \geq \beta\}},
	     \]
	      it is enough to estimate the upper bound of 
	    \[
	    \frac{1}{m}\cdot \sum_{i=1}^{m}(\langle \boldsymbol{a}_{i}, \boldsymbol{z} \rangle)^2\cdot  g_{\beta}(|\langle \boldsymbol{a}_{i}, \boldsymbol{z} \rangle|).
	    \] 
	    
	  For any fixed $\delta_0<1/45$,  take $\mathcal{N}(S\cap \mathbb{S}^{n-1},\delta_0)$ as an $\delta_0$-net of $S\cap \mathbb{S}^{n-1}$. According to Lemma \ref{lem: cover_union} and Lemma \ref{lem: sudakov}, with probability at least $1-2\exp(-c_{\delta_0}\cdot m)$, we have:
	    \begin{equation}
	    \label{eqn: upper_x1_temp}
	    \begin{aligned}
	    \frac{1}{m}\cdot \sum_{i=1}^{m}(\langle \boldsymbol{a}_{i}, \boldsymbol{z}_{\delta_0} \rangle)^2 \cdot g_{\beta}(|\langle \boldsymbol{a}_{i}, \boldsymbol{z}_{\delta_0} \rangle|) \leq & \frac{1}{m}\cdot \sum_{i=1}^{m}\mathbb{E}(\langle \boldsymbol{a}_{i}, \boldsymbol{z}_{\delta_0} \rangle)^2 \cdot g_{\beta}(|\langle \boldsymbol{a}_{i}, \boldsymbol{z}_{\delta_0} \rangle|) +   \delta_0\\
	    \overset{(a)}\leq & \exp(-0.9^2\beta^2/4)+\delta_0,
	    \end{aligned}
	    \end{equation}
	  for all $\boldsymbol{z}_{\delta_0}\in\mathcal{N}(S\cap \mathbb{S}^{n-1},\delta_0)$, provided that 
	  \[
	  m\gtrsim \delta_0^{-4}\cdot \omega^2((S-S)\cap \mathbb{B}^{n})\geq \delta_0^{-4}\cdot\omega^2(S\cap \mathbb{S}^{n-1}).
	  \]
	   Here $c_{\delta_0}$ is a positive constant depending on $\delta_0$. $(a)$ is based on (\ref{X1_temp}) and Lemma \ref{lem: beta}.
	  
	  Take $h:\mathbb{R}\rightarrow \mathbb{R}$ such that $h_{\beta}(t) = \sqrt{g_{\beta}(t)}$. We claim that, for all $t_1,t_2\in \mathbb{R}$, 
	  \begin{equation}\label{eqn: claim_x1}
	  \left|(t_1)^2 \cdot g_{\beta}(|t_1|)- (t_2)^2 \cdot g_{\beta}(|t_2|) \right|\leq 11\cdot |t_1-t_2|\cdot \big(|t_1|+|t_2|\big).
	  \end{equation}
	   Denote $\vA\in \mathbb{R}^{m\times n}$ as $\vA:=(\va_1,\ldots,\va_m)^{T}$. Then  for any $\boldsymbol{z}\in S\cap \mathbb{S}^{n-1}$ with $\boldsymbol{z}_{\delta_0}\in \mathcal{N}(S\cap \mathbb{S}^{n-1},\delta_0)$ such that $\|\boldsymbol{z}-\boldsymbol{z}_{\delta_0}\|_2 \leq \delta_0$, we have:
	   \begin{equation}
	   \small
	   \label{eqn: upper_tempXX2}
	   \begin{split}
	    		&\frac{1}{m}\cdot \sum_{i=1}^{m}(\langle \boldsymbol{a}_{i}, \boldsymbol{z} \rangle)^2 \cdot g_{\beta}\big(|\langle \boldsymbol{a}_{i}, \boldsymbol{z} \rangle|\big)\\
	    		 \leq& \frac{1}{m}\cdot \sum_{i=1}^{m}(\langle \boldsymbol{a}_{i}, \boldsymbol{z}_{\delta_0} \rangle)^2 \cdot g_{\beta}\big(|\langle \boldsymbol{a}_{i}, \boldsymbol{z}_{\delta_0} \rangle|\big) + \frac{1}{m}\cdot \sum_{i=1}^{m}\Big|(\langle \boldsymbol{a}_{i}, \boldsymbol{z} \rangle)^2 \cdot g_{\beta}\big(|\langle \boldsymbol{a}_{i}, \boldsymbol{z} \rangle|\big) - (\langle \boldsymbol{a}_{i}, \boldsymbol{z}_{\delta_0} \rangle)^2 \cdot g_{\beta}\big(|\langle \boldsymbol{a}_{i}, \boldsymbol{z}_{\delta_0} \rangle|\big) \Big|\\
	    		 			\overset{(b)} \leq& \exp(-0.9^2\beta^2/4)+\delta_0+ \frac{11}{m}\cdot \sum_{i=1}^{m}\left|\langle \boldsymbol{a}_{i}, \boldsymbol{z} - \boldsymbol{z}_{\delta_0} \rangle \right|\cdot \big( \left|\langle \boldsymbol{a}_{i}, \boldsymbol{z} \rangle \right| + \left| \langle \boldsymbol{a}_{i}, \boldsymbol{z}_{\delta_0} \rangle \right|\big)\\
	    		 \leq& \exp(-0.9^2\beta^2/4)+\delta_0 + \frac{11}{m}\cdot \|\vA(\vz-\vz_{\delta_0})\|_2\cdot \|\vA \vz\|_2+\frac{11}{m}\cdot \|\vA(\vz-\vz_{\delta_0})\|_2\cdot \|\vA \vz_{\delta_0}\|_2\\
			 \overset{(c)}\leq& \exp(-0.9^2\beta^2/4)+\delta_0 +22\cdot (1+\delta_0)\cdot \|\boldsymbol{z} - \boldsymbol{z}_{\delta_0}\|_2	 \leq \exp(-0.9^2\beta^2/4)+45\delta_0.
	    \end{split}
	    \end{equation}
Here $(b)$ is based on  (\ref{eqn: upper_x1_temp}) and (\ref{eqn: claim_x1}), and $(c)$ is according to  Lemma \ref{lem: RIP} such that $\frac{1}{\sqrt{m}}\cdot \|\vA(\vz-\vz_{\delta_0})\|_2\leq \sqrt{1+\delta_0}\cdot\|\vz-\vz_{\delta_0}\|_2$, $\frac{1}{\sqrt{m}}\cdot \|\vA\vz\|_2\leq \sqrt{1+\delta_0}\|\vz\|_2\leq 1+\delta_0$, and $\frac{1}{\sqrt{m}}\cdot \|\vA\vz_{\delta}\|_2\leq \sqrt{1+\delta_0}\|\vz_{\delta}\|_2\leq 1+\delta_0$. Plugging (\ref{eqn: upper_tempXX2}) into (\ref{X1_temp}) with $\delta=45\delta_0<1$, we can conclude that:
\[
\frac{1}{m}\cdot \sum_{i=1}^{m}(\langle \boldsymbol{a}_{i}, \boldsymbol{z} \rangle)^2\cdot  \boldsymbol{1}_{\{|\langle \boldsymbol{a}_{i}, \boldsymbol{z} \rangle| \geq \beta\}}\leq \exp(-0.9^2\beta^2/4)+\delta.	
\]

 The only thing left is to prove (\ref{eqn: claim_x1}).  For all $t_1,t_2\in \mathbb{R}$, we can directly have  
	  \[
	  \left|t_1 \cdot h_{\beta}(t_1)- t_2 \cdot h_{\beta}(t_2) \right| \leq 11\cdot |t_1-t_2|.
	  \]
Combined with the fact that $|h_{\beta}(t)|\leq 1$ for all $t\in \mathbb{R}$, it leads to 
\begin{equation}
\label{eqn: h_temp}
\begin{aligned}
 \left|(t_1)^2 \cdot g_{\beta}(|t_1|)- (t_2)^2 \cdot g_{\beta}(|t_2|) \right|=&\Big(|t_1|\cdot h_\beta(|t_1|)-|t_2|\cdot h_{\beta}(|t_2|)\Big)\cdot \Big(|t_1|\cdot h_\beta(|t_1|)+|t_2|\cdot h_{\beta}(|t_2|)\Big)\\
 \leq & 11\cdot \big(|t_1|-|t_2|\big)\cdot \big(|t_1|+|t_2|\big)\leq 11\cdot |t_1-t_2|\cdot \big(|t_1|+|t_2|\big),
 \end{aligned}
\end{equation}
and the proof of (\ref{eqn: claim_x1}) is completed. 
	  	\end{proof}
	
	\begin{lemma}\label{lem: upper}
	Let $\{\va_i\}_{i=1}^m$ be a sequence of independent standard Gaussian random vectors in $\mathbb{R}^n$, i.e., $\va_i \sim \mathcal{N}(\boldsymbol{0}, \boldsymbol{I}_n)$,  for $i = 1, \ldots, m$.  For any fixed  positive constants $\alpha,\beta,\delta$ such that $\alpha,\delta\in (0,1)$ and $\beta\geq 10$,  with probability at least $1-2\exp(-c_{\alpha,\beta,\delta} \cdot m)$, we have 
	    \[
	     \frac{1}{m}\cdot \sum_{i=1}^{m}(\langle \boldsymbol{a}_{i}, \boldsymbol{z} \rangle)^2 \cdot \boldsymbol{1}_{\{\langle \boldsymbol{a}_{i}, \boldsymbol{x} \rangle \geq -\alpha\} }\cdot \boldsymbol{1}_{\{|\langle \boldsymbol{a}_{i}, \boldsymbol{z} \rangle| < \beta\}}\leq  \frac{1}{2}+2.2\alpha+\frac{2}{\beta^2}+\delta
	    \]
	    for any $\boldsymbol{x},\boldsymbol{z}\in S\cap \mathbb{S}^{n-1}$, provided that 
	    \[
	    m\gtrsim \frac{\beta^8}{\alpha^4\cdot \delta^4}\cdot \omega^2((S-S)\cap \mathbb{B}^{n}).
	    \]
	      Here $c_{\alpha,\beta,\delta}$ is a positive constant depending on $\alpha,\beta$ and $\delta$. 
	    \end{lemma}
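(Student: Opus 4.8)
The plan is to prove this uniform one-sided bound following the template already used for Lemma~\ref{lem: middle}: dominate the summand by a product of \emph{Lipschitz} surrogates so that the integrand becomes a bounded Lipschitz function of $(\vx,\vz,\va_i)$ on $(S\cap\mathbb{S}^{n-1})\times(S\cap\mathbb{S}^{n-1})$, then establish concentration on a $\delta_0$-net via Lemma~\ref{lem: cover_union} and Sudakov's inequality (Lemma~\ref{lem: sudakov}), and finally transfer to all of $S\cap\mathbb{S}^{n-1}$ using the RIP estimate of Lemma~\ref{lem: RIP} to control the discretization error.

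First I would replace the two hard indicators. For the half-space indicator, fix a ramp $\psi_\alpha:\mathbb{R}\to[0,1]$ that is linear on $[-1.1\alpha,-\alpha]$ and satisfies $\boldsymbol{1}_{\{t\geq-\alpha\}}\leq\psi_\alpha(t)\leq\boldsymbol{1}_{\{t\geq-1.1\alpha\}}$, so that $\mathrm{Lip}(\psi_\alpha)\lesssim 1/\alpha$. For the truncated square, use $\langle\va_i,\vz\rangle^2\,\boldsymbol{1}_{\{|\langle\va_i,\vz\rangle|<\beta\}}\leq G_\beta(\langle\va_i,\vz\rangle)$ with $G_\beta(t):=\min(t^2,\beta^2)$, which is bounded by $\beta^2$ and $2\beta$-Lipschitz. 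To keep the transfer to net points independent of $\|\va_i\|$, I would split off $\langle\va_i,\vz\rangle^2\boldsymbol{1}_{\{|\langle\va_i,\vz\rangle|\geq\beta\}}$ and bound $\tfrac1m\sum_i\langle\va_i,\vz\rangle^2\boldsymbol{1}_{\{|\langle\va_i,\vz\rangle|\geq\beta\}}$ uniformly over $\vz\in S\cap\mathbb{S}^{n-1}$ by Lemma~\ref{lem: middle}; since $\exp(-0.81\beta^2/4)\leq 2/\beta^2$ for $\beta\geq 10$, this piece contributes at most $2/\beta^2+\delta/2$. Thus it remains to bound $\tfrac1m\sum_{i=1}^m G_\beta(\langle\va_i,\vz\rangle)\,\psi_\alpha(\langle\va_i,\vx\rangle)$ from above, uniformly, by $\tfrac12+2.2\alpha+\tfrac{\delta}{2}$.

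For this bounded Lipschitz function I would take a $\delta_0$-net $\mathcal{N}$ of $S\cap\mathbb{S}^{n-1}$ with $\delta_0\asymp\alpha\delta/\beta^2$ and apply Lemma~\ref{lem: cover_union} with $T_1=T_2=S\cap\mathbb{S}^{n-1}$ to the rescaled function $G_\beta\psi_\alpha/\beta^2$ (which has $\psi_1$-norm an absolute constant); together with Sudakov's inequality, $\log(\#\mathcal{N})\lesssim\omega^2((S-S)\cap\mathbb{B}^n)/\delta_0^2$, this gives, with probability $1-2\exp(-c_{\alpha,\beta,\delta}m)$, that $\tfrac1m\sum_iG_\beta(\langle\va_i,\vz_{\delta_0}\rangle)\psi_\alpha(\langle\va_i,\vx_{\delta_0}\rangle)$ is within $\delta/4$ of its expectation simultaneously over all pairs of net points. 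By Lemma~\ref{lem: final}, inequality~\eqref{eqn: alpha_new}, that expectation is at most $\mathbb{E}\big[\langle\va,\vz_{\delta_0}\rangle^2\boldsymbol{1}_{\{\langle\va,\vx_{\delta_0}\rangle\geq-1.1\alpha\}}\big]\leq\tfrac12+2.2\alpha$. For a general pair $(\vx,\vz)$ with nearest net points $(\vx_{\delta_0},\vz_{\delta_0})$, the summand changes by at most $\beta^2\,\mathrm{Lip}(\psi_\alpha)\,|\langle\va_i,\vx-\vx_{\delta_0}\rangle|+2\beta\,|\langle\va_i,\vz-\vz_{\delta_0}\rangle|$; averaging over $i$ and using $\tfrac1m\|\vA\vu\|_1\leq m^{-1/2}\|\vA\vu\|_2\lesssim\|\vu\|_2$ by Cauchy--Schwarz and Lemma~\ref{lem: RIP} bounds the discretization error by $\lesssim\beta^2\delta_0/\alpha\leq\delta/4$ for our choice of $\delta_0$. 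Adding the expectation bound, the net-point concentration error, the discretization error, and the $2/\beta^2$ tail contribution yields the stated bound $\tfrac12+2.2\alpha+2/\beta^2+\delta$; the sampling requirement is governed by applying Lemma~\ref{lem: RIP} and Lemma~\ref{lem: cover_union} at the fine scale $\delta_0\asymp\alpha\delta/\beta^2$, which produces $m\gtrsim\beta^8\alpha^{-4}\delta^{-4}\,\omega^2((S-S)\cap\mathbb{B}^n)$.

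The main obstacle is the bookkeeping forced by the large Lipschitz constants of the surrogates: $\psi_\alpha$ is only $O(1/\alpha)$-Lipschitz and the truncated square only $O(\beta)$-Lipschitz, so their product is $O(\beta^2/\alpha)$-Lipschitz once $\|G_\beta\|_\infty\leq\beta^2$ is used, which is precisely why the net must be taken at resolution $\asymp\alpha\delta/\beta^2$ and why the sampling rate degrades to order $\beta^8\alpha^{-4}\delta^{-4}\,\omega^2(\cdot)$. Securing the exact numerical constants ($2.2\alpha$, $2/\beta^2$, $\delta$) also requires a careful choice of the ramp width of $\psi_\alpha$ and deliberately crude, but sufficient, tail/second-moment estimates for the truncation term.
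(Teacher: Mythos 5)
Your proposal is correct and follows the same high-level template as the paper's proof (Lipschitz surrogate for the half-space indicator, concentration on an $\epsilon$-net via Lemma~\ref{lem: cover_union} and Sudakov, discretization transfer controlled through Lemma~\ref{lem: RIP}, expectation bound from Lemma~\ref{lem: final}), but the middle layer of the decomposition is genuinely different and arguably cleaner. The paper keeps the raw square $\langle\va_i,\vz\rangle^2$ and instead introduces a \emph{second} truncation level $\boldsymbol{1}_{\{|\langle\va_i,\vz_\epsilon\rangle|<\beta^2\}}$ on the net points, splitting off the over-$\beta^2$ net indices with a deterministic pigeonhole count (at most $2m/\beta^4$ indices can have $\langle\va_i,\vz_\epsilon\rangle^2>\beta^4$, since $\tfrac1m\|\vA\vz_\epsilon\|_2^2\leq 2$); the discretization estimate then factors out the cap $\beta+\beta^2$ and uses that $\sqrt{g_\alpha}$ is $10/\alpha$-Lipschitz. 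You instead replace $\langle\va_i,\vz\rangle^2\boldsymbol{1}_{\{|\cdot|<\beta\}}$ by the single uniformly $2\beta$-Lipschitz surrogate $G_\beta(t)=\min(t^2,\beta^2)$, which removes the need for the $\beta$ vs.\ $\beta^2$ two-scale bookkeeping entirely, and you control the residual tail with Lemma~\ref{lem: middle} rather than the paper's counting argument. Both routes land on the same constants and the same (wasteful, but sufficient) sampling rate $\beta^8\alpha^{-4}\delta^{-4}\omega^2(\cdot)$. One small inefficiency worth flagging: once you replace the summand by $G_\beta(\langle\va_i,\vz\rangle)\psi_\alpha(\langle\va_i,\vx\rangle)$, the tail split-off is redundant, because $G_\beta\leq t^2$ already forces $\mathbb{E}[G_\beta\psi_\alpha]\leq\tfrac12+2.2\alpha$ directly via \eqref{eqn: alpha_new}; you would get the slightly sharper bound $\tfrac12+2.2\alpha+\delta$ without the $2/\beta^2$ term. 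Including the tail is harmless, it just proves a weaker inequality than the method delivers, so it conforms to the stated lemma.
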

	    \begin{proof}
	    Denote the function $g_{\alpha}:\mathbb{R}\rightarrow \mathbb{R}$ as 
	    	    \begin{equation}
		    \label{eqn: g_alpha1}
	    	\begin{split}
	    		g_{\alpha}(t) =
	    		\begin{cases}
	    			1, &t \geq -\alpha;\\
	    			\frac{(t+1.1\alpha)^2}{(0.1\alpha)^2}, &-1.1\alpha \leq t < -\alpha; \\
	    			0,&t < -1.1\alpha.
	    		\end{cases}
	    	\end{split}
	    \end{equation}
	    By direct calculation, we have 
  \[
	    	\begin{split}
	    		& \frac{1}{m}\cdot \sum_{i=1}^{m}(\langle \boldsymbol{a}_{i}, \boldsymbol{z} \rangle)^2 \cdot \boldsymbol{1}_{\{\langle \boldsymbol{a}_{i}, \boldsymbol{x} \rangle \geq -\alpha\} }\cdot \boldsymbol{1}_{\{|\langle \boldsymbol{a}_{i}, \boldsymbol{z} \rangle| < \beta\}}\\
	    		 \leq &\frac{1}{m}\cdot \sum_{i=1}^{m}(\langle \boldsymbol{a}_{i}, \boldsymbol{z} \rangle)^2 \cdot  g_{\alpha}(\langle \boldsymbol{a}_{i}, \boldsymbol{x} \rangle)\cdot  \boldsymbol{1}_{\{|\langle \boldsymbol{a}_{i}, \boldsymbol{z} \rangle| < \beta\}}\\
			 \leq &\frac{1}{m}\cdot \sum_{i=1}^{m}(\langle \boldsymbol{a}_{i}, \boldsymbol{z} \rangle)^2\cdot  \boldsymbol{1}_{\{\langle \boldsymbol{a}_{i}, \boldsymbol{x} \rangle \geq -1.1\alpha\} }\cdot \boldsymbol{1}_{\{|\langle \boldsymbol{a}_{i}, \boldsymbol{z} \rangle| < \beta\}}.
	    	\end{split}
	    \]
	      Therefore, in order to get the upper bound of 
	      \[
	        \frac{1}{m}\cdot \sum_{i=1}^{m}(\langle \boldsymbol{a}_{i}, \boldsymbol{z} \rangle)^2 \cdot \boldsymbol{1}_{\{\langle \boldsymbol{a}_{i}, \boldsymbol{x} \rangle \geq -\alpha\} }\cdot \boldsymbol{1}_{\{|\langle \boldsymbol{a}_{i}, \boldsymbol{z} \rangle| < \beta\}},
	        \]
	         it is enough to estimate the upper bound of 
	    \[
	    \frac{1}{m}\cdot \sum_{i=1}^{m}(\langle \boldsymbol{a}_{i}, \boldsymbol{z} \rangle)^2 \cdot  g_{\alpha}(\langle \boldsymbol{a}_{i}, \boldsymbol{x} \rangle)\cdot  \boldsymbol{1}_{\{|\langle \boldsymbol{a}_{i}, \boldsymbol{z} \rangle| < \beta\}}.
	    \] 
	    
	    For any fixed $\delta_0<1/2$, set $\epsilon:=\frac{\alpha\cdot \delta_0}{80\beta^2}$. Then we have 
	    $
	    \epsilon\leq \delta_0.
	    $ Take $\mathcal{N}(S\cap \mathbb{S}^{n-1},\epsilon)$ as an $\epsilon$-net of $S\cap \mathbb{S}^{n-1}$. We claim that with probability at least $1-2\exp(-c_{\alpha,\beta,\delta_0} \cdot m)$ for some positive constant $c_{\alpha,\beta,\delta_0}$ depending on $\alpha,\beta$, and $\delta_0$, the following inequalities hold:
	    \begin{equation}
	    \label{eqn: temp_x2_1}
	      \frac{1}{m}\cdot \sum_{i=1}^{m}(\langle \boldsymbol{a}_{i}, \boldsymbol{z}_{\epsilon} \rangle)^2\cdot  g_{\alpha}(\langle \boldsymbol{a}_{i}, \boldsymbol{x}_{\epsilon} \rangle) \cdot \boldsymbol{1}_{\{|\langle \boldsymbol{a}_{i}, \boldsymbol{z}_{\epsilon} \rangle| < \beta^2\}} \leq  \frac{1}{2}+2.2\alpha+\delta_0,
	    \end{equation}
	    \begin{equation}
	    \label{eqn: temp_x2_2}
	     \frac{1}{m}\cdot \sum_{i=1}^{m}(\langle \boldsymbol{a}_{i}, \boldsymbol{z} \rangle)^2 \cdot g_{\alpha}(\langle \boldsymbol{a}_{i}, \boldsymbol{x} \rangle)\cdot \boldsymbol{1}_{\{|\langle \boldsymbol{a}_{i}, \boldsymbol{z} \rangle| < \beta\}}\cdot \boldsymbol{1}_{\{ |\langle \boldsymbol{a}_{i}, \boldsymbol{z}_{\epsilon} \rangle| \geq \beta^2\}}\leq \frac{2}{\beta^2},
	    \end{equation}
	    \begin{equation}
	    \label{eqn: temp_x2_3}
	     \frac{1}{m}\cdot \sum_{i=1}^{m}\left|(\langle \boldsymbol{a}_{i}, \boldsymbol{z} \rangle)^2  \cdot g_{\alpha}(\langle \boldsymbol{a}_{i}, \boldsymbol{x} \rangle) - (\langle \boldsymbol{a}_{i}, \boldsymbol{z}_{\epsilon} \rangle)^2 \cdot  g_{\alpha}(\langle \boldsymbol{a}_{i}, \boldsymbol{x}_{\epsilon} \rangle)\right|\cdot \boldsymbol{1}_{\{|\langle \boldsymbol{a}_{i}, \boldsymbol{z}_{\epsilon} \rangle| < \beta^2\}}\cdot \boldsymbol{1}_{\{|\langle \boldsymbol{a}_{i}, \boldsymbol{z} \rangle| < \beta\}}\leq \delta_0,
	    \end{equation}  
	    for any $\vx,\vz\in S\cap \mathbb{S}^{n-1}$ and $\vx_{\epsilon},\vz_{\epsilon}\in \mathcal{N}(S\cap \mathbb{S}^{n-1},\epsilon)$ such that $\|\vx-\vx_{\epsilon}\|_2\leq \epsilon$ and $\|\vz-\vz_{\epsilon}\|_2\leq \epsilon$, provided that 
	    \[
	    m\gtrsim \frac{\beta^8}{\alpha^4\cdot \delta_0^4}\cdot\omega^2((S-S)\cap \mathbb{B}^{n}).
	    \] 
	The proofs of them will be presented at the end.
	
	Therefore, for all $\vx,\vz\in S\cap \mathbb{S}^{n-1}$ and $\vx_{\epsilon},\vz_{\epsilon}\in \mathcal{N}(S\cap \mathbb{S}^{n-1},\epsilon)$ such that $\|\vx-\vx_{\epsilon}\|_2\leq \epsilon$ and $\|\vz-\vz_{\epsilon}\|_2\leq \epsilon$, we have 
	\begin{equation}\label{eqn: phi1_temp1}
	    	\begin{split}
	    		& \frac{1}{m}\cdot \sum_{i=1}^{m}(\langle \boldsymbol{a}_{i}, \boldsymbol{z} \rangle)^2 \cdot \boldsymbol{1}_{\{\langle \boldsymbol{a}_{i}, \boldsymbol{x} \rangle \geq -\alpha\} }\cdot \boldsymbol{1}_{\{|\langle \boldsymbol{a}_{i}, \boldsymbol{z} \rangle| < \beta\}}\\
			\leq&  \frac{1}{m}\cdot \sum_{i=1}^{m}(\langle \boldsymbol{a}_{i}, \boldsymbol{z} \rangle)^2 \cdot  g_{\alpha}(\langle \boldsymbol{a}_{i}, \boldsymbol{x} \rangle)\cdot\boldsymbol{1}_{\{|\langle \boldsymbol{a}_{i}, \boldsymbol{z} \rangle| < \beta\}}\\
	    		 = &\frac{1}{m}\cdot \sum_{i=1}^{m}(\langle \boldsymbol{a}_{i}, \boldsymbol{z}_{\epsilon} \rangle)^2 \cdot g_{\alpha}(\langle \boldsymbol{a}_{i}, \boldsymbol{x}_{\epsilon} \rangle)\cdot\boldsymbol{1}_{\{|\langle \boldsymbol{a}_{i}, \boldsymbol{z}_{\epsilon} \rangle| < \beta^2\}}
	    		 + \frac{1}{m}\cdot \sum_{i=1}^{m}(\langle \boldsymbol{a}_{i}, \boldsymbol{z} \rangle)^2 \cdot g_{\alpha}(\langle \boldsymbol{a}_{i}, \boldsymbol{x} \rangle)\cdot \boldsymbol{1}_{\{|\langle \boldsymbol{a}_{i}, \boldsymbol{z} \rangle| < \beta\}} \\
	    		&- \frac{1}{m}\cdot \sum_{i=1}^{m}(\langle \boldsymbol{a}_{i}, \boldsymbol{z}_{\epsilon} \rangle)^2\cdot  g_{\alpha}(\langle \boldsymbol{a}_{i}, \boldsymbol{x}_{\epsilon} \rangle)\cdot\boldsymbol{1}_{\{|\langle \boldsymbol{a}_{i}, \boldsymbol{z}_{\epsilon} \rangle| < \beta^2\}}\\
	    		 \leq& \frac{1}{m}\cdot \sum_{i=1}^{m}(\langle \boldsymbol{a}_{i}, \boldsymbol{z}_{\epsilon} \rangle)^2  \cdot g_{\alpha}(\langle \boldsymbol{a}_{i}, \boldsymbol{x}_{\epsilon} \rangle)\cdot \boldsymbol{1}_{\{|\langle \boldsymbol{a}_{i}, \boldsymbol{z}_{\epsilon} \rangle| < \beta^2\}}\\
	    		  &+ \frac{1}{m}\cdot\sum_{i=1}^{m}(\langle \boldsymbol{a}_{i}, \boldsymbol{z} \rangle)^2\cdot  g_{\alpha}(\langle \boldsymbol{a}_{i}, \boldsymbol{x} \rangle)\cdot\boldsymbol{1}_{\{|\langle \boldsymbol{a}_{i}, \boldsymbol{z} \rangle| < \beta\}}\cdot \boldsymbol{1}_{\{ |\langle \boldsymbol{a}_{i}, \boldsymbol{z}_{\epsilon} \rangle| \geq \beta^2\}} \\
	    		 \quad&   +  \frac{1}{m}\sum_{i=1}^{m}\left|(\langle \boldsymbol{a}_{i}, \boldsymbol{z} \rangle)^2  \cdot g_{\alpha}(\langle \boldsymbol{a}_{i}, \boldsymbol{x} \rangle) - (\langle \boldsymbol{a}_{i}, \boldsymbol{z}_{\epsilon} \rangle)^2 \cdot  g_{\alpha}(\langle \boldsymbol{a}_{i}, \boldsymbol{x}_{\epsilon} \rangle)\right|\cdot \boldsymbol{1}_{\{|\langle \boldsymbol{a}_{i}, \boldsymbol{z}_{\epsilon} \rangle| < \beta^2\}}\cdot \boldsymbol{1}_{\{|\langle \boldsymbol{a}_{i}, \boldsymbol{z} \rangle| < \beta\}}\\
	    		\leq &\frac{1}{2}+2.2\alpha+\frac{2}{\beta^2}+2\delta_0=\frac{1}{2}+2.2\alpha+\frac{2}{\beta^2}+\delta,	 
	    	\end{split}
\end{equation}
where $\delta:=2\delta_0$. Then the proof is completed. 
	
	Now we prove the results in (\ref{eqn: temp_x2_1}), (\ref{eqn: temp_x2_2}) and (\ref{eqn: temp_x2_3}). 
	   
	    According to Lemma \ref{lem: cover_union} and Lemma \ref{lem: sudakov}, with probability at least $1-2\exp(-c_{\alpha,\beta,\delta_0}\cdot m)$, we directly have:
	    \begin{equation}
	    \label{eqn: upper_x2_temp}
	    \begin{aligned}
	 &  \frac{1}{m}\cdot\sum_{i=1}^{m}(\langle \boldsymbol{a}_{i}, \boldsymbol{z}_{\epsilon} \rangle)^2\cdot  g_{\alpha}(\langle \boldsymbol{a}_{i}, \boldsymbol{x}_{\epsilon} \rangle) \cdot \boldsymbol{1}_{\{|\langle \boldsymbol{a}_{i}, \boldsymbol{z}_{\epsilon} \rangle| < \beta^2\}} \\
	  \leq&  \frac{1}{m}\cdot\sum_{i=1}^{m}\mathbb{E}\langle \boldsymbol{a}_{i}, \boldsymbol{z}_{\epsilon}\rangle^2  \cdot g_{\alpha}(\langle \boldsymbol{a}_{i}, \boldsymbol{x}_{\epsilon} \rangle)\cdot  \boldsymbol{1}_{\{|\langle \boldsymbol{a}_{i}, \boldsymbol{z}_{\epsilon} \rangle| < \beta^2\}}+   \delta_0\\
	   \overset{(a)} \leq & \frac{1}{m}\cdot \sum_{i=1}^{m}\mathbb{E}\langle \boldsymbol{a}_{i}, \boldsymbol{z}_{\epsilon}\rangle^2  \cdot \boldsymbol{1}_{\{\langle \boldsymbol{a}_{i}, \boldsymbol{x}_{\epsilon}\rangle \geq -1.1\alpha \}} +   \delta_0 \\
	    \overset{(b)}\leq & \frac{1}{2}+2.2\alpha+\delta_0,
	    \end{aligned}
	    \end{equation}
	  for any $\boldsymbol{z}_{\epsilon},\boldsymbol{x}_{\epsilon}\in \mathcal{N}(S\cap \mathbb{S}^{n-1},\epsilon)$,  provided that 
	  \[
	  m\gtrsim \frac{1}{\epsilon^4}\cdot \omega^2((S-S)\cap \mathbb{B}^{n})\geq \frac{1}{\epsilon^2\cdot\delta_0^2}\cdot \omega^2((S-S)\cap \mathbb{B}^{n}) \geq   \frac{1}{\epsilon^2\cdot\delta_0^2}\cdot \omega^2(S\cap \mathbb{S}^{n-1}).
	  \] Here $(a)$ if follows from the definition of $g_{\alpha}$ in (\ref{eqn: g_alpha1}), and $(b)$ is based on (\ref{eqn: alpha_new}) in Lemma \ref{lem: final} by replacing $\alpha$ into $1.1\alpha$. Then the proof of (\ref{eqn: temp_x2_1}) is completed. 
	    
	    Besides, (\ref{eqn: temp_x2_2}) can be directly derived by:
\[
\begin{aligned}
 &\frac{1}{m}\cdot \sum_{i=1}^{m}(\langle \boldsymbol{a}_{i}, \boldsymbol{z} \rangle)^2 \cdot g_{\alpha}(\langle \boldsymbol{a}_{i}, \boldsymbol{x} \rangle)\cdot\boldsymbol{1}_{\{|\langle \boldsymbol{a}_{i}, \boldsymbol{z} \rangle| < \beta\}}\cdot \boldsymbol{1}_{\{|\langle \boldsymbol{a}_{i}, \boldsymbol{z}_{\epsilon} \rangle| \geq \beta^2\}}\\
 \leq&  \frac{1}{m}\cdot \sum_{i=1}^{m}\beta^2\cdot \boldsymbol{1}_{\{|\langle \boldsymbol{a}_{i}, \boldsymbol{z}_{\epsilon} \rangle| \geq \beta^2\}}\leq \frac{2}{\beta^2}.
 \end{aligned}
\]
The last inequality above follows from 
\[
\#\{i\in\{1,\cdots,m\}\ :\ |\langle \boldsymbol{a}_{i}, \boldsymbol{z}_{\epsilon} \rangle| > \beta^2\}\leq \frac{2}{\beta^4}.
\]
If the cardinality of the set $\{i\in\{1,\ldots,m\} : |\langle \boldsymbol{a}_{i}, \boldsymbol{z}_{\epsilon} \rangle| > \beta^2\}$  exceeds $\frac{2}{\beta^4}$, then it would follow that $\frac{1}{m}\cdot \sum_{i=1}^{m}\langle \boldsymbol{a}_i,\boldsymbol{z}_{\epsilon}\rangle^2 > 2$. However, this contradicts the fact that $\frac{1}{m}\cdot \sum_{i=1}^{m}\langle \boldsymbol{a}_i,\boldsymbol{z}_{\epsilon}\rangle^2 \leq 1+\delta_0 \leq 2$ for all $\boldsymbol{z}_{\epsilon} \in \mathcal{N}(S\cap \mathbb{S}^{n-1},\epsilon)$, as established by Lemma \ref{lem: cover_union}.

Moreover, denote $ h_{\alpha}(t) = \sqrt{ g_{\alpha}(t)}$.  Then it obtains that 
   \begin{equation}\label{eqn: phi1_temp2}
	   	\begin{split}
	   		& \frac{1}{m}\cdot \sum_{i=1}^{m}\left|(\langle \boldsymbol{a}_{i}, \boldsymbol{z} \rangle)^2\cdot g_{\alpha}(\langle \boldsymbol{a}_{i}, \boldsymbol{x} \rangle) - (\langle \boldsymbol{a}_{i}, \boldsymbol{z}_{\epsilon} \rangle)^2 \cdot  g_{\alpha}(\langle \boldsymbol{a}_{i}, \boldsymbol{x}_{\epsilon} \rangle)\right| \cdot \boldsymbol{1}_{\{|\langle \boldsymbol{a}_{i}, \boldsymbol{z}_{\epsilon} \rangle| < \beta^2\}}\cdot \boldsymbol{1}_{\{ |\langle \boldsymbol{a}_{i}, \boldsymbol{z} \rangle| < \beta\}}\\
	   		 \leq&  \frac{\beta + \beta^2}{m}\cdot\sum_{i=1}^{m} \left|\langle \boldsymbol{a}_{i}, \boldsymbol{z} \rangle\cdot { h_{\alpha}(\langle \boldsymbol{a}_{i}, \boldsymbol{x} \rangle)} - \langle\boldsymbol{a}_{i}, \boldsymbol{z}_{\epsilon} \rangle\cdot { h_{\alpha}(\langle \boldsymbol{a}_{i}, \boldsymbol{x}_{\epsilon} \rangle)}\right|\\
	   		 \leq&  \frac{\beta + \beta^2}{m}\cdot\sum_{i=1}^{m}\Big(\left| \langle \boldsymbol{a}_{i}, \boldsymbol{z} -\vz_{\epsilon}\rangle \cdot { h_{\alpha}(\langle \boldsymbol{a}_{i}, \boldsymbol{x} \rangle)} \right| 
+ \left|\langle \boldsymbol{a}_{i}, \boldsymbol{z}_{\epsilon} \rangle|\cdot| { h_{\alpha}(\langle \boldsymbol{a}_{i}, \boldsymbol{x} \rangle)}- { h_{\alpha}(\langle \boldsymbol{a}_{i}, \boldsymbol{x}_{\epsilon} \rangle)} \right|\Big)\\
	   		 \leq&  \frac{\beta + \beta^2}{m}\cdot\sum_{i=1}^{m}\Big(\left| \langle \boldsymbol{a}_{i}, \boldsymbol{z} -\vz_{\epsilon}\rangle  \right| 
+ \left|\langle \boldsymbol{a}_{i}, \boldsymbol{z}_{\epsilon} \rangle|\cdot| { h_{\alpha}(\langle \boldsymbol{a}_{i}, \boldsymbol{x} \rangle)}- { h_{\alpha}(\langle \boldsymbol{a}_{i}, \boldsymbol{x}_{\epsilon} \rangle)} \right|\Big)\\
			\overset{(c)}{\leq} &  \frac{\beta + \beta^2}{m}\cdot\sum_{i=1}^{m}(\left| \langle \boldsymbol{a}_{i}, \boldsymbol{z}-\boldsymbol{z}_{\epsilon} \rangle \right| + \frac{10}{\alpha}\cdot |\langle \boldsymbol{a}_{i}, \boldsymbol{z}_{\epsilon} \rangle| \cdot \left| \langle \boldsymbol{a}_{i}, \boldsymbol{x}-\boldsymbol{x}_{\epsilon} \rangle \right|)\\
			\overset{(d)}\leq &(\beta+\beta^2)\cdot \frac{10}{\alpha}\cdot(1+\delta_0)\cdot (\|\boldsymbol{z}-\boldsymbol{z}_{\epsilon}\|_2+\|\boldsymbol{x}-\boldsymbol{x}_{\epsilon}\|_2)\leq \frac{80\beta^2\cdot\epsilon}{\alpha}=\delta_0,
	   		\end{split}
	   \end{equation}
where (c) follows from $\left| h_{\alpha}(t_1)-  h_{\alpha}(t_2) \right| \leq \frac{10}{\alpha}\cdot |t_1-t_2|$, for any $t_1,t_2\in \mathbb{R}$, and (d) is according to Lemma \ref{lem: RIP}. Thus the proof of (\ref{eqn: temp_x2_3}) is completed. 
	    \end{proof}
	    
	    \begin{lemma}\label{lem: lower}
	Let $\{\va_i\}_{i=1}^m$ be a sequence of independent standard Gaussian random vectors in $\mathbb{R}^n$, i.e., $\va_i \sim \mathcal{N}(\boldsymbol{0}, \boldsymbol{I}_n)$,  for $i = 1, \ldots, m$.  For any fixed  non-negative constants $\alpha,\beta,\delta$ such that $\alpha,\delta\in (0,1)$ and $\beta\geq 10$,  with probability at least $1-2\exp(-c_{\alpha,\beta,\delta}\cdot m)$, we have: 
	    \[
	     \frac{1}{m}\cdot \sum_{i=1}^{m}(\langle \boldsymbol{a}_{i}, \boldsymbol{z} \rangle)^2\cdot \boldsymbol{1}_{\{\langle \boldsymbol{a}_{i}, \boldsymbol{x} \rangle >\alpha\}}\cdot \boldsymbol{1}_{\{|\langle \boldsymbol{a}_{i}, \boldsymbol{z} \rangle| < \beta\}}\geq  \frac{1}{2}-2.2\alpha -\exp(-\beta/4)- \frac{2}{\beta}-\delta,
	    \]
	    for any $\boldsymbol{x},\boldsymbol{z}\in S\cap \mathbb{S}^{n-1}$, provided that 
	    \[
	    m\gtrsim \frac{\beta^8}{\alpha^4\cdot \delta^4}\cdot \omega^2((S-S)\cap \mathbb{B}^{n}).
	    \]     Here $c_{\alpha,\beta_,\delta}$ is a positive constant depending on $\alpha,\beta$ and $\delta$. 
	    \end{lemma}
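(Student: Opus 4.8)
The plan is to follow the template of the proof of Lemma~\ref{lem: upper}, with every inequality reversed, replacing the hard indicators by Lipschitz surrogates that lie \emph{below} them. First I would introduce $g_{\alpha}:\R\to\R$ with $g_{\alpha}(t)=0$ for $t\le\alpha$, $g_{\alpha}(t)=(t-\alpha)^2/(0.1\alpha)^2$ for $\alpha<t\le 1.1\alpha$, and $g_{\alpha}(t)=1$ for $t>1.1\alpha$, so that $\boldsymbol{1}_{\{t>1.1\alpha\}}\le g_{\alpha}(t)\le\boldsymbol{1}_{\{t>\alpha\}}$ and $h_{\alpha}:=\sqrt{g_{\alpha}}$ is $(10/\alpha)$-Lipschitz with $0\le h_{\alpha}\le 1$. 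Since $\langle\va_i,\vz\rangle^2\boldsymbol{1}_{\{|\langle\va_i,\vz\rangle|<\beta\}}\ge 0$, this reduces the target to a lower bound on $\frac{1}{m}\sum_{i=1}^m\langle\va_i,\vz\rangle^2 g_{\alpha}(\langle\va_i,\vx\rangle)\boldsymbol{1}_{\{|\langle\va_i,\vz\rangle|<\beta\}}$, uniformly over $\vx,\vz\in S\cap\mathbb{S}^{n-1}$.

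Next I would peel off the large-$\langle\va_i,\vz\rangle$ contribution: writing $\boldsymbol{1}_{\{|\langle\va_i,\vz\rangle|<\beta\}}=1-\boldsymbol{1}_{\{|\langle\va_i,\vz\rangle|\ge\beta\}}$ and using $g_{\alpha}\le 1$, the sum above is at least $\frac{1}{m}\sum_i\langle\va_i,\vz\rangle^2 g_{\alpha}(\langle\va_i,\vx\rangle)-\frac{1}{m}\sum_i\langle\va_i,\vz\rangle^2\boldsymbol{1}_{\{|\langle\va_i,\vz\rangle|\ge\beta\}}$, and Lemma~\ref{lem: middle} bounds the subtracted sum, uniformly in $\vz\in S\cap\mathbb{S}^{n-1}$, by $\exp(-0.81\beta^2/4)+\delta_0\le\exp(-\beta/4)+\delta_0$. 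The summand $\langle\va_i,\vz\rangle^2 g_{\alpha}(\langle\va_i,\vx\rangle)$ of the remaining sum is unbounded, so before applying the net machinery I would multiply it by a smooth cutoff $\tilde g_{\beta}(\langle\va_i,\vz\rangle)$ with $\boldsymbol{1}_{\{|t|\le 0.9\beta\}}\le\tilde g_{\beta}(t)\le\boldsymbol{1}_{\{|t|<\beta\}}$; this only decreases the sum, and the discarded mass lives in the band $0.9\beta\le|\langle\va_i,\vz\rangle|<\beta$, which is absorbed into an $O(1/\beta)$ error by a second-moment estimate (Lemma~\ref{lem: beta}, or a Markov/counting bound). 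The resulting surrogate $f(\vx,\vz,\va):=\langle\va,\vz\rangle^2 g_{\alpha}(\langle\va,\vx\rangle)\tilde g_{\beta}(\langle\va,\vz\rangle)$ is bounded by $\beta^2$, and $\sqrt f=|\langle\va,\vz\rangle|\,h_{\alpha}(\langle\va,\vx\rangle)\sqrt{\tilde g_{\beta}(\langle\va,\vz\rangle)}$ is a product of bounded, $O(\beta/\alpha)$-Lipschitz factors, so $|f(\vx,\vz,\va)-f(\vx',\vz',\va)|\lesssim(\beta^2/\alpha)(|\langle\va,\vx-\vx'\rangle|+|\langle\va,\vz-\vz'\rangle|)$.

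Then I would run the standard net argument on $f$: take an $\epsilon$-net of $S\cap\mathbb{S}^{n-1}$ with $\epsilon\sim\alpha\delta_0/\beta^2$; apply Lemma~\ref{lem: cover_union} to $\beta^{-2}f$ (whose sub-exponential norm is an absolute constant) to obtain $\frac{1}{m}\sum_i f(\vx_{\epsilon},\vz_{\epsilon},\va_i)\ge\mathbb{E}f(\vx_{\epsilon},\vz_{\epsilon},\va)-\delta_0$ uniformly over the net; transfer from the net to all of $S\cap\mathbb{S}^{n-1}$ using Lemma~\ref{lem: RIP} together with the Lipschitz bound, at cost $\lesssim(\beta^2/\alpha)\epsilon\le\delta_0$; and control the net cardinality by Sudakov's inequality (Lemma~\ref{lem: sudakov}), which yields the stated requirement $m\gtrsim\beta^8\alpha^{-4}\delta^{-4}\omega^2((S-S)\cap\mathbb{B}^{n})$. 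Finally $\mathbb{E}f(\vx,\vz,\va)\ge\mathbb{E}(\langle\va,\vz\rangle^2\boldsymbol{1}_{\{\langle\va,\vx\rangle>1.1\alpha\}}\boldsymbol{1}_{\{|\langle\va,\vz\rangle|<0.9\beta\}})$, which by \eqref{eqn: alpha_beta} of Lemma~\ref{lem: final} (applied with $1.1\alpha$ and $0.9\beta$ in place of $\alpha,\beta$, plus a direct estimate when $\beta$ is close to $10$) is at least $\tfrac{1}{2}-2.2\alpha-\exp(-0.81\beta^2/4)\ge\tfrac{1}{2}-2.2\alpha-\exp(-\beta/4)$; collecting all the error terms and setting $\delta$ to be a suitable constant multiple of $\delta_0$ then gives the claimed bound $\tfrac{1}{2}-2.2\alpha-\exp(-\beta/4)-2/\beta-\delta$.

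The step I expect to be the main obstacle is the one-sided truncation bookkeeping: unlike in Lemma~\ref{lem: upper}, the factor $\boldsymbol{1}_{\{|\langle\va_i,\vz\rangle|<\beta\}}$ cannot be dropped by bounding it above by $1$, so it must instead be replaced by a \emph{smaller} smooth cutoff while keeping the summand's square root Lipschitz, and one has to check that the total loss incurred by replacing every indicator with a smaller smooth surrogate --- together with the Lemma~\ref{lem: middle} tail and the expectation estimate --- stays within the advertised $O(\alpha)+\exp(-\beta/4)+O(1/\beta)$ slack. The remaining ingredients (the sub-exponential norm bound, the net transfer via RIP, and Sudakov's inequality) are routine and essentially identical to the proofs of Lemmas~\ref{lem: middle} and \ref{lem: upper}.
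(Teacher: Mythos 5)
Your proposal is correct and genuinely differs in structure from the paper's proof, so a comparison is worthwhile. The paper does not peel off the $\boldsymbol{1}_{\{|\langle\va_i,\vz\rangle|\geq\beta\}}$ tail as a separate step. Instead, it decomposes the sum directly around the net: the surrogate on the net points $(\vx_\epsilon,\vz_\epsilon)$ is $\langle\va_i,\vz_\epsilon\rangle^2 g_\alpha(\langle\va_i,\vx_\epsilon\rangle)\boldsymbol{1}_{\{|\langle\va_i,\vz_\epsilon\rangle|<\sqrt{\beta}\}}$ with a \emph{hard} truncation at level $\sqrt{\beta}$, which is chosen precisely so that the Lemma~\ref{lem: final} expectation bound (applied with $\sqrt{\beta}$ in place of $\beta$) produces the $\exp(-\beta/4)$ term in the statement. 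The mismatch between the $\beta$ cutoff on the actual point $\vz$ and the $\sqrt{\beta}$ cutoff on $\vz_\epsilon$ is controlled by a counting argument on the set $\{i:|\langle\va_i,\vz\rangle|\geq\beta,\ |\langle\va_i,\vz_\epsilon\rangle|<\sqrt{\beta}\}$; that bookkeeping is what produces the $2/\beta$ term (claim \eqref{eqn: temp_x3_2}). The net transfer itself uses the hard truncations as multiplicative indicators to keep the differences bounded (claim \eqref{eqn: temp_x3_3}), rather than a globally Lipschitz surrogate. Your route --- subtracting the tail via Lemma~\ref{lem: middle} first, then lower-bounding the remaining unbounded sum by a smooth cutoff $\tilde g_\beta$ and running the net argument on a genuinely Lipschitz-in-$(\vx,\vz)$ function $f$ --- is cleaner and avoids the two-cutoff mismatch entirely. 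It produces a marginally stronger bound: applying \eqref{eqn: alpha_beta} at level $0.9\beta$ gives an expectation loss $\exp(-(0.9\beta)^2/4)$, which for $\beta\geq 10$ is far below the paper's $2/\beta$, so the stated inequality follows a fortiori.

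One small clarification on your own accounting: the ``$O(1/\beta)$ error from the discarded mass in the band $0.9\beta\leq|\langle\va_i,\vz\rangle|<\beta$'' is not actually needed in your scheme. Multiplying the remaining sum by $\tilde g_\beta\leq 1$ only decreases it, so for a lower bound there is nothing to absorb there; the only place the cutoff level enters is in $\mathbb{E} f$, and that loss is already the tiny $\exp(-(0.9\beta)^2/4)$ from Lemma~\ref{lem: final}. The $2/\beta$ in the statement arises from the paper's two-level ($\beta$ versus $\sqrt{\beta}$) truncation trick, and does not arise intrinsically in yours --- you can simply observe that your sharper bound implies the claimed one. The remaining ingredients (sub-exponential norm bound on $\beta^{-2}f$, Lemma~\ref{lem: cover_union}, Lemma~\ref{lem: RIP} for the transfer, Sudakov for the cardinality) are used in the same way as in the paper's proofs of Lemmas~\ref{lem: middle} and~\ref{lem: upper}, and your Lipschitz estimate $|f(\vx,\vz,\va)-f(\vx',\vz',\va)|\lesssim(\beta^2/\alpha)(|\langle\va,\vx-\vx'\rangle|+|\langle\va,\vz-\vz'\rangle|)$ is correct.
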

\begin{proof}
Denote the function $g_{\alpha}:\mathbb{R}\rightarrow \mathbb{R}$ as:
\begin{equation}
\label{eqn: g_alpha_2}
	    		g_{\alpha}(t): =
	    		\begin{cases}
	    			1, &t \geq 1.1\alpha;\\
	    			\frac{(t-\alpha)^2}{(0.1\alpha)^2}, & \alpha \leq t < 1.1\alpha; \\
	    			0,&t < \alpha.
	    		\end{cases}
	    \end{equation}
	    By direct calculation, we have 
  \[
	    	\begin{split}
	    		& \frac{1}{m}\cdot \sum_{i=1}^{m}(\langle \boldsymbol{a}_{i}, \boldsymbol{z} \rangle)^2 \cdot \boldsymbol{1}_{\{\langle \boldsymbol{a}_{i}, \boldsymbol{x} \rangle >\alpha\} }\cdot \boldsymbol{1}_{\{|\langle \boldsymbol{a}_{i}, \boldsymbol{z} \rangle| < \beta\}}\\
	    		 \geq &\frac{1}{m}\cdot \sum_{i=1}^{m}(\langle \boldsymbol{a}_{i}, \boldsymbol{z} \rangle)^2 \cdot  g_{\alpha}(\langle \boldsymbol{a}_{i}, \boldsymbol{x} \rangle)\cdot  \boldsymbol{1}_{\{|\langle \boldsymbol{a}_{i}, \boldsymbol{z} \rangle| < \beta\}}\\
			 \geq &\frac{1}{m}\cdot \sum_{i=1}^{m}(\langle \boldsymbol{a}_{i}, \boldsymbol{z} \rangle)^2\cdot  \boldsymbol{1}_{\{\langle \boldsymbol{a}_{i}, \boldsymbol{x} \rangle \geq 1.1\alpha\} }\cdot \boldsymbol{1}_{\{|\langle \boldsymbol{a}_{i}, \boldsymbol{z} \rangle| < \beta\}}.
	    	\end{split}
	    \]

    Therefore, in order to get the lower bound of 
    \[
      \frac{1}{m}\cdot \sum_{i=1}^{m}(\langle \boldsymbol{a}_{i}, \boldsymbol{z} \rangle)^2 \cdot \boldsymbol{1}_{\{\langle \boldsymbol{a}_{i}, \boldsymbol{x} \rangle > \alpha\} }\cdot \boldsymbol{1}_{\{|\langle \boldsymbol{a}_{i}, \boldsymbol{z} \rangle| < \beta\}},
      \]
       it is enough to estimate the lower bound of 
	    \[
	    \frac{1}{m}\cdot \sum_{i=1}^{m}(\langle \boldsymbol{a}_{i}, \boldsymbol{z} \rangle)^2 \cdot  g_{\alpha}(\langle \boldsymbol{a}_{i}, \boldsymbol{x} \rangle)\cdot  \boldsymbol{1}_{\{|\langle \boldsymbol{a}_{i}, \boldsymbol{z} \rangle| < \beta\}}.
	    \] 
	    
	    For any fixed $\delta_0<1/2$, set $\epsilon:=\frac{\alpha\cdot \delta_0}{80\beta^2}$. Then $\epsilon\leq \delta_0$. Take $\mathcal{N}(S\cap \mathbb{S}^{n-1},\epsilon)$ as an $\epsilon$-net of $S\cap \mathbb{S}^{n-1}$. We claim that with probability at least $1-2\exp(-c_{\alpha,\beta,\delta_0} \cdot m)$ for some positive constant $c_{\alpha,\beta,\delta_0}$ depending on $\alpha,\beta$ and $\delta_0$, the following inequalities hold:
	    \begin{equation}
	    \label{eqn: temp_x3_1}
	      \frac{1}{m}\cdot \sum_{i=1}^{m}(\langle \boldsymbol{a}_{i}, \boldsymbol{z}_{\epsilon} \rangle)^2\cdot  g_{\alpha}(\langle \boldsymbol{a}_{i}, \boldsymbol{x}_{\epsilon} \rangle) \cdot \boldsymbol{1}_{\{|\langle \boldsymbol{a}_{i}, \boldsymbol{z}_{\epsilon} \rangle| < \sqrt{\beta}\}} \geq  \frac{1}{2}-2.2\alpha-\exp(-\beta/4)-\delta_0,
	    \end{equation}
	    \begin{equation}
	    \label{eqn: temp_x3_2}
	     \frac{1}{m}\cdot \sum_{i=1}^{m}(\langle \boldsymbol{a}_{i}, \boldsymbol{z}_{\epsilon} \rangle)^2 \cdot g_{\alpha}(\langle \boldsymbol{a}_{i}, \boldsymbol{x}_{\epsilon} \rangle)\cdot \boldsymbol{1}_{\{|\langle \boldsymbol{a}_{i}, \boldsymbol{z} \rangle| \geq \beta\}}\cdot \boldsymbol{1}_{\{ |\langle \boldsymbol{a}_{i}, \boldsymbol{z}_{\epsilon} \rangle| < \sqrt{\beta}\}}\leq \frac{2}{\beta},
	    \end{equation}
	    \begin{equation}
	    \label{eqn: temp_x3_3}
	     \frac{1}{m}\cdot \sum_{i=1}^{m}\left|(\langle \boldsymbol{a}_{i}, \boldsymbol{z} \rangle)^2  \cdot g_{\alpha}(\langle \boldsymbol{a}_{i}, \boldsymbol{x} \rangle) - (\langle \boldsymbol{a}_{i}, \boldsymbol{z}_{\epsilon} \rangle)^2 \cdot  g_{\alpha}(\langle \boldsymbol{a}_{i}, \boldsymbol{x}_{\epsilon} \rangle)\right|\cdot \boldsymbol{1}_{\{|\langle \boldsymbol{a}_{i}, \boldsymbol{z}_{\epsilon} \rangle| < \sqrt{\beta}\}}\cdot \boldsymbol{1}_{\{|\langle \boldsymbol{a}_{i}, \boldsymbol{z} \rangle| < \beta\}}\leq \delta_0,
	    \end{equation}  
	    for any $\vx,\vz\in S\cap \mathbb{S}^{n-1}$ and $\vx_{\epsilon},\vz_{\epsilon}\in \mathcal{N}(S\cap \mathbb{S}^{n-1},\epsilon)$ such that $\|\vx-\vx_{\epsilon}\|_2\leq \epsilon$ and $\|\vz-\vz_{\epsilon}\|_2\leq \epsilon$, provided that 
	    \[
	    m\gtrsim \frac{\beta^8}{\alpha^4\cdot \delta_0^4}\cdot \omega^2((S-S)\cap \mathbb{B}^{n}).
	    \] 
	The proofs of them will be presented at the end.

Therefore, for all $\vx,\vz\in S\cap \mathbb{S}^{n-1}$ and $\vx_{\epsilon},\vz_{\epsilon}\in \mathcal{N}(S\cap \mathbb{S}^{n-1},\epsilon)$ such that $\|\vx-\vx_{\epsilon}\|_2\leq \epsilon$ and $\|\vz-\vz_{\epsilon}\|_2\leq \epsilon$, we have 

 \begin{equation}\label{eqn: phi2_temp1}
 \small
	    	\begin{split}
	    		&\frac{1}{m}\cdot \sum_{i=1}^{m}(\langle \boldsymbol{a}_{i}, \boldsymbol{z} \rangle)^2 \cdot \boldsymbol{1}_{\{\langle \boldsymbol{a}_{i}, \boldsymbol{x} \rangle >\alpha\} }\cdot \boldsymbol{1}_{\{|\langle \boldsymbol{a}_{i}, \boldsymbol{z} \rangle| < \beta\}}\leq  \frac{1}{m}\cdot \sum_{i=1}^{m}(\langle \boldsymbol{a}_{i}, \boldsymbol{z} \rangle)^2\cdot   g_{\alpha}(\langle \boldsymbol{a}_{i}, \boldsymbol{x} \rangle)\cdot\boldsymbol{1}_{\{|\langle \boldsymbol{a}_{i}, \boldsymbol{z} \rangle| < {\beta}\}}\\
	    		 =&  \frac{1}{m}\cdot \sum_{i=1}^{m}(\langle \boldsymbol{a}_{i}, \boldsymbol{z} \rangle)^2\cdot    g_{\alpha}(\langle \boldsymbol{a}_{i}, \boldsymbol{x} \rangle)\cdot \boldsymbol{1}_{\{|\langle \boldsymbol{a}_{i}, \boldsymbol{z} \rangle| < \beta\}} - \frac{1}{m}\cdot \sum_{i=1}^{m}(\langle \boldsymbol{a}_{i}, \boldsymbol{z}_{\epsilon} \rangle)^2 \cdot  g_{\alpha}(\langle \boldsymbol{a}_{i}, \boldsymbol{x}_{\epsilon} \rangle)\cdot \boldsymbol{1}_{\{|\langle \boldsymbol{a}_{i}, \boldsymbol{z}_{\epsilon} \rangle| < \sqrt{\beta}\}}\\
			 &+\frac{1}{m}\cdot \sum_{i=1}^{m}(\langle \boldsymbol{a}_{i}, \boldsymbol{z}_{\epsilon} \rangle)^2  \cdot  g_{\alpha}(\langle \boldsymbol{a}_{i}, \boldsymbol{x}_{\epsilon} \rangle)\cdot \boldsymbol{1}_{\{|\langle \boldsymbol{a}_{i}, \boldsymbol{z}_{\epsilon} \rangle| < \sqrt{\beta}\}}\\
	    		 \geq& \frac{1}{m}\cdot \sum_{i=1}^{m}(\langle \boldsymbol{a}_{i}, \boldsymbol{z}_{\epsilon} \rangle)^2 \cdot  g_{\alpha}(\langle \boldsymbol{a}_{i}, \boldsymbol{x}_{\epsilon} \rangle)\cdot \boldsymbol{1}_{\{|\langle \boldsymbol{a}_{i}, \boldsymbol{z}_{\epsilon} \rangle| < \sqrt{\beta}\}} \\
			 &- \frac{1}{m}\cdot \sum_{i=1}^{m}(\langle \boldsymbol{a}_{i}, \boldsymbol{z}_{\epsilon} \rangle)^2 \cdot  g_{\alpha}(\langle \boldsymbol{a}_{i}, \boldsymbol{x}_{\epsilon} \rangle)\cdot \boldsymbol{1}_{\{|\langle \boldsymbol{a}_{i}, \boldsymbol{z} \rangle| \geq \beta\}}\cdot \boldsymbol{1}_{\{|\langle \boldsymbol{a}_{i}, \boldsymbol{z}_{\epsilon} \rangle| < \sqrt{\beta}\}} \\
	    		 & - \frac{1}{m}\cdot \sum_{i=1}^{m}\Big|(\langle \boldsymbol{a}_{i}, \boldsymbol{z} \rangle)^2 \cdot  g_{\alpha}(\langle \boldsymbol{a}_{i}, \boldsymbol{x} \rangle) - (\langle \boldsymbol{a}_{i}, \boldsymbol{z}_{\epsilon} \rangle)^2 \cdot  g_{\alpha}(\langle \boldsymbol{a}_{i}, \boldsymbol{x}_{\epsilon} \rangle)\Big|\cdot \boldsymbol{1}_{\{|\langle \boldsymbol{a}_{i}, \boldsymbol{z}_{\epsilon} \rangle| < \sqrt{\beta}\}}\cdot \boldsymbol{1}_{\{ |\langle \boldsymbol{a}_{i}, \boldsymbol{z} \rangle| < \beta\}}\\
			{\geq} &\frac{1}{2}-2.2\alpha -\exp(-\beta/4)- \frac{2}{\beta}-2\delta_0=\frac{1}{2}-2.2\alpha -\exp(-\beta/4)- \frac{2}{\beta}-\delta,
	    	\end{split}
	    \end{equation}
	    where $\delta:=2\delta_0$. Then the proof is completed. 
	
	Now we prove the results in (\ref{eqn: temp_x3_1}), (\ref{eqn: temp_x3_2}) and (\ref{eqn: temp_x3_3}). 
	
	 According to Lemma \ref{lem: cover_union} and Lemma \ref{lem: sudakov}, with probability at least $1-2\exp(-c_{\alpha,\beta,\delta_0} \cdot m)$, we directly have:
	    \begin{equation}
	    \begin{aligned}
	  & \frac{1}{m}\cdot \sum_{i=1}^{m}(\langle \boldsymbol{a}_{i}, \boldsymbol{z}_{\epsilon} \rangle)^2\cdot  g_{\alpha}(\langle \boldsymbol{a}_{i}, \boldsymbol{x}_{\epsilon} \rangle) \cdot \boldsymbol{1}_{\{|\langle \boldsymbol{a}_{i}, \boldsymbol{z}_{\epsilon} \rangle| < \sqrt{\beta}\}} \\
	  \geq&  \frac{1}{m}\cdot \sum_{i=1}^{m}\mathbb{E}\langle \boldsymbol{a}_{i}, \boldsymbol{z}_{\epsilon}\rangle^2  \cdot g_{\alpha}(\langle \boldsymbol{a}_{i}, \boldsymbol{x}_{\epsilon} \rangle)\cdot  \boldsymbol{1}_{\{|\langle \boldsymbol{a}_{i}, \boldsymbol{z}_{\epsilon} \rangle| < \sqrt{\beta}\}}-\delta_0\\
	   \overset{(a)} \geq & \frac{1}{m}\cdot \sum_{i=1}^{m}\mathbb{E}(\langle \boldsymbol{a}_{i}, \boldsymbol{z}_{\epsilon} \rangle)^2\cdot  \boldsymbol{1}_{\{\langle \boldsymbol{a}_{i}, \boldsymbol{x}_{\epsilon} \rangle \geq 1.1\alpha\} }\cdot \boldsymbol{1}_{\{|\langle \boldsymbol{a}_{i}, \boldsymbol{z}_{\epsilon} \rangle| < \sqrt{\beta}\}} -  \delta_0 \\
	    \overset{(b)}\leq & \frac{1}{2}-2.2\alpha -\exp(-\beta/4)-\delta_0,
	    \end{aligned}
	    \end{equation}
	  for any $\boldsymbol{z}_{\epsilon},\boldsymbol{x}_{\epsilon}\in \mathcal{N}(S\cap \mathbb{S}^{n-1},\epsilon)$,  provided that 
	  \[
	   m\gtrsim \frac{1}{\epsilon^4}\cdot \omega^2((S-S)\cap \mathbb{B}^{n})\geq \frac{1}{\epsilon^2\cdot\delta_0^2}\cdot \omega^2((S-S)\cap \mathbb{B}^{n}) \geq   \frac{1}{\epsilon^2\cdot\delta_0^2}\cdot \omega^2(S\cap \mathbb{S}^{n-1}).
	  \] Here $(a)$ if follows from the definition of $g_{\alpha}$ in (\ref{eqn: g_alpha_2}), and $(b)$ is based on (\ref{eqn: alpha_beta}) in Lemma \ref{lem: final} by replacing $\alpha$ into $1.1\alpha$. Then the proof of (\ref{eqn: temp_x3_1}) is completed. 

Besides, (\ref{eqn: temp_x3_2}) can be directly derived by:
\[
\begin{aligned}
& \frac{1}{m}\cdot \sum_{i=1}^{m}(\langle \boldsymbol{a}_{i}, \boldsymbol{z}_{\epsilon} \rangle)^2 \cdot g_{\alpha}(\langle \boldsymbol{a}_{i}, \boldsymbol{x}_{\epsilon} \rangle)\cdot\boldsymbol{1}_{\{|\langle \boldsymbol{a}_{i}, \boldsymbol{z} \rangle| \geq \beta\}}\cdot \boldsymbol{1}_{\{|\langle \boldsymbol{a}_{i}, \boldsymbol{z}_{\epsilon} \rangle| < \sqrt{\beta}\}}\\
 \leq&  \frac{1}{m}\cdot \sum_{i=1}^{m}\beta\cdot \boldsymbol{1}_{\{|\langle \boldsymbol{a}_{i}, \boldsymbol{z} \rangle| \geq \beta\}}\leq \frac{2}{\beta}.
 \end{aligned}
\]
The last inequality above follows from 
\[
\#\{i\in\{1,\cdots,m\}\ :\ |\langle \boldsymbol{a}_{i}, \boldsymbol{z} \rangle| > \beta\}\leq \frac{2}{\beta^2}.
\]
If the cardinality of the set $\{i\in\{1,\ldots,m\} : |\langle \boldsymbol{a}_{i}, \boldsymbol{z} \rangle| > \beta\}$ exceeds $\frac{2}{\beta^2}$, then it would follow that $\frac{1}{m}\cdot \sum_{i=1}^{m}(\langle \boldsymbol{a}_i,\boldsymbol{z}\rangle)^2 > 2$. However, this contradicts the fact that $\frac{1}{m}\cdot \sum_{i=1}^{m}(\langle \boldsymbol{a}_i,\boldsymbol{z}\rangle)^2 \leq 1+\delta_0 \leq 2$  for all $\boldsymbol{z} \in S\cap \mathbb{S}^{n-1}$, as established by Lemma \ref{lem: RIP}.

Moreover, denote $ h_{\alpha}(t) = \sqrt{ g_{\alpha}(t)}$.  Then it obtains that:
\begin{equation}\label{eqn: phi2_temp2}
	    	\begin{split}
& \frac{1}{m}\cdot \sum_{i=1}^{m}\left|(\langle \boldsymbol{a}_{i}, \boldsymbol{z} \rangle)^2 \cdot  g_{\alpha}(\langle \boldsymbol{a}_{i}, \boldsymbol{x} \rangle) - (\langle \boldsymbol{a}_{i}, \boldsymbol{z}_{\epsilon} \rangle)^2 \cdot  g_{\alpha}(\langle \boldsymbol{a}_{i}, \boldsymbol{x}_{\epsilon} \rangle)\right| \cdot \boldsymbol{1}_{\{|\langle \boldsymbol{a}_{i}, \boldsymbol{z}_{\epsilon} \rangle| <  \sqrt{\beta}\}}\cdot \boldsymbol{1}_{\{|\langle \boldsymbol{a}_{i}, \boldsymbol{z} \rangle| < \beta\}}\\
	    		 \leq&  \frac{\beta+\sqrt{\beta} }{m}\cdot \sum_{i=1}^{m}\left|\langle \boldsymbol{a}_{i}, \boldsymbol{z} \rangle \cdot { h_{\alpha}(\langle \boldsymbol{a}_{i}, \boldsymbol{x} \rangle)} - \langle \boldsymbol{a}_{i}, \boldsymbol{z}_{\epsilon} \rangle \cdot { h_{\alpha}(\langle \boldsymbol{a}_{i}, \boldsymbol{x}_{\epsilon} \rangle)}\right|\\
			  \leq&  \frac{\beta + \sqrt{\beta}}{m}\cdot\sum_{i=1}^{m}\Big(\left| \langle \boldsymbol{a}_{i}, \boldsymbol{z} -\vz_{\epsilon}\rangle \cdot { h_{\alpha}(\langle \boldsymbol{a}_{i}, \boldsymbol{x} \rangle)} \right| 
+ \left|\langle \boldsymbol{a}_{i}, \boldsymbol{z}_{\epsilon} \rangle|\cdot| { h_{\alpha}(\langle \boldsymbol{a}_{i}, \boldsymbol{x} \rangle)}- { h_{\alpha}(\langle \boldsymbol{a}_{i}, \boldsymbol{x}_{\epsilon} \rangle)} \right|\Big)\\
			  \leq&  \frac{\beta + \sqrt{\beta}}{m}\cdot\sum_{i=1}^{m}\Big(\left| \langle \boldsymbol{a}_{i}, \boldsymbol{z} -\vz_{\epsilon}\rangle  \right| 
+ \left|\langle \boldsymbol{a}_{i}, \boldsymbol{z}_{\epsilon} \rangle|\cdot| { h_{\alpha}(\langle \boldsymbol{a}_{i}, \boldsymbol{x} \rangle)}- { h_{\alpha}(\langle \boldsymbol{a}_{i}, \boldsymbol{x}_{\epsilon} \rangle)} \right|\Big)\\
			\overset{(c)}{\leq} &  \frac{\beta + \sqrt{\beta}}{m}\cdot\sum_{i=1}^{m}\Big(\left| \langle \boldsymbol{a}_{i}, \boldsymbol{z}-\boldsymbol{z}_{\epsilon} \rangle \right| + \frac{10}{\alpha}\cdot |\langle \boldsymbol{a}_{i}, \boldsymbol{z}_{\epsilon} \rangle| \cdot \left| \langle \boldsymbol{a}_{i}, \boldsymbol{x}-\boldsymbol{x}_{\epsilon} \rangle \right|\Big)\\
			\overset{(d)}\leq &(\beta+\sqrt{\beta})\cdot \frac{10}{\alpha}\cdot(1+\delta_0)\cdot (\|\boldsymbol{z}-\boldsymbol{z}_{\epsilon}\|_2+\|\boldsymbol{x}-\boldsymbol{x}_{\epsilon}\|_2)\leq \frac{80\beta\cdot\epsilon}{\alpha}\leq \delta_0,
	    	\end{split}
\end{equation}
where (c) follows from $\left| h_{\alpha}(t_1)-  h_{\alpha}(t_2) \right| \leq \frac{10}{\alpha}\cdot |t_1-t_2|$, for any $t_1,t_2\in \mathbb{R}$, and (d) is according to Lemma \ref{lem: RIP}. Thus the proof of (\ref{eqn: temp_x3_3}) is completed. 

\end{proof}

\subsection{Proof of Theorem \ref{thm: small}}
\begin{proof}[Proof of Theorem \ref{thm: small}]
If $\boldsymbol{x}=\boldsymbol{y}$, the inequality (\ref{eqn: small_C}) is satisfied immediately. Now we discuss the case when $\boldsymbol{x}\neq \boldsymbol{y}$. Similarly as the proof of Theorem \ref{thm: large}, we need only consider the scenario where $\boldsymbol{x}\in S\cap \mathbb{S}^{n-1}$ and $\boldsymbol{y}\in S\cap\mathbb{B}^{n}$, subject to the condition $\|\boldsymbol{x}-\boldsymbol{y}\|_2\leq C$ and $\vx\neq \vy$. 

Denote $\boldsymbol{a}_i\in \mathbb{R}^n$, $i=1,\ldots,m$, as the row elements of the matrix $\boldsymbol{A}$. Then we have 
\begin{equation}
\label{eqn: sigma_xy}
\frac{1}{m}\cdot \frac{\|\sigma(\boldsymbol{A}\boldsymbol{x})-\sigma(\boldsymbol{A}\boldsymbol{y})\|_2^2}{\|\boldsymbol{x}-\boldsymbol{y}\|_2^2}=\frac{1}{m}\cdot\sum_{i=1}^m\frac{(\sigma(\langle \boldsymbol{a}_i,\boldsymbol{x}\rangle)-\sigma(\langle \boldsymbol{a}_i,\boldsymbol{y}\rangle))^2}{\|\boldsymbol{x}-\boldsymbol{y}\|_2^2}.
\end{equation}
Take $\boldsymbol{z}:=\frac{\boldsymbol{x}-\boldsymbol{y}}{\|\boldsymbol{x}-\boldsymbol{y}\|_2}$. For any $i\in\{1,\ldots,m\}$, if $|\langle \boldsymbol{a}_i,\boldsymbol{x}\rangle |> \alpha$ and $|\langle \boldsymbol{a}_i,\boldsymbol{z}\rangle|\leq \beta$, we have 
\begin{equation}
\label{eqn: sign}
\text{sign}(\langle \boldsymbol{a}_i,\boldsymbol{x}\rangle)=\text{sign}(\langle \boldsymbol{a}_i,\|\vx-\vy\|_2\cdot \vz+\vy\rangle)=\text{sign}(\langle \boldsymbol{a}_i,\boldsymbol{y}\rangle),
\end{equation}
as 
\[
|\langle\va_i,\vx\rangle|> \alpha>C\cdot \beta\geq \|\vx-\vy\|_2\cdot|\langle \va_i,\vz\rangle| =|\langle \va_i,\|\vx-\vy\|_2\cdot\vz\rangle|.
\] 
Thus, the lower bound of (\ref{eqn: sigma_xy}) becomes:
\begin{equation}\label{eqn: ind_temp1}
\begin{split}
&\frac{1}{m}\cdot\sum_{i=1}^m\frac{(\sigma(\langle \boldsymbol{a}_i,\boldsymbol{x}\rangle)-\sigma(\langle \boldsymbol{a}_i,\boldsymbol{y}\rangle))^2}{\|\boldsymbol{x}-\boldsymbol{y}\|_2^2}\\
\geq&\frac{1}{m}\cdot \sum_{i=1}^m \frac{(\sigma(\langle \boldsymbol{a}_i,\boldsymbol{x}\rangle)-\sigma(\langle \boldsymbol{a}_i,\boldsymbol{y}\rangle))^2\cdot \boldsymbol{1}_{\{\langle \boldsymbol{a}_{i}, \boldsymbol{x} \rangle > \alpha\}} \cdot \boldsymbol{1}_{\{|\langle \boldsymbol{a}_{i}, \boldsymbol{{z}} \rangle| \leq  \beta \}}}{\|\boldsymbol{x}-\boldsymbol{y}\|_2^2}\\
\overset{(a)}=&\frac{1}{m}\cdot \sum_{i=1}^m \frac{(\langle \boldsymbol{a}_i,\boldsymbol{x}\rangle-\langle \boldsymbol{a}_i,\boldsymbol{y}\rangle)^2\cdot \boldsymbol{1}_{\{\langle \boldsymbol{a}_{i}, \boldsymbol{x} \rangle > \alpha\}} \cdot \boldsymbol{1}_{\{|\langle \boldsymbol{a}_{i}, \boldsymbol{{z}} \rangle| \leq  \beta \}}}{\|\boldsymbol{x}-\boldsymbol{y}\|_2^2}\\
=&\frac{1}{m}\cdot \sum_{i=1}^m(\langle \boldsymbol{a}_{i}, \boldsymbol{z}\rangle)^2\cdot \boldsymbol{1}_{\{\langle \boldsymbol{a}_{i}, \boldsymbol{x} \rangle > \alpha\}}\cdot \boldsymbol{1}_{\{ |\langle \boldsymbol{a}_{i}, \boldsymbol{{z}} \rangle| \leq\beta\} }\\
\overset{(b)}\geq  &\frac{1}{2}-2.2\alpha -\exp(-\beta/4)- \frac{2}{\beta}-\delta\\
\overset{(c)}\geq & \frac{1}{2}-3\alpha- \frac{3}{\beta}-2\delta.
\end{split}
\end{equation}
Here $(a)$ follows from (\ref{eqn: sign}). $(b)$ is based on Lemma  \ref{lem: lower}. $(c)$ relies on $\beta\geq 10$, which leads to $\exp(-\beta/4)\leq 1/\beta$.

On the other hand, the upper bound of (\ref{eqn: sigma_xy}) becomes:

\begin{equation}\label{eqn: ind_temp2}
\begin{split}
&\frac{1}{m}\cdot \sum_{i=1}^m\frac{(\sigma(\langle \boldsymbol{a}_i,\boldsymbol{x}\rangle)-\sigma(\langle \boldsymbol{a}_i,\boldsymbol{y}\rangle))^2}{\|\boldsymbol{x}-\boldsymbol{y}\|_2^2}\\
\leq &\frac{1}{m}\cdot \sum_{i=1}^m\frac{(\sigma(\langle \boldsymbol{a}_i,\boldsymbol{x}\rangle)-\sigma(\langle \boldsymbol{a}_i,\boldsymbol{y}\rangle))^2\cdot \boldsymbol{1}_{\{\langle \va_i,\vx\rangle<-\alpha\}}\cdot \boldsymbol{1}_{\{|\langle \va_i,\vz\rangle|<\beta\}}}{\|\boldsymbol{x}-\boldsymbol{y}\|_2^2}\\
&+\frac{1}{m}\cdot \sum_{i=1}^m\frac{(\sigma(\langle \boldsymbol{a}_i,\boldsymbol{x}\rangle)-\sigma(\langle \boldsymbol{a}_i,\boldsymbol{y}\rangle))^2\cdot \boldsymbol{1}_{\{\langle \va_i,\vx\rangle\geq -\alpha\}}\cdot \boldsymbol{1}_{\{|\langle \va_i,\vz\rangle|<\beta\}}}{\|\boldsymbol{x}-\boldsymbol{y}\|_2^2}\\
&+\frac{1}{m}\cdot \sum_{i=1}^m\frac{(\sigma(\langle \boldsymbol{a}_i,\boldsymbol{x}\rangle)-\sigma(\langle \boldsymbol{a}_i,\boldsymbol{y}\rangle))^2\cdot  \boldsymbol{1}_{\{|\langle \va_i,\vz\rangle|\geq \beta\}}}{\|\boldsymbol{x}-\boldsymbol{y}\|_2^2}\\
\overset{(a)}\leq &\frac{1}{m}\cdot \sum_{i=1}^{m}(\langle \boldsymbol{a}_{i}, \boldsymbol{{z}} \rangle)^2\cdot  \boldsymbol{1}_{\{\langle \boldsymbol{a}_{i}, \boldsymbol{x} \rangle \geq -\alpha\}}\cdot \boldsymbol{1}_{\{ |\langle \boldsymbol{a}_{i}, \boldsymbol{{z}} \rangle| < \beta\}} + \frac{1}{m}\sum_{i=1}^{m}\langle (\boldsymbol{a}_{i}, \boldsymbol{{z}} \rangle)^2\cdot  \boldsymbol{1}_{\{|\langle \boldsymbol{a}_{i}, \boldsymbol{{z}} \rangle| \geq \beta\}}\\
\overset{(b)}\leq &\frac{1}{2}+2.2\alpha+\frac{2}{\beta^2}+\exp(-0.9^2\beta^2/4) +2\delta\\
\overset{(c)}\leq & \frac{1}{2}+3\alpha+\frac{3}{\beta}+2\delta.
 \end{split}
\end{equation}
Here $(a)$ follows from (\ref{eqn: sign}), which leads to 
\[
(\sigma(\langle \boldsymbol{a}_i,\boldsymbol{x}\rangle)-\sigma(\langle \boldsymbol{a}_i,\boldsymbol{y}\rangle))^2\cdot \boldsymbol{1}_{\{\langle \va_i,\vx\rangle\geq -\alpha\}}\cdot \boldsymbol{1}_{\{|\langle \va_i,\vz\rangle|<\beta\}}=0,
\]
 for $i=1,\ldots,m$, and $|\sigma(t_1)-\sigma(t_2)|\leq |t_1-t_2|$ for all $t_1,t_2\in \mathbb{R}$. $(b)$ is based on  Lemma \ref{lem: middle} and Lemma \ref{lem: upper}. $(c)$ relies on the fact that $\beta\geq 10$, which leads to $\exp(-0.9^2\beta^2/4)\leq 1/\beta$. Thus, the proof is completed. 
\end{proof}

\end{document}